\newtheorem{lem}{\uline{Lemma}}
\newtheorem{prop}{\uline{Proposition}}
\newtheorem{cor}{\uline{Corollary}}
\newcommand{\eq}[1]{\begin{equation}{#1}\end{equation}}
\newcommand{\sun}[1]{\mathfrak{su}([1])}
\newcommand{\bol}[1]{\mathbf{#1}}
\newcommand{\bolm}{\boldsymbol{m}}
\newcommand{\bolP}{\boldsymbol{\mathcal{P}}}
\newcommand{\be}{\mathrm{e}}
\newtheorem*{propwonum}{Proposition}
\begin{document}
\title{Computational Characterization of Symmetry-Protected\\
Topological Phases in Open Quantum Systems}

\author{Riku Masui}
\affiliation{Division of Physics and Astronomy, Graduate School of Science, Kyoto University, Kyoto 606-8502, Japan}
\affiliation{Yukawa Institute for Theoretical Physics, Kyoto University, Kyoto 606-8502, Japan}
\author{Keisuke Totsuka}
\affiliation{Yukawa Institute for Theoretical Physics, Kyoto University, Kyoto 606-8502, Japan}

\date{\today}
\begin{abstract}
It is a challenging problem to correctly characterize the symmetry-protected topological (SPT) phases in open quantum systems.  
As the measurement-based quantum computation (MBQC) utilizes non-trivial edge states of the SPT phases as the logical qubit, 
its computational power is closely tied to the non-trivial topological nature of the phases.  
In this paper, we propose to use the gate fidelity which is a measure of the computational power of the MBQC to identify 
the SPT phases in mixed-state settings.  
Specifically, we investigate the robustness of the Haldane phase by considering the MBQC on the Affleck-Kennedy-Lieb-Tasaki state 
subject to different types of noises.  
To illustrate how our criterion works, 
we analytically and numerically calculated the gate fidelity to find that its behavior depends crucially on whether the noises satisfy 
 a certain symmetry condition with respect to the on-site $\mathbb{Z}_2 \times \mathbb{Z}_2$ symmetry.  
In particular, the fidelity for the identity gate, which is given by the sum of the non-local string order parameters, plays an important role.  
Furthermore, we demonstrate that a stronger symmetry conditions are required to be able to perform 
other (e.g.,  the $Z$-rotation gate) gates with high fidelity.  
By examining which unitary gates can be implemented with the MBQC on the decohered states, 
we can gain a useful insight into the richer structure of noisy SPT states that cannot be captured solely by the string order parameters.  
\end{abstract}

\maketitle

\section{Introduction}
\label{sec:intro}
Studying the character of many different quantum phases is mainly based on classification of gapped Hamiltonians, which is equivalent to the classification of gapped ground states. Some quantum phases exist without spontaneous breaking of global symmetries, and we cannot characterize them with local order parameters such as the magnetization. These relatively new phases (often said as ``beyond Landau paradigm'') which we call ``topological'' 
include quantum spin liquids, the quantum Hall states, the fractonic states, and so on \cite{Wen-RMP-17,Pretko-C-Y-review-20}.   
One useful way to distinguish these phases from other trivial ones is to use quantum entanglement \cite{Chen-G-W-10}.    
In this respect, ``trivial phases'' are those which can be deformed into product states by local unitary transformations (or finite-depth quantum circuits). 
In contrast, a state that cannot be transformed into a product state by such transformations without closing the gap is referred to as long-range entangled. 
In this sense, it is known \cite{Chen-G-W-11,Schuch-PG-C-11} that long-range entangled states with ``genuine'' topological order 
are forbidden in one-dimensional (bosonic) systems.  
However, if we consider only local unitary transformations respecting a certain symmetry, 
some sets of states become distinct from the set of product states.  
This is how the concept of the symmetry-protected topological (SPT) phases arises \cite{Gu-W-09}.  
One of the best known examples of the SPT phases in one dimension would be the Haldane phase \cite{Haldane_conjecture,Haldane_conjecture-2}, 
which is protected by one of (i) spatial inversion, (ii) time-reversal, and (iii) $\mathbb{Z}_2 \times \mathbb{Z}_2$ symmetries. For example, the ground state of the spin-1 antiferromagnetic Heisenberg model and the Affleck-Kennedy-Lieb-Tasaki (AKLT) state \cite{AKLT} belong to this phase. 
The Haldane phase has no local order parameter, and instead, is characterized by, e.g., 
the non-vanishing string order parameter \cite{string_order,Garcia-W-S-V-C-08} which is associated with the $\mathbb{Z}_2 \times \mathbb{Z}_2$ symmetry, 
the even-fold degeneracy of the entanglement spectrum \cite{Pollmann-T-B-O-10}, 
and the emergent fractionalized edge spins \cite{Hagiwara-K-A-H-R-90,Kennedy-90,Glarum-G-L-K-M-91} which are essentially different from those constituting the bulk.    

In one dimension, gapped ground states are well approximated by the matrix product states (MPS) \cite{Perez-Garcia-V-W-C-07} 
which automatically satisfy the area law of the entanglement entropy \cite{Hastings-07}, and the aforementioned non trivial properties 
of the Haldane phase can be altogether understood using the MPS representation \cite{Pollmann-T-B-O-10,Schuch-PG-C-11,Chen-G-W-11,Pollmann-T-12}.  
The most important message from this MPS-based approach would be 
that the symmetry group acts on the fractionalized edge state as a projective representation \cite{Garcia-W-S-V-C-08}.  
Therefore, when the group $G$ of the on-site symmetry is given, the classification of one-dimensional $G$-symmetric SPT phases 
amounts to enumerating all the possible 
projective representations of $G$ \cite{Chen-G-W-11,Schuch-PG-C-11,Chen-G-L-W-13}.  

On the other hand, the measurement-based quantum computation (MBQC) \cite{Raussendorf-B-01,Brennen-M-08} is a computational model known to be mathematically equivalent to the quantum circuit model. In this formulation, the quantum teleportation in which the state sent  
by one person is the same as that received by the other is applied. The sender teleports some qubit to the receiver by the specific two-qubit measurement based on the entangled state which the sender and receiver share. The MBQC utilizes the fact that the unitarily transformed state can be sent by appropriate measurements or by sharing appropriate entangled state between the sender and the receiver \cite{Nielsen-03,Gottesman-C-99}.  
Therefore, in the MBQC, the computation is literally implemented by the measurement on what is called the resource states. 
It is known \cite{Doherty-B-09,Else-S-B-D-12,Miyake-10,Stephen-W-P-W-R-17,Raussendorf-W-P-W-S-17,Wang-S-R-17} 
that states in non-trivial SPT phases can be used as resource states of MBQC (the so-called computational phases of matter), and 
in simplest cases, the corresponding qubits are encoded in the physical edge states of the SPT phases \cite{Miyake-10,Stephen-W-P-W-R-17}.  
Furthermore, the computational power is retained throughout a given SPT phase and the feasibility of MBQC is thought of as associated  
not with a particular state but with the phase itself \cite{Else-S-B-D-12,Raussendorf-W-P-W-S-17}.  

Then, it would be natural from both topological and quantum-computational points of view to ask how robust SPT phases are against 
the coupling to the environment. 
While, in the case of pure states, SPT phases, have been fairly well understood as already stated, 
there have been relatively few discussions about how to characterize SPT phases in generic mixed states. 
Some attempts in this direction include 
characterizing the topological nature by the indices based on the Uhlmann phase \cite{Uhlmann-86,Viyuela-R-M-14,Huang-A-14,Budich-D-15}, 
generalizing the MPS-based markers (e.g., projective representations, entanglement spectrum, etc.) 
to density operators via the doubled-Hilbert-space formalism \cite{Nieuwenburg-H-14,Verissimo-L-O-23,Ma-T-24}, and 
order-parameter-based approaches using the strange correlators \cite{Lee-Y-X-strange-cor-22}, the string order parameters \cite{deGroot-T-S-22}, etc.

The main challenge here consists in how to capture the topological nature in the density operators, since most of the existing approaches 
(the entanglement spectrum, the projective representations, etc.) which have been successfully applied to the pure cases 
are based on the ground-state wave functions and are not directly generalized to mixed states.   
For instance, in Ref.~\cite{deGroot-T-S-22}, de Groot et al. discuss the robustness of SPT phases of one-dimensional spin systems 
by focusing on the non-local string order parameters \cite{PerezGarcia-W-S-V-C-08,Pollmann-T-12}.  
Specifically, they showed that if the quantum channel $\mathcal{E}$ describing the system-environment coupling 
and the protecting symmetry $G$ satisfy a certain condition (the strong-symmetry condition defined below), 
the $G$-protected SPT phases in closed systems are robust against $\mathcal{E}$ in the sense that the corresponding string order parameter survives. 

On the other hand, since the computational ability of the MBQC is an inherent property of a given SPT phase as has been described above, 
it should be possible to capture its topological nature also by the performance of MBQC on it.   
In this paper, we propose to use the computational power of a given SPT phase to characterize the topological property behind. 
Specifically, we use the gate fidelity [see Eq.~\eqref{def_fidelity}] that measures the performance of MBQC to judge whether 
the phase still retains its topological properties even under decoherence.  The merit of this approach is that it is applicable 
regardless of whether the state is pure or mixed.   It also uncovers a remarkable connection between the feasibility of MBQC 
and the non-local string order parameters which have been used in assessing the robustness of SPT phases under decoherence \cite{deGroot-T-S-22} 
thereby giving a physical interpretation to the recent observation made in Ref.~\onlinecite{Raussendorf-Y-A-23} concerning 
the relation between the computational power (of a {\em pure} SPT state) and the string order parameter.

The rest of the paper is organized as follows. In Sec.~\ref{sec:model_and_method}, we quickly review the background of our research, which includes: (i) the MPS representation of the pure AKLT state 
and the MBQC on it, (ii) the symmetry conditions for the quantum channel that play vital roles, 
and (iii) the gate fidelity which quantities how accurately the gate is realized on the state in question (decohered resource states in general).  
In Sec.~\ref{sec:results}, we explicitly calculate the gate fidelity for the AKLT state and show that it consists of non-local string operators 
reminiscent of the string order parameter  
that has been introduced some time ago to capture the hidden antiferromagnetic order in the Haldane state.  
Remarkably, for the identity gate, the expression reduces precisely to the usual string order parameter, which leads us to characterize 
the open-system SPT phases using the performance of the identity gate.
We also give the explicit analytical expression of the gate fidelity for the simplest case, namely for the AKLT ground state.  

To see how our criterion works in specific cases, we numerically investigate the behavior of the gate fidelity 
in the presence of uncorrelated external noises in Sec.~\ref{sec:effect-noise}.  
To be specific, we consider the effects of several types of noises satisfying different symmetry conditions to demonstrate that 
the strong symmetry condition that is known to preserve the string order parameter \cite{deGroot-T-S-22} 
also protects the MBQC computational power for the identity gate.    
We also find more stringent symmetry conditions necessary to guarantee the high-fidelity realization of other unitary gates 
and suggest a possible richer structure in the decohered SPT phases.  
The main results are summarized in Sec.~\ref{sec:conclusion}. 
\section{AKLT state as a resource for MBQC}
\label{sec:model_and_method}

\subsection{MPS of AKLT model and its symmetry}
\label{subsec:AKLT_model}
The Hamiltonian of the Affleck-Kennedy-Lieb-Tasaki (AKLT) model coupled to a spin-$1/2$ on each end of the chain 
is given by \cite{AKLT,Affleck-K-L-T-88}:
\begin{equation}
\begin{split}
H_{\mathrm{AKLT}} = &  J^{\prime}  \bol{s}_{\text{in}} \cdot \bol{S}_1 
+  J \sum_{i=1}^{N-1}  \left[ \bol{S}_{i} \cdot \bol{S}_{i+1} +  \frac{1}{3} \left(\bol{S}_{i} \cdot \bol{S}_{i+1} \right)^2 \right]    \\
& +J^{\prime} \bol{S}_N \cdot \bol{s}_{\mathrm{out}} \; ,
\end{split}
\label{eqn:Ham-AKLT}
\end{equation} 
where $\{ \bol{S}_{i} \}$ are the spin-1s constituting the bulk and $\bol{s}_{\text{in}}$ and $\bol{s}_{\text{out}}$ are the two spin-1/2s attached 
to the ends of the chain (see Fig.~\ref{AKLT_state}). 
The reasons for adding the two terms that couple the edge spin-1/2s to the bulk spin-1s are two-fold.  
From the quantum informational viewpoint, we can initialize (in) and read-out (out) quantum information with them when implementing the MBQC.  
The condensed-matter reason will be stated later. 

This solvable model is known to share its ground-state properties with the ordinary spin-1 Heisenberg model. The unique ground state of this Hamiltonian, 
which is called the AKLT state, is conveniently given by the following (normalized and canonical) matrix product state (MPS) \cite{Perez-Garcia-V-W-C-07,SCHOLLWOCK}:
\begin{equation}
\begin{split}
&\ket{\psi_{\text{AKLT}}}   \\
&=\sum_{\sigma_{\mathrm{in/out}} } \sum_{ \{m_i \} } \left( A^{[\sigma_{\mathrm{in}}]} P^{[m_1]} A \cdots P^{[m_N]} A^{T[\sigma_{\mathrm{out}}]} \right)  \\
& ~~~~~~~~~~~~~~~~~~~~~~~~~~~~~~~~~~~~~~~~~ \ket{\sigma_{\mathrm{in}}}  \ket{m_1\cdots m_N}  \ket{\sigma_{\mathrm{out}} } ,  
\label{MPS_AKLT_state_eq}
\end{split}
\end{equation}
where $\ket{m_1\cdots m_N} := \otimes_{i} \ket{m_{i}}$ denote the basis states of the spin-1 part with $m_{i}\,(=0, \pm 1)$ being the eigenvalues of 
$S_{i}^{z}$ and $\ket{\sigma_{\text{in/out}}}$ ($\sigma_{\text{in/out}} = \pm 1/2$) are the states of the spin-1/2s attached at the ends 
(see Fig.~\ref{AKLT_state_MPS}).  

The matrix 
\eq{
A = 
\begin{pmatrix}
0 & -1/\sqrt{2}  \\
1/\sqrt{2} & 0 
\end{pmatrix} 
= - \frac{i}{\sqrt{2}} Y 
}
creates the singlet bond in Fig.~\ref{AKLT_state}, and 
\eq{
P^{[1]} = 
\begin{pmatrix}
\frac{2}{\sqrt{3}} & 0  \\
0 & 0 
\end{pmatrix}
, ~
P^{[0]} = 
\begin{pmatrix}
0 & \sqrt{\frac{2}{3}}  \\
\sqrt{\frac{2}{3}} & 0
 \end{pmatrix}
, ~
P^{[-1]} = 
\begin{pmatrix}
0 & 0  \\
0 & \frac{2}{\sqrt{3}}
\end{pmatrix}
}
projects the products of two spin-$1/2$ states on each site onto those of a spin-1.  
The exceptional matrix $A^{[\sigma]}$ ($A^{T[\sigma]}$) on the edge is the row (column) vector sliced from the matrix $A$, i.e. $A^{[\sigma]}$ is a row vector with components $A^{[\sigma]}_{\sigma'} = A_{\sigma \sigma'}$ and $A^{T[\sigma]}$ is its transpose. 
It is easy to show that the state \eqref{MPS_AKLT_state_eq} is the exact ground state of the Hamiltonian \eqref{eqn:Ham-AKLT} 
as far as $J,J^{\prime} >0$.  

The MPS form of the AKLT state \eqref{MPS_AKLT_state_eq} is diagrammatically represented as Fig.~\ref{AKLT_state_MPS}.  
It is well-known that the AKLT state in a finite open chain has four-fold degeneracy associated with the emergent free spin-1/2 degree of freedoms 
on both edges \cite{Hagiwara-K-A-H-R-90,Kennedy-90}. 
The two spin-1/2s at the edges have been introduced to suppress the four-fold degeneracy and obtain the unique singlet ground state 
without changing the physical property of the bulk \cite{White-H-93}. 

\begin{figure}[h]
\begin{center}
\includegraphics[width=\columnwidth,clip]{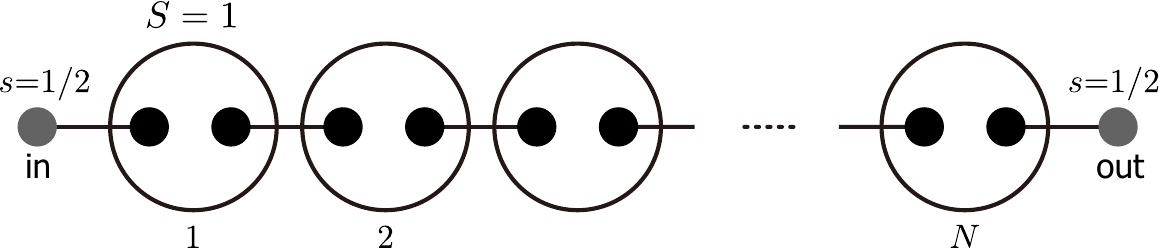}
\caption{A schematic diagram of the AKLT state \eqref{MPS_AKLT_state_eq} 
coupled with two extra one-half spins. Dots and line segments connecting them denote 
spin-1/2 degrees of freedom and singlet bonds, respectively.  Circles encircling two adjacent dots mean the projection onto the spin-1 degree of freedom. 
\label{AKLT_state}}
\end{center}
\end{figure}

\begin{figure}[h]
\begin{center}
\includegraphics[width=\columnwidth,clip]{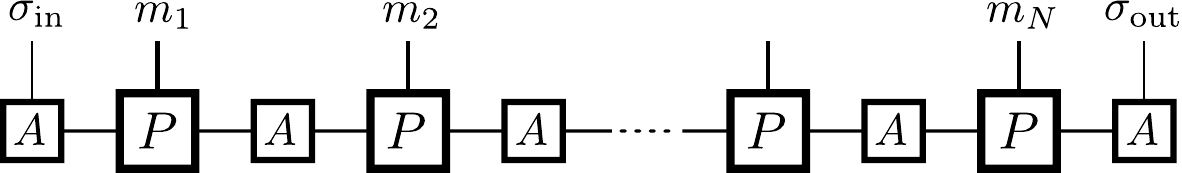}
\caption{Diagrammatic representation of MPS \eqref{MPS_AKLT_state_eq} for the AKLT state.  
Large (small) squares respectively denote the MPS tensors $P$ and $A$ in Eq.\;\eqref{MPS_AKLT_state_eq}.  
Horizontal lines connecting the adjacent squares denote entanglement bonds, 
which are introduced every time when the matrices are multiplied in Eq.\;\eqref{MPS_AKLT_state_eq}. 
The vertical lines denote the physical degrees of freedom (spin-$1/2$ and $1$ here) on each site, which are labeled by 
$\sigma_{\mathrm{in,out}}$ (for edge spin-$1/2$) and $m_i$ (for the bulk spin-$1$).  
Here, the exceptional spin-1/2 degrees of freedom on edges are described as short vertical lines.
\label{AKLT_state_MPS}}
\end{center}
\end{figure}

Generally, if $\ket{\psi}$ is symmetric under an element $g$ of a certain group $G$, then $U_g \ket{\psi} = \ket{\psi}$ up to a phase. 
In particular, when the $U_g$ is given by a tensor product of the linear representation $u_{g}(i)$ of an onsite symmetry, 
i.e., $U_g = \otimes_i u_g(i)$, then we can show that the onsite symmetry action on the physical degrees of freedom (the open circle) 
{\em fractionalizes} into those on the virtual indices (the filled squares) \cite{PerezGarcia-W-S-V-C-08}:
\begin{equation}
\begin{split}
& \sum_{n} [ u_{g}]_{mn} \Gamma^{[n]} = \be^{i \varphi_{g}} V_{g}^{\dagger} \Gamma^{[m]} V_{g}  \\
& \raisebox{-3ex}{\includegraphics[scale=0.35]{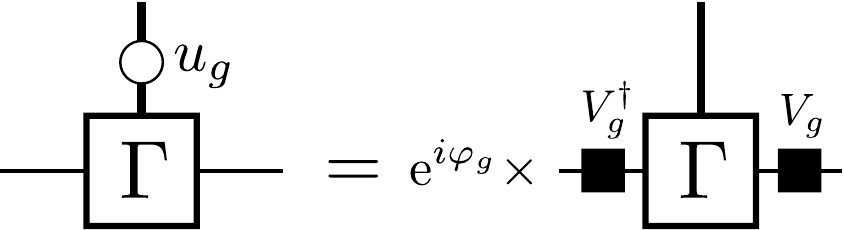}}  \;\;  .
\end{split}
\label{PG_sym} 
\end{equation}
Here, $\Gamma$ is an MPS tensor of the symmetric state $\ket{\psi}$, which satisfies 
the following canonical conditions \cite{PerezGarcia-W-S-V-C-08,Orus-V-08}: 
\begin{subequations}
\begin{align}
& \raisebox{-5ex}{\includegraphics[scale=0.3]{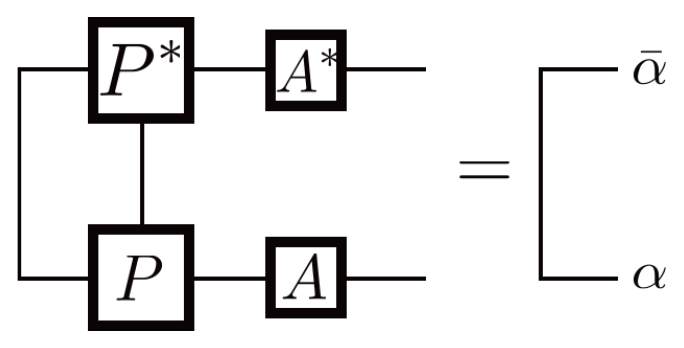}} \label{canonical_cond_left}  \quad  (= \delta_{\bar{\alpha}\alpha} ) \\
&\raisebox{-5ex}{\includegraphics[scale=0.3]{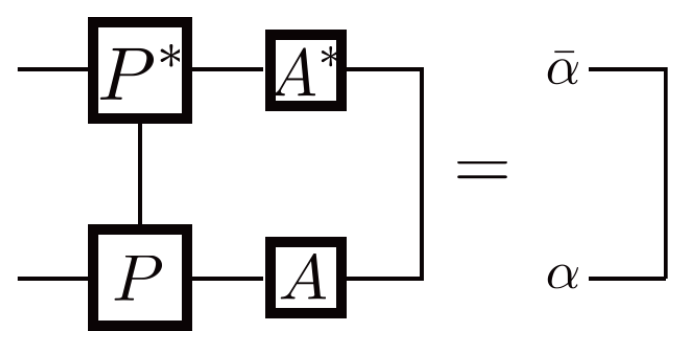}} \label{canonical_cond_right}  \quad  
 (= \delta_{\bar{\alpha}\alpha} ) \; .
\end{align}
\end{subequations}
If we apply Eq.~\eqref{PG_sym} to $u_{gh} = u_{g}u_{h}$, we immediately see that the unitary representation $V_{g}$ satisfies 
$V_{gh} = \be^{i \omega(gh)} V_{g} V_{h}$, 
i.e., $V_{g}$ is a projective representation of $G$.     
 
The MPS of the AKLT state \eqref{MPS_AKLT_state_eq} indeed satisfies the conditions \eqref{canonical_cond_left} and 
\eqref{canonical_cond_right}.  
If we are given a canonical MPS tensor and a linear representation of $G$, there is a convenient way to obtain $V_g$ 
for every $g \in G$ \cite{Pollmann-T-12}.  
Interestingly, the one-dimensional SPT phases protected by a group $G$ are classified 
by enumerating non-trivial projective representations of $G$ \cite{Chen-G-W-11,Schuch-PG-C-11,Chen-G-L-W-13}.  
The Haldane phase, to which the AKLT state belongs, is one of the simplest and most prominent nontrivial SPT phases, 
and is known to be protected by one of the following symmetries \cite{Pollmann-T-B-O-10,Pollmann-B-T-O-12}: 
the $\mathbb{Z}_2 \times \mathbb{Z}_2$ symmetry, the spatial inversion symmetry, 
and the time-reversal symmetry.  
Non-zero values of the string order parameters (SOP) \cite{string_order}
\begin{equation}
\mathcal{O}_{\mathrm{str}}^{\alpha} 
= \left\langle  S_{i}^{\alpha} \be^{ i \pi \sum_{k=i}^{j-1} S_{k}^{\alpha} } S_{j}^{\alpha} \right\rangle  \quad (\alpha=x,z) 
\label{eqn:Ostr}
\end{equation}
of the (spin-1) AKLT state imply that it is characterized by a non-trivial projective representation $V_{g}$ 
of onsite $\mathbb{Z}_2 \times \mathbb{Z}_2$ and is in the Haldane phase \cite{Pollmann-T-12,Hasebe-T-13}.   
In this sense, the SOP can be used to detect the SPT phase protected by the $\mathbb{Z}_2 \times \mathbb{Z}_2$ symmetry.

\subsection{Symmetry condition for quantum channels}
Now let us consider systems coupled to environment.   In those systems, states are generally represented by a density operator 
$\rho$ and its time evolution is described by a completely positive trace-preserving (CPTP) map or a quantum channel $\mathcal{E}$ 
(see, e.g., Refs.~\cite{Breuer-P-book-02,Nielsen-Chuang-11} for reviews):
\begin{equation}
\rho \to \rho^{\prime} = \mathcal{E}[\rho] \; . 
\end{equation}
In this subsection, we briefly review the symmetry condition satisfied by quantum channels $\mathcal{E}$ 
that preserve SPT phases \cite{deGroot-T-S-22}.
Given a symmetry group $G$ whose element $g\,(\in G)$ acts on a state $\rho$ as 
\begin{equation}
\begin{split}
& \mathcal{U}_{g}(\rho) := U_{g} \rho U_{g}^{\dagger}  \\
& \left[ \mathcal{U}^{\dagger}_{g}(\rho) := U^{\dagger}_{g} \rho U_{g} \; , \;\;  
\mathcal{U}^{\dagger}_{g} \circ \mathcal{U}_{g} = \mathcal{U}_{g} \circ \mathcal{U}^{\dagger}_{g} = 1  \right]  
\end{split}
\end{equation}
(with the symbol $\circ$ denoting the multiplication of maps),  
a quantum channel $\mathcal{E}$ is said to possess {\em weak symmetry} if it commutes with the unitary operation $\mathcal{U}_{g}$ \cite{Buca-P-12}:
\begin{subequations}
\begin{equation} 
\mathcal{U}_{g} \circ \mathcal{E} = \mathcal{E} \circ \mathcal{U}_{g} \; ,
\end{equation}
or more explicitly,
\begin{equation} 
U_{g} \mathcal{E}[\rho]  U_{g}^{\dagger} 
= \mathcal{E}\left[ U_{g} \rho U_{g}^{\dagger} \right]  \; .
\label{weak_sym_cond_original}
\end{equation}
\end{subequations}
If we use the Kraus operator-sum representation 
\begin{equation}
\mathcal{E}(\rho) = \sum_{\alpha} K_{\alpha} \rho K_{\alpha}^{\dagger}  \quad \left( \sum_{\alpha} K_{\alpha}^{\dagger}K_{\alpha} = 1  \right) 
\label{eqn:OSR}
\end{equation}
of the channel $\mathcal{E}$, the above condition \eqref{weak_sym_cond_original} can be translated into that for the Kraus operators 
$\{ K_{\alpha} \}$:
\eq{
\sum_\alpha \left( U_g K_\alpha U^\dagger_g \right) \rho \left( U_g K_\alpha U^\dagger_g \right)^\dagger = \sum_\alpha  K_\alpha \rho  K_\alpha^\dagger \;.
\label{weak_sym_cond_Kraus} }

Generally, any completely positive trace-preserving (CPTP) map $\mathcal{E}$ has ambiguity in its Kraus representation; 
in order for different Kraus representations $\{K_\alpha \}$ and $\{K'_\alpha \}$ to define the same quantum channel, 
they must be related to each other by a unitary transformation $V$ \cite{Nielsen-Chuang-11}: 
\begin{equation}
K^{\prime}_\alpha = U_g K_\alpha U^\dagger_g = \sum_\beta K_{\beta} V_{\beta \alpha}  \; .
\label{eqn:unitary-ambiguity}
\end{equation}  
Since unitary matrices are diagonalizable, Eq.~\eqref{eqn:unitary-ambiguity} immediately implies that for any $g \in G$, 
there exists a representative set $\{K^g_\alpha \}_\alpha$, among all the physically equivalent sets of the Kraus operators $\{ K_{\alpha} \}$, 
such that $v$ is diagonal:
\begin{subequations}
\begin{equation}
 U_g K^g_\alpha U^\dagger_g = \be^{i\phi_\alpha (g)} K^g_{\alpha} ~~~~\text{for } ^\forall \alpha 
 \label{weak_sym_cond_Kraus-2}
\end{equation}
or graphically [$U_{g}$ is assumed to be of the form $U_{g} = u_{g}(1)u_{g}(2) \cdots$] 
\begin{equation}
 \raisebox{-9ex}{\includegraphics[scale=0.55]{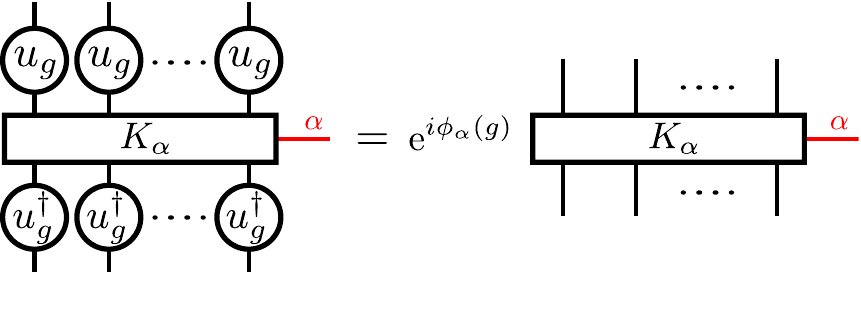}} 
 \label{weak_sym_cond_Kraus_diagram} . 
\end{equation} 
\end{subequations}
 Note that the phase factors $e^{i \phi_\alpha (g)}$ can be different for the Kraus operators $\{ K_{\alpha} \}$ and that 
the condition \eqref{weak_sym_cond_Kraus} or \eqref{weak_sym_cond_Kraus_diagram} holds for a particular $g$-dependent set 
of the Kraus operators (this is why we have put the superscript ``$g$'' to $K_{\alpha}$).  

The strong symmetry condition for the channel $\mathcal{E}$ is obtained when we further require that the phase factors $\be^{i \phi_\alpha (g)}$ 
in Eqs.\;\eqref{weak_sym_cond_Kraus} or \eqref{weak_sym_cond_Kraus_diagram} are common to all the Kraus operatrors $\{ K_{\alpha} \}$, i.e., 
when the unitary $v$ in Eq.\;\eqref{eqn:unitary-ambiguity} is a scalar matrix $\be^{i \phi (g)} \mathbf{1}$ \cite{deGroot-T-S-22}:
\begin{subequations}
\begin{equation}
\mathcal{U}_{g} (K_\alpha) 
= U_g K_\alpha U^\dagger_g = \be^{i \phi (g)} K_{\alpha} ~~~~\text{for } ^\forall \alpha 
\label{strong_sym_cond_Kraus}  \; , 
\end{equation}
which is represented diagrammatically as:
\begin{equation}
\raisebox{-9ex}{\includegraphics[scale=0.55]{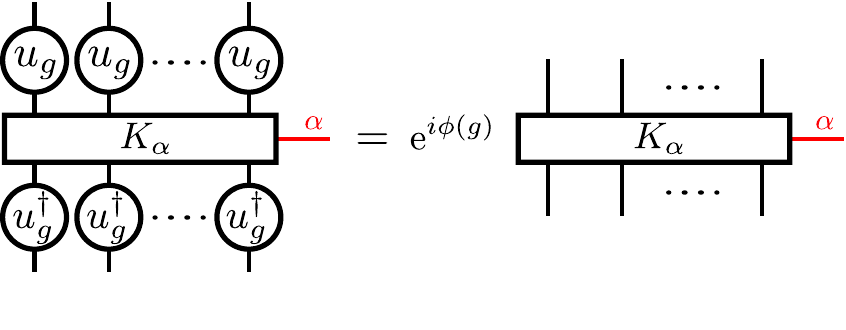}} \; .
\label{strong_sym_cond_Kraus_diagram}
\end{equation} 
\end{subequations}
Note that if a certain set of the Kraus operators satisfies \eqref{strong_sym_cond_Kraus}, so do {\em all} the others that are related by unitaries. 
Therefore, the strong-symmetry condition \eqref{strong_sym_cond_Kraus} does not depend on the particular choice of the Kraus operators. 
With the help of the completeness condition, this strong symmetry condition \eqref{strong_sym_cond_Kraus_diagram} can be 
rephrased as \cite{deGroot-T-S-22}: 
\begin{equation}
\begin{split}
\mathcal{E}^\dagger (U_g)  
& := \sum_{\alpha} K^{\dagger}_{\alpha} U_{g} K_{\alpha} = e^{i \phi (g)} U_{g}   \; , 
\intertext{i.e.,}
& \raisebox{-9.5ex}{\includegraphics[scale=0.55]{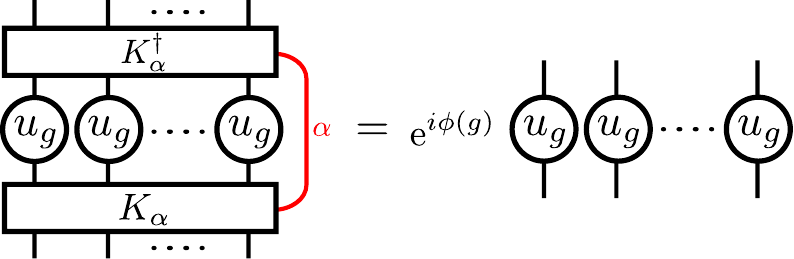}} 
\end{split}
\label{strong_sym_cond_Kraus_diagram_2}
\end{equation}
which means, in the Heisenberg picture, that after the time-evolution under $\mathcal{E}$, $U_g$ is kept invariant up to a ($g$-dependent) phase.  

 Here, an important remark is in order about the relation between the symmetry conditions 
 [Eqs.~\eqref{weak_sym_cond_Kraus} and \eqref{strong_sym_cond_Kraus}] and the choice of a (linear) representation of the on-site symmetry group $G$.  
The symmetry conditions have the definite meanings only after a particular representation of $G$ is given; 
 even if a Kraus representation satisfies the strong symmetry condition for a certain representation of $G$, it may not for other representations.  
 In Ref.~\onlinecite{deGroot-T-S-22}, it is shown that uncorrelated noises $\mathcal{E} = \circ_i \mathcal{E}_i$ being strongly symmetric with respect to $G$ 
 is necessary and sufficient for $G$-protected SPT phases to be preserved.  
 In the following section, we will translate this into the condition for the feasibility of the MBQC.


\subsection{MBQC on AKLT state and gate fidelity}
\label{subsec:fidelity}
In this subsection, we first introduce the MBQC on the AKLT state and then define the gate fidelity that quantifies 
the computational power under decoherence.  
Throughout this paper, we only consider the simplest MBQC protocol on the AKLT state introduced in Ref.~\onlinecite{Brennen-M-08},  
in which we measure the resource state in the following basis: 
\begin{subequations}
\begin{align}
& \ket{\alpha_\theta}_i := \frac{1}{\sqrt{2}} \left\{ - \be^{-i\frac{\theta}{2}} \ket{1}_i + \be^{ i\frac{\theta}{2}} \ket{-1}_i \right\} \;,  
\label{eqn:MBQC-basis-1} \\
& \ket{\beta_\theta}_i := \frac{1}{\sqrt{2}} \left\{ \be^{-i\frac{\theta}{2}} \ket{1}_i + \be^{i\frac{\theta}{2}} \ket{-1}_i \right\}   \; ,  
\label{eqn:MBQC-basis-2} \\
&\ket{\gamma}_i := \ket{0}_{i}   \; . \label{eqn:MBQC-basis-3} 
\end{align}
\end{subequations} 
The states $\ket{\alpha_\theta}_i$, $\ket{\beta_\theta}_i$, and $\ket{\gamma}_i$ 
are the zero-eigenvectors of $\tilde{S}_i^x(\theta) :=S_i^x \text{cos }(\theta /2) + S_i^y \text{sin }(\theta /2)$, 
$\tilde{S}_i^y(\theta)  :=  - S_i^x \text{sin}(\theta /2) + S_i^y \text{cos}(\theta /2)$, and $\tilde{S}_i^z(\theta)  := S_i^z$, respectively. 
For later purposes, it is convenient to introduce the following projection operators:
\begin{equation}
\mathcal{P}(m) := 
\begin{cases}
\ket{\alpha_\theta}\! \bra{\alpha_\theta}  & \text{when $m=\alpha_{\theta}$} \\
\ket{\beta_\theta}\! \bra{\beta_\theta}  & \text{when $m=\beta_{\theta}$} \\
\ket{\gamma}\! \bra{\gamma}  & \text{when $m=\gamma$} 
\end{cases} 
\label{projective_measuremnt_theta}
\end{equation}
and $\mathcal{P}^{\prime}(m)$ which is obtained from the above by setting $\theta=0$ on the right-hand side.  

 It is known \cite{Brennen-M-08} that with a single measurement in this basis, the $Z$-rotation gate $U_Z(\theta)= \be^{-i\theta Z/2}$ 
 or the identity gate is realized probabilistically unless $\theta = 0,~ \pi$.   Let us quickly review this.  
 A single projective measurement on site-$i$ results in one of the three outcomes $\ket{\alpha_\theta}_i$, $\ket{\beta_\theta}_i$, and $\ket{\gamma}_i$ 
with equal probability $1/3$.  
 If we get the result, e.g., $\ket{\alpha_\theta}_i$ for the AKLT state \eqref{MPS_AKLT_state_eq}, 
 the state after the measurement reads \cite{Gross-E-07,Brennen-M-08}:
\begin{equation}
\begin{split}
&\ketbra{\alpha_\theta}{\alpha_\theta}_i |\psi_{\text{AKLT}} \rangle    \\
&=\sum_{\sigma_{\text{in/out}} } \sum_{ \{m_k |k \neq i\} } A^{[\sigma_{\text{in}}]} P^{[m_1]} \cdots A \left(\sum_{m_i} \braket{\alpha_\theta}{m_i } P^{[m_i]} A \right)   \\
& \cdots  P^{[m_N]} A^{T[\sigma_{\text{out}}]} \ket{\sigma_{\text{in}}}  \ket{m_1\cdots m_{i-1} \alpha_\theta m_{i+1} ~ \cdots m_N}  \ket{\sigma_{\text{out}} } 
\end{split}
\label{MPS_AKLT_state_single_measurement}
\end{equation}
with the quantity inside the bracket given by:
\begin{equation}
\raisebox{-1.6ex}{\includegraphics[scale=0.3]{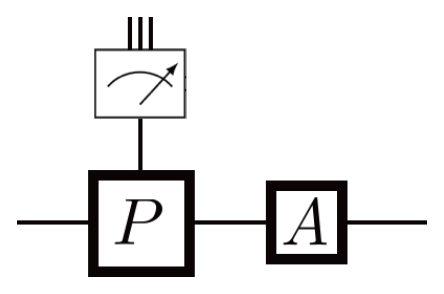}}  = \left \{
\begin{array}{l}
\sum_{m_i} \braket{\alpha_\theta}{m_i } P^{[m_i]}A~ \propto Xe^{-i\theta Z /2} \\
\\
\sum_{m_i} \braket{\beta_\theta}{m_i } P^{[m_i]}A~ \propto XZe^{-i\theta Z /2} \\
\\
\sum_{m_i} \braket{\gamma}{m_i } P^{[m_i]}A~ \propto Z  \; .
\end{array}
\right. 
\label{single_measurement_result}
\end{equation}
Therefore, for a given AKLT state on a length-$N$ spin-1 chain, the implementation of the $Z$-rotation gate proceeds as follows:  
First, we perform measurements in the basis \eqref{eqn:MBQC-basis-1}-\eqref{eqn:MBQC-basis-3} with the measurement angle $\theta$ 
until either the result $\ket{\alpha_\theta}$ or $\ket{\beta_\theta}$ (the ``desired'' results) is obtained.  When we obtain either of these two at a certain site, 
we reset $\theta$ to 0, and measure all the remaining sites with $\theta=0$.  
This ensures that the probability of successfully implementing the $Z$-rotation gate is generally given by $1-(1/3)^N$.   
Moreover, the obtained $Z$-rotation gate $\be^{-i\theta Z /2}$ includes (i) $I~(\theta =0)$, (ii) $Z~(\theta =\pi)$, (iii) $S~(\theta =\pi/2)$, 
(iv) $T~(\theta =\pi /4)$ gates, and as already noted, if $\theta = 0$, the MBQC on the pure AKLT state is deterministic.  
Now it is clear that if we use, instead of \eqref{eqn:MBQC-basis-1}-\eqref{eqn:MBQC-basis-3}, the basis rotated along the $X$-axis, 
we can implement the $X$-rotation gate \cite{Brennen-M-08}.   

Next, we introduce the gate fidelity, which quantifies how accurately the MBQC is done. It is defined as the fidelity between 
 two resource states $\rho_U$ and $\ketbra{\psi_U}{\psi_U}_{\text{res}}$ of a two-qubit system made of the input (``in'') and output (``out'') 
 (see Fig.~\ref{AKLT_state}): 
\eq{F_U =\text{Tr}_{\text{in}, \text{out}}\left( \rho_U \ketbra{\psi_U}{\psi_U}_{\text{res}} \right) 
\label{def_fidelity} \; }
Here, $\ket{\psi_U}_{\text{res}}$ denotes the {\em ideal} resource state (defined below) for the gate $U$ teleportation \cite{Gottesman-C-99} 
and the trace $\text{Tr}_{\text{in}, \text{out}}$ is over the two-qubit states. 

On the other hand, $\rho_U$ is the resource state which may have been 
decohered by some noise.   To be specific, it is defined as the mixed state we get after consecutive measurements are done 
{\em on the decohered state} $\tilde{\rho}$ of the entire (i.e., consisting of both spin-$1$ and $1/2$) system: 
\begin{equation}
\rho_U = \text{Tr} \left[ \sum_{\boldsymbol{m}} B_{\text{out}}^{(\boldsymbol{m} )} \boldsymbol{\mathcal{P}}(\boldsymbol{m}) \tilde{\rho}
 ~\boldsymbol{\mathcal{P}}(\boldsymbol{m}) B_{\text{out}}^{ ( \boldsymbol{m} ) \dagger} \right]   \; .
\label{definition_rho_U}
\end{equation}
Here, $\boldsymbol{m} =(m_1,\dots,m_N )$ is a set of measurement outcomes and $B_{\text{out}}^{(\boldsymbol{m} )}$ 
is the corresponding Pauli by-product operator.  In contrast to $F_{U}$, now the partial trace is taken only over the spin-1 subsystem in the bulk. 
Also, we have introduced the following short-hand notation for the consecutive projection operators: 
\begin{equation}
\begin{split}
& \underbrace{ 
\mathcal{P}'_{N}(m_N)\cdots \mathcal{P}'_{l(\boldsymbol{m})+1}(m_{l(\boldsymbol{m}) +1}) 
}_{\theta=0}
\mathcal{P}_{l(\boldsymbol{m})}(m_{l(\boldsymbol{m}) }) \cdots \mathcal{P}_{1}(m_1)  \\
& = \boldsymbol{\mathcal{P}}(\boldsymbol{m}) \; , 
\end{split}
\end{equation}
where $l(\boldsymbol{m})$ [$1\leq  l(\boldsymbol{m}) \leq N$] represents the position of the site at which one of the ``desired results'' (i.e., $\ket{\alpha_\theta}$ and $\ket{\beta_\theta}$) is obtained for the first time 
and is uniquely determined by a set $\boldsymbol{m}$ of the measurement outcomes. 
The ``desired results'' depend on the targeted gate and are, for example, $\ket{\alpha_\theta}$ and $\ket{\beta_\theta}$ 
for the $Z$-rotation gate $U = U_{Z} (\theta) \, (= \be^{-i \theta Z /2})$.  
The measurement basis are the same as those for the MBQC on the pure AKLT state.   
 
As has been discussed in Sec.~\ref{sec:intro}, we characterize the topological nature in the mixed-state SPT phases by 
the MBQC computational power.  
To this end, we first pick up a particular gate $U$ to calculate $F_{U}$ for the SPT ground state (e.g., the pure AKLT state),  
and then compare this with the value obtained for the corresponding decohered state $\tilde{\rho}$.  
When the two coincide, we regard the decohered SPT state as still retaining 
the computational ability for the gate \footnote{%
Of course, this definition might be too strict for the practical implementation.  We have introduced 
this stringent definition solely for clearly defining the stability of SPT phases in the mixed-state settings.}.  
In particular, for the reason which will become clear below 
[see Eq.~\eqref{identity_fidelity}], we focus on the fidelity for the identity gate $U=I$ as the most general indicator. 
That is, we say a given $G$-protected SPT phase is robust against the decoherence when $F_{I}$ for the pure SPT state and 
the decohered one are equal. 

\begin{widetext}
Now we proceed to calculating the fidelity of the $Z$-rotation gate $U_{Z} (\theta) = \be^{-i \theta Z /2}$.   
The ideal resource state for $U_{Z} (\theta)$ is 
$\ket{\psi_{U_Z(\theta)}}_{\text{res}} 
=\frac{1}{\sqrt{2}} \left( \ket{0}_{\text{in}} \ket{0}_{\text{out}} +  e^{i\theta} \ket{1}_{\text{in}} \ket{1}_{\text{out}}  \right)$. Therefore, using its stabilizer form \cite{Nielsen-Chuang-11}, we can write the fidelity $F_{U}$ of the $U_Z(\theta)$ gate as:
\begin{equation}
\begin{split}
F_{U_Z(\theta)} = & \text{Tr}_{\text{in, out}}\left[\rho_{U_Z(\theta)} 
\left(\frac{I_{\text{in}}I_{\text{out}}  + Z_{\text{in}}Z_{\text{out}} }{2} \right) 
\left( \frac{I_{\text{in}}I_{\text{out}}  + X_{\text{in}}\, \be^{-i\theta Z_{\text{out}}/2} X_{\text{out}}\,  \be^{i\theta Z_{\text{out}}/2} }{2} \right) \right]  \\
= & \frac{1}{4}  \text{Tr}_{\text{in, out}}\left[\rho_{U_Z(\theta)} \right] 
+  \frac{1}{4} \text{Tr}_{\text{in, out}}\left[\rho_{U_Z(\theta)}   Z_{\text{in}}Z_{\text{out}}  \right]
+ \frac{1}{4}  \text{Tr}_{\text{in, out}}\left[\rho_{U_Z(\theta)}   X_{\text{in}}\, \be^{-i\theta Z_{\text{out}}/2} X_{\text{out}}\, \be^{i\theta Z_{\text{out}}/2} \right]  \\
& + \frac{1}{4}  \text{Tr}_{\text{in, out}} \left[\rho_{U_Z(\theta)} 
X_{\text{in}}Z_{\text{in}}\,  \be^{-i\theta Z_{\text{out}}/2} X_{\text{out}} Z_{\text{out}} \,  \be^{i\theta Z_{\text{out}}/2} 
  \right]   \; . 
\end{split}
\label{gate_fidelity_z_rot}  
\end{equation}
\end{widetext}
For the case where $\tilde{\rho}$ in Eq.~\eqref{definition_rho_U} is maximally mixed state corresponding to the infinitely high temperature, 
$F_{U_Z(\theta)}=1/4$ for any $\theta$, so $F_U = 1/4$ means that we fail to implement the unitary gate $U$ at all.

\section{Gate fidelity of AKLT state}
\label{sec:results}

In this section, we explicitly calculate the gate fidelity \eqref{gate_fidelity_z_rot} to show that 
this quantum-information-originated quantity is closely related to the physical string-order parameters \eqref{eqn:Ostr}.  
Specifically, depending on the unitary gate $U$ under consideration, different string order parameters contribute to the gate fidelity $F_U$. 


\subsection{Calculation of the gate fidelity of $Z$-rotation}
\label{calculation_of_the_gate_fidelity}
The gate fidelity \eqref{gate_fidelity_z_rot} for the $Z$-rotation gate $U_{Z} (\theta) = \be^{-i \theta Z /2}$ 
consists of four terms, each of which can be calculated exactly.  
The first of them which does not contain the Pauli operators just yields a trivial constant term 
$\frac{1}{4}\text{Tr}_{\text{in, out}}\left[\rho_{U_Z(\theta)} \right] =1/4$ in the gate fidelity.   
Therefore, all the non-trivial contributions come from the other three terms which we focus on in the following. 

First, we show that the second term $\text{Tr}_{\text{in, out}}\left[\rho_{U_Z(\theta)} Z_{\text{in}} Z_{\text{out}} \right]$ is explicitly calculated as:
\begin{equation}
\text{Tr}_{\text{in, out}}\left[\rho_{U_Z(\theta)} Z_{\text{in}} Z_{\text{out}} \right] 
= - \widetilde{\text{Tr} }  \left[  \tilde{\rho} \, Z_{\text{in}} \left( \prod_{j=1}^{N} \be^{i\pi S_j^z} \right) Z_{\text{out}} \right]  \; ,
\label{fidelity_ZZ_term}
\end{equation}

where $\tilde{\rho}$ is the density operator of the composite system made of the spin-1 bulk and the two spin-1/2s at the edges,  
and the trace $\widetilde{\text{Tr} }:= \text{Tr}_{\text{in, out}} \text{Tr}$ is taken over this composite system.  
The right-hand side of \eqref{fidelity_ZZ_term} is closely related to the ordinary string order parameter \cite{string_order,Garcia-W-S-V-C-08}
associated with the on-site $\mathbb{Z}_2\times \mathbb{Z}_2$-symmetry. 
The calculation of the $Z_{\text{in}} Z_{\text{out}}$-term in the gate fidelity \eqref{gate_fidelity_z_rot} goes 
as in the case of the MBQC with the cluster state \cite{Fujii-N-O-M-13}.  
First, by the definition \eqref{definition_rho_U} of $\rho_{U}$ and the cyclic property of the trace, the original expression can be recast as:
\begin{equation}
\begin{split}
&\displaystyle \text{Tr}_{\text{in, out}}\left[ \rho_{U_Z(\theta)} Z_{\text{in}} Z_{\text{out}} \right]   \\ 
&= \text{Tr}_{\text{in, out}} \left\{ 
\text{Tr} \sum_{\boldsymbol{m}} \left[ B^{(\boldsymbol{m})}_{\text{out}} \boldsymbol{\mathcal{P}}(\boldsymbol{m}) \tilde{\rho} \,  
\boldsymbol{\mathcal{P}} (\boldsymbol{m}) B^{(\boldsymbol{m}) \dagger }_{\text{out}}  Z_{\text{in}} Z_{\text{out}} \right]  \right\}  \\
&=  \widetilde{\text{Tr} }\sum_{\boldsymbol{m}} \left[ 
\tilde{\rho} \,  Z_{\text{in}}  \boldsymbol{\mathcal{P}} (\boldsymbol{m}) B^{(\boldsymbol{m}) \dagger }_{\text{out}}  
Z_{\text{out}}  B^{(\boldsymbol{m})}_{\text{out}} \boldsymbol{\mathcal{P}} (\boldsymbol{m})  \right]  \; .
\end{split}
\label{calculation_of_ZZ_term_1}
\end{equation}
As is shown in Appendix~\ref{sec:deriv-FU}, we can eliminate the by-product operators just leaving a string of the $\mathbb{Z}_{2}$ generators:
\begin{equation}
B^{(\boldsymbol{m}) \dagger }_{\text{out}}  Z_{\text{out}}  B^{(\boldsymbol{m})}_{\text{out}} \boldsymbol{\mathcal{P}}(\boldsymbol{m}) 
= - \boldsymbol{\mathcal{P}}(\boldsymbol{m}) \left( \prod_{j=1}^{N} \be^{i\pi S_j^z}  \right) Z_{\text{out}} \; .
\label{eqn:BZB-by-string}
\end{equation}
Plugging this into Eq.~\eqref{calculation_of_ZZ_term_1} and 
carrying out the summation over all the possible measurement outcomes $\{ \boldsymbol{m} \}$, we obtain the desired result \eqref{fidelity_ZZ_term} 
(see Appendix~\ref{sec:deriv-FU} for more details).

\begin{widetext}
The other two terms can be calculated similarly.  
The only difference is that now we need two different string operators 
$\left(\prod_{j=1}^{l(\boldsymbol{m})} \be^{i \pi \tilde{S}_j^{x,y}(\theta)  } \right)$ ({\em twisted}) 
and $\left(\prod_{j=l(\boldsymbol{m})+1}^{N} \be^{i \pi S_j^{x,y} } \right)$ ({\em untwisted})
to write down the expressions similar to Eq.~\eqref{eqn:BZB-by-string}.   
As a result, we obtain:
\begin{equation}
\begin{split}
&\displaystyle{\text{Tr}_{\text{in, out}}}  
\left[\rho_{U_Z(\theta)} X_{\text{in}}\,  \be^{- i \theta Z_{\text{out}} /2} X_{\text{out}}\, \be^{i \theta Z_{\text{out}} /2} \right]
 +\text{Tr}_{\text{in, out}} \left[\rho_{U_Z(\theta)} 
X_{\text{in}}Z_{\text{in}} \,  \be^{-i\theta Z_{\text{out}}/2} X_{\text{out}} Z_{\text{out}}\,  \be^{i\theta Z_{\text{out}}/2}   \right]  \\
&= - \widetilde{\text{Tr}} \sum_{\boldsymbol{m}} \left[ \tilde{\rho} \Bigl(X_{\text{in}}\text{cos} \theta + Y_{\text{in}} \text{sin} \theta \Bigr) 
\boldsymbol{\mathcal{P}}(\boldsymbol{m}) 
\left(\prod_{j=1}^{l(\boldsymbol{m})} \be^{i \pi \tilde{S}_j^x(\theta)  } \right) \left(\prod_{j=l(\boldsymbol{m})+1}^{N} \be^{i \pi S_j^x } \right)  
X_{\text{out}}  \right]  \\
& \phantom{=} 
- \widetilde{\text{Tr}} \sum_{\boldsymbol{m}} \left[ \tilde{\rho} \Bigl(-X_{\text{in}}\text{sin} \theta + Y_{\text{in}} \text{cos} \theta \Bigr) 
\boldsymbol{\mathcal{P}}(\boldsymbol{m}) \left(\prod_{j=1}^{l(\boldsymbol{m})} \be^{i \pi \tilde{S}_j^y (\theta) } \right) 
\left(\prod_{j=l(\boldsymbol{m})+1}^{N} \be^{i \pi S_j^y } \right) Y_{\text{out}}  \right] 
\; .
\label{eqn:3rd-4th-tem-in-FU}
\end{split}
\end{equation}
\end{widetext}
It is important to note that, in contrast to the term $\text{Tr}_{\text{in, out}}\left[\rho_{U_Z(\theta)} Z_{\text{in}} Z_{\text{out}} \right]$, 
this is not of the form of the {\em ordinary} string order parameter in the sense that the operator strings consist of the two different elements 
of $\mathbb{Z}_{2}$ [i.e., the twisted ($\be^{i \pi \tilde{S}_j^a (\theta)  }$) and untwisted ($\be^{i \pi S_j^a }$) ones).  

When $\theta=0$ at which $\tilde{S}^x_i (0)= S^x_i$ and $\tilde{S}^y_i (0)= S^y_i$, the twist is absent and the fidelity for the identity gate 
$U_{Z}(\theta=0)= I$ reduces to the sum of the three string order parameters corresponding to the representation 
$\{  1,~\be^{i\pi S^x},~\be^{i\pi S^y},~\be^{i\pi S^z} \}$ of $\mathbb{Z}_{2} \times \mathbb{Z}_{2}$: 
\begin{subequations}
\begin{equation}
F_I = \frac{1}{4} -  \frac{1}{4} \sum_{a=1,2,3} 
\widetilde{\text{Tr}} \left[ \tilde{\rho} \, \mathcal{X}^{(a)}_{\text{in}} \left( \prod_{j=1}^{N} \be^{i\pi S_j^a} \right) \mathcal{X}^{(a)}_{\text{out}} \right]  
\label{identity_fidelity}
\end{equation}
with $\mathcal{X}^{(a)}_{\text{in/out}}$ and $S_j^a$ defined by:
\begin{equation}
(\mathcal{X}^{(a)}_{\text{in/out}} , S_j^a ) = 
\begin{cases}
( X_{\text{in/out}} , S_j^x) & \text{for $a=1$} \\
( Y_{\text{in/out}} , S_j^y) & \text{for $a=2$} \\
( Z_{\text{in/out}} , S_j^z) & \text{for $a=3$} \; .
\end{cases}
\label{mathcal_x}
\end{equation}
\end{subequations} 

Now it is clear why the identity gate $I$ takes the special position among all the gates $U$; when $\tilde{\rho}$ is the pure state, 
the $F_{I}$ reduces to the usual string order parameter whose topological meaning is well established. 

In general, the fidelity for the $Z$-rotation gate $U_{Z}(\theta)= \be^{-i \theta Z/2}$ contains the string order parameter as well as 
other non-local string-like operators.   
It is straightforward to generalize the above to the $X$-rotation gate.  
The fidelity of the $X$-rotation gate consists of similar string operators obtained by $Z\to X$, and $S^z \to S^x$.  
Moreover, this is the case for {\em any} unitary gate $U$ since an arbitrary $U$ can be described as a rotation about a certain axis; once the axis is given, we can fix a particular ``canonical'' representation of $\mathbb{Z}_2 \times \mathbb{Z}_2$ for this axis to repeat the same steps.  
The net outcome is the expression of the gate fidelity consisting of the string operators associated to this representation. 

\subsection{Ground-state gate fidelity}
In this section, we proceed with the calculation in the previous subsection when $\rho$ is pure: 
$\rho= \ket{\psi} \!\!\bra{\psi}$ with $\ket{\psi}$ being the $G$-symmetric AKLT state $| \psi_{\text{AKLT}}\rangle$ 
[Eq.~\eqref{MPS_AKLT_state_eq}] satisfying Eq.~\eqref{PG_sym}.    
First, the $\text{Tr}_{\text{in, out}}\left[\rho_{U_Z(\theta)} Z_{\text{in}} Z_{\text{out}} \right]$ term can have a finite value 
even in the limit of an infinite chain $N \to \infty$ and is calculated in the same way as the ordinary string order parameter.  
If we use Eq.~\eqref{PG_sym} and the fact that when $u_{g} = \be^{i \pi S^{z}}$, $V_g=V^\dagger_g = Z$ in the AKLT state, 
the right-hand side of Eq.~\eqref{fidelity_ZZ_term} may be calculated as:
\begin{equation}
\begin{split}
&- \raisebox{-6ex}{\includegraphics[scale=0.35]{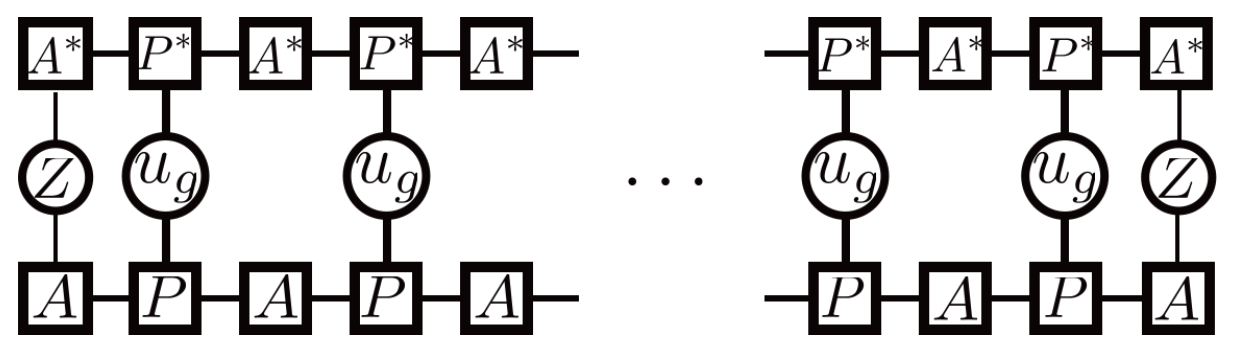}}   \\
& \Large{=} - \raisebox{-4.2ex}{\includegraphics[scale=0.35]{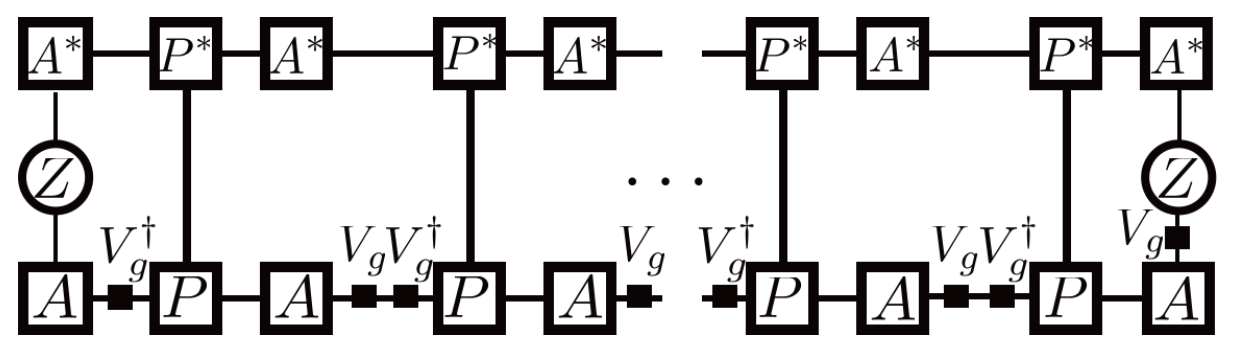}}   \\
& \Large{=} - \raisebox{-5ex}{\includegraphics[scale=0.35]{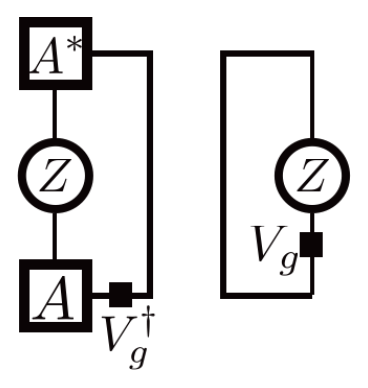}}   \\
& \Large{=} 1  \; .
\label{ZZ_term_gs}
\end{split}
\end{equation} 
In the last equality of Eq.~\eqref{ZZ_term_gs}, we have used $A \propto Y$ and $Z V_{g}=1$.  
In fact, this type of non-local correlation functions which have spin-$1/2$ operators $\{ X,Y,Z \}$ at the ends satisfy the same selection rule  
as the ordinary string order parameters \cite{Pollmann-T-12} 
obtained by replacing the end operators as $I_2\to I_3$, $X\to S^x$, $Y\to S^y$ and $Z\to S^z$ for the case of the AKLT state.

When considering the gate fidelity for general $\theta$, we must carefully deal with the remaining part \eqref{eqn:3rd-4th-tem-in-FU} [i.e., \eqref{XX_term} and \eqref{XZ_term}] for 
$\rho=\ketbra{ \psi_{\text{AKLT}} }{\psi_{\text{AKLT}}}$, 
whereas they do not reduce to the string order parameters 
since the summation over $\{ \bolm \}$ cannot be carried out explicitly and $\bolP(\bolm)$ remains in the expressions 
[see Eqs.~\eqref{XX_term} and \eqref{XZ_term}].  
Except when we obtain the measurement outcomes $\boldsymbol{m} = (\gamma,\gamma,\dots,\gamma)$, 
each term in the first line on the right-hand side of Eq.~\eqref{XX_term} is represented diagrammatically in Fig.\;\ref{typical_xx_term}.  
\begin{figure}[htb]
\begin{center}
\includegraphics[width=\columnwidth,clip]{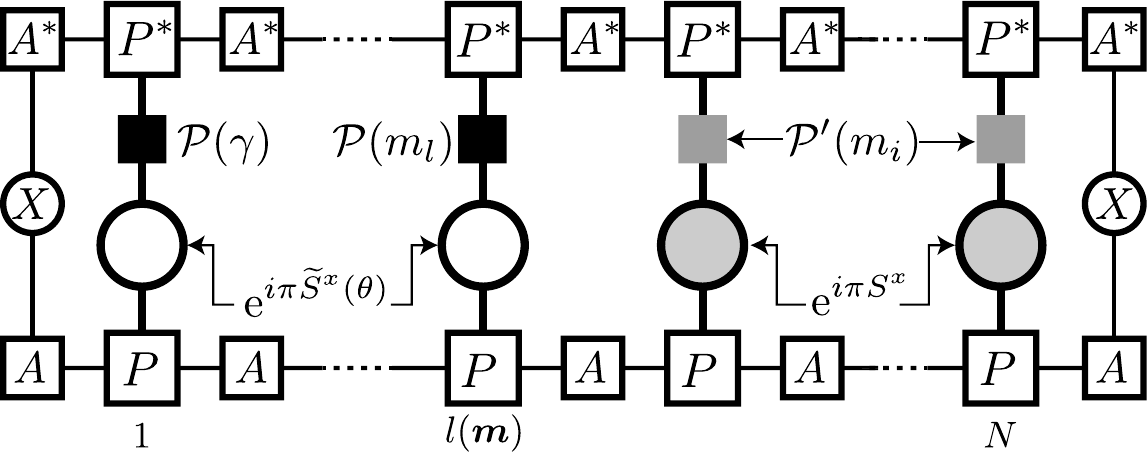}
\caption{
A typical term that appears in the first line on the right-hand side of Eq.~\eqref{XX_term}. 
The symbols $\raisebox{-1.3ex}{\includegraphics[scale=0.3,clip]{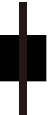}}$ 
and $\raisebox{-1.3ex}{\includegraphics[scale=0.3,clip]{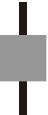}}$ denote the application of 
the projective measurement \eqref{projective_measuremnt_theta} with generic $\theta$ and $\theta=0$, respectively.
\label{typical_xx_term}}
\end{center}
\end{figure}

Despite its looking, the diagram shown in Fig.~\ref{typical_xx_term} is calculated simply as: 
\begin{equation}
 \text{Fig.\;\ref{typical_xx_term}} = \left(\frac{1}{3}\right)^{N-1} \times \;\; \raisebox{-7ex}{\includegraphics[scale=0.35,clip]{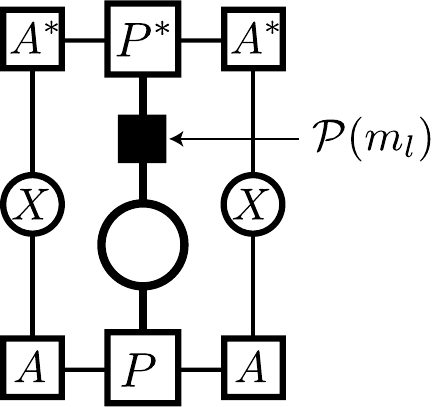}} 
= - \left(\frac{1}{3}\right)^{N} \cos \theta \;. 
\label{xx_term_example}
\end{equation} 
The first equality of Eq.\;\eqref{xx_term_example} holds since both transfer matrices
\begin{equation}
\raisebox{-7ex}{\includegraphics[scale=0.35,clip]{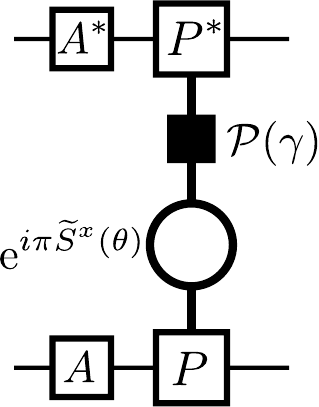}}
\quad 
\text{and} 
\quad 
\raisebox{-7ex}{\includegraphics[scale=0.35,clip]{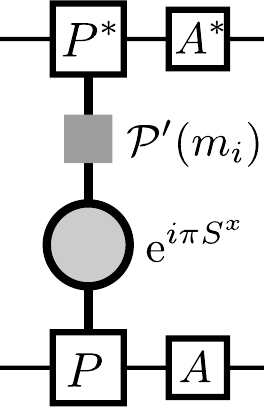}} 
\end{equation} 
respectively have the left and right eigenvectors $\raisebox{-1.3ex}{\includegraphics[scale=0.3,clip]{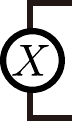}}$ 
and $\raisebox{-1.3ex}{\includegraphics[scale=0.3,clip]{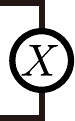}}$ with the same eigenvalue $1/3$, 
while the second equality holds since
\begin{eqnarray}
\raisebox{-7ex}{\includegraphics[scale=0.35,clip]{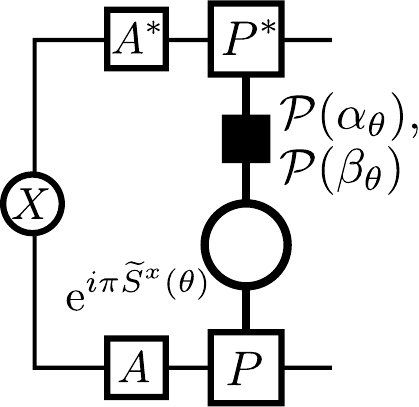}} \; 
= \frac{\text{cos}\theta}{3} \;  \raisebox{-3ex}{\includegraphics[scale=0.4,clip]{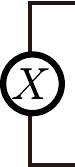}}\; 
- \frac{\text{sin}\theta}{3} \; \raisebox{-3ex}{\includegraphics[scale=0.4,clip]{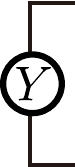}} \;.
\end{eqnarray} 
On the other hand, when we obtain the result $\boldsymbol{m} = (\gamma,\gamma,\dots,\gamma)$,
\begin{equation}
- \;   \raisebox{-8ex}{\includegraphics[scale=0.35]{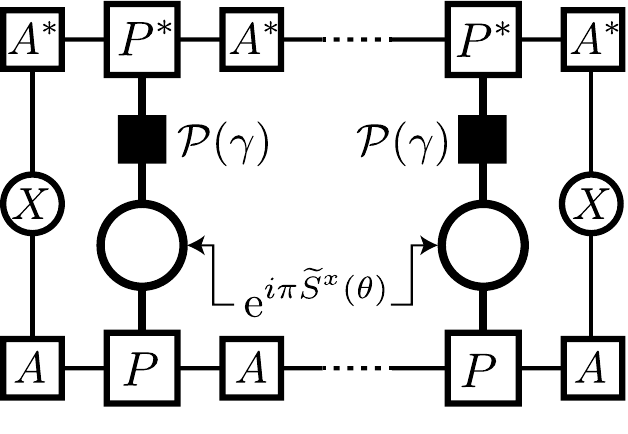}}  \; \; = \left(\frac{1}{3}\right)^N \; . 
\end{equation} 
Therefore, when the summation over $\boldsymbol{m}$ is carried out, the first line of the right-hand side of 
Eq.~\eqref{XX_term} is $\{ ( 1-\frac{1}{3^N} )\cos \theta  +\frac{1}{3^N} \} \cos \theta$.   
Then, the sum of \eqref{XX_term} and \eqref{XZ_term} is given by:
\begin{eqnarray}
2 \left(1-\frac{1}{3^N}\right)+\frac{2}{3^N} \cos \theta \;.
\end{eqnarray}
Consequently, the fidelity \eqref{gate_fidelity_z_rot} for the $Z$-rotation gate on the pure AKLT state $\rho=\ketbra{ \psi_{\text{AKLT}} }$ reads:  
\eq{ F_{U_{Z}(\theta)} = 1-\frac{1}{2\cdot 3^N} (1-\text{cos}\theta) \; .   \label{fidelity_result_gs}} 
Note that the value becomes $\theta$-independent in the limit of an infinite chain: $N \to \infty$. 

Alternatively, the same result could have been obtained in another simpler way.  We first note
\eq{
\rho_U = \left( 1- \frac{1}{3^N} \right) \ketbra{\psi_U}{\psi_U} + \frac{1}{3^N} \ketbra{\psi_I}{\psi_I} 
}
for $\rho = \ketbra{\psi_{\text{AKLT}}}{\psi_{\text{AKLT}}}$, since always the ideal resource state $\ket{\psi_U}$ is obtained 
except for the case $\boldsymbol{m} = (\gamma,\gamma,\dots,\gamma)$ which occurs with a probability $1/3^N$. 
By substituting this into the definition of the gate fidelity \eqref{def_fidelity}, we reproduce the result Eq.~\eqref{fidelity_result_gs}.  
Therefore, the fidelity of the MBQC on the pure AKLT state in general is smaller than $1$ when the system size $N$ is finite, and 
in the limit $N \to \infty$, it approaches 1 for any gate $U$ since, for any $U$, $\rho_U \to \ketbra{\psi_U}{\psi_U}$ in this limit. 
The fact that the gate fidelity of the MBQC on the pure AKLT state can generally be less than 1 for finite system sizes 
reflects that the MBQC is not deterministic but probabilistic (as is suggested by the term $1/3^{N}$) except for $\theta = 0,~\pi$. 
However, that we have obtained $F_{U}$ smaller than $1$ here must be distinguished from the reduction due to external noises,  
since we can completely trace the probabilistic failures in the MBQC on the pure AKLT state.


\section{Effects of noise on gate fidelity}
\label{sec:effect-noise}
In the previous section, we calculated the fidelity of $Z$-rotation gate with respect to a general state, and specifically, we thoroughly discussed the gate fidelity in the ground state i.e. in the case without noises.
In this section, we examine the behavior of the fidelity under various specific noises to show that our results are consistent
with the known ones \cite{deGroot-T-S-22}. 
Furthermore, the gate fidelity gives us more detailed information on the structure of decohered SPT phases, compared to the existing methods based on (non-local) order parameters. 
We focus on the site-wise uncorrelated noise $\mathcal{E} = \circ_i \mathcal{E}_i $ below. 
Our numerical calculations used the ITensor library \cite{ITensor}.

\subsection{Symmetry condition and gate fidelity}\label{sec:symm_cond_fidelity}
Before discussing the open quantum systems subject to some specific noises, we introduce a very useful proposition 
that relates the performance of MBQC to the strong symmetry condition: 
\begin{prop} \label{SS_cond_and_identity_fidelity}
The fidelity of the identity gate does not decay by an uncorrelated noise, i.e., $F_I =1$ on the noisy AKLT state 
if and only if the noise $\mathcal{E}$ satisfies the strong symmetry condition \eqref{strong_sym_cond_Kraus} 
for the canonical (linear) representation $\{ 1, \, \be^{i\pi S^x}, \, \be^{i\pi S^y}, \, \be^{i\pi S^z}  \}$ of 
the on-site $\mathbb{Z}_2 \times \mathbb{Z}_2$-symmetry.
\end{prop}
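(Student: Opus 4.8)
The plan is to prove the two directions of the equivalence separately, using the explicit expression \eqref{identity_fidelity} for $F_I$ as the sum of three string order parameters of $\tilde\rho$. First I would recall that for the pure AKLT state the computation of Sec.~III.B (in particular Eq.~\eqref{ZZ_term_gs} and the surrounding selection-rule argument) gives $\widetilde{\text{Tr}}[\,|\psi_{\text{AKLT}}\rangle\langle\psi_{\text{AKLT}}|\,\mathcal{X}^{(a)}_{\text{in}}(\prod_j \be^{i\pi S_j^a})\mathcal{X}^{(a)}_{\text{out}}]=-1$ for each $a=1,2,3$, so that $F_I=1$ on the pure state. Hence $F_I=1$ on the noisy state $\tilde\rho=\mathcal{E}[\,|\psi_{\text{AKLT}}\rangle\langle\psi_{\text{AKLT}}|\,]$ is equivalent to the statement that each of the three string correlators is unchanged by $\mathcal{E}$, i.e. $\widetilde{\text{Tr}}[\mathcal{E}[\rho]\, \mathcal{X}^{(a)}_{\text{in}} S^{(a)} \mathcal{X}^{(a)}_{\text{out}}]=-1$ where $S^{(a)}:=\prod_j \be^{i\pi S_j^a}$ and $\rho=|\psi_{\text{AKLT}}\rangle\langle\psi_{\text{AKLT}}|$. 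Moving the channel to the Heisenberg picture, this reads $\widetilde{\text{Tr}}[\rho\,\mathcal{E}^\dagger(\mathcal{X}^{(a)}_{\text{in}} S^{(a)}\mathcal{X}^{(a)}_{\text{out}})]=-1$, and since $U_g = \mathcal{X}^{(a)}_{\text{in}} S^{(a)} \mathcal{X}^{(a)}_{\text{out}}$ is precisely (the extended form of) the $\mathbb{Z}_2\times\mathbb{Z}_2$ generator in the chosen canonical representation acting on the full chain including the edge spin-$1/2$s, each equation says that one generator is preserved under $\mathcal{E}^\dagger$ in the state $\rho$.

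For the ``if'' direction I would assume the strong symmetry condition \eqref{strong_sym_cond_Kraus} for this representation. By the reformulation \eqref{strong_sym_cond_Kraus_diagram_2}, strong symmetry is equivalent to $\mathcal{E}^\dagger(U_g)=\be^{i\phi(g)}U_g$ for each of the three nontrivial $g$. Plugging into $\widetilde{\text{Tr}}[\rho\,\mathcal{E}^\dagger(U_g)]=\be^{i\phi(g)}\widetilde{\text{Tr}}[\rho\,U_g]=\be^{i\phi(g)}\cdot(-1)$. It remains to pin down $\be^{i\phi(g)}=1$; this follows because the channel is trace preserving, so applying $\mathcal{E}^\dagger$ to the identity gives $\mathbf 1$, and because $U_g$ is a product of on-site unitaries while $\mathcal{E}=\circ_i\mathcal{E}_i$ is uncorrelated, the phase factorizes over sites, $\be^{i\phi(g)}=\prod_i\be^{i\phi_i(g)}$, and each site-local factor must be $1$ since $\mathcal{E}_i^\dagger$ applied to a single-site unitary times its inverse equals $\mathbf 1$ (equivalently, $U_g^\dagger$ also appears in the $\mathbb{Z}_2$ group, $U_g^2=\mathbf 1$, forcing $\be^{2i\phi(g)}=1$, and then a continuity/consistency argument on the edge spin-$1/2$s fixes the sign). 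Collecting the three contributions into \eqref{identity_fidelity} gives $F_I=\tfrac14-\tfrac14\sum_a(-1)=1$.

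For the ``only if'' direction, assume $F_I=1$ on $\tilde\rho=\mathcal{E}[\rho]$. Since each of the three string terms in \eqref{identity_fidelity} is, for the pure AKLT state, an expectation value of a unitary and hence bounded in modulus by $1$, and the equality $F_I=1=\tfrac14-\tfrac14\sum_a(\cdots)$ forces each of the three terms to attain its extremal value $-1$ simultaneously. Equivalently, in the Heisenberg picture, $\widetilde{\text{Tr}}[\rho\,\mathcal{E}^\dagger(U_g)] = -1 = \widetilde{\text{Tr}}[\rho\, U_g]$ with $U_g$ unitary; I would argue that saturation of $|\langle\psi|\,\mathcal{E}^\dagger(U_g)\,|\psi\rangle|=1$ with $\mathcal{E}^\dagger(U_g)$ a contraction ($\|\mathcal{E}^\dagger(U_g)\|\le 1$) forces $\mathcal{E}^\dagger(U_g)|\psi\rangle = -|\psi\rangle$, i.e. the decohered string correlator coincides with the pure one for every left endpoint. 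Using the fact (invoked in Sec.~III.B and Ref.~\cite{deGroot-T-S-22}) that the AKLT MPS tensor is injective and that the $U_g$ generate the full on-site $\mathbb{Z}_2\times\mathbb{Z}_2$, together with the uncorrelated structure $\mathcal{E}=\circ_i\mathcal{E}_i$, this site-by-site equality of string correlators propagates to the operator identity $\mathcal{E}_i^\dagger(u_g(i)) = \be^{i\phi_i(g)} u_g(i)$ on each site; summing the phases recovers the global $\mathcal{E}^\dagger(U_g)=\be^{i\phi(g)}U_g$, which by \eqref{strong_sym_cond_Kraus_diagram_2} is exactly the strong symmetry condition for the chosen representation.

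The main obstacle I expect is the ``only if'' direction: deducing from the single scalar equality $F_I=1$ (an expectation value in one particular state $\tilde\rho$) the operator-level strong symmetry condition \eqref{strong_sym_cond_Kraus}, which is a statement about all states. This requires leveraging both the injectivity/canonical form of the AKLT MPS (so that a correlator saturating its bound forces a local operator identity, not just an identity in the support of $\rho$) and the uncorrelated product structure of the noise (so that a global phase constraint descends to each site). The clean way to organize this is via the transfer-matrix / parent-Hamiltonian argument used implicitly in Eq.~\eqref{ZZ_term_gs}: the pure string correlator equals $1$ because the relevant transfer operator has the corresponding Pauli string as a fixed point with eigenvalue $1$; one then shows that replacing a single site's tensor pair by its $\mathcal{E}_i^\dagger$-image can preserve that fixed-point property only if $\mathcal{E}_i^\dagger$ acts as the identity (up to phase) on $u_g(i)$, which is the strong symmetry condition site-wise.
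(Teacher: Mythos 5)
Your overall strategy coincides with the paper's: both directions are read off from the decomposition \eqref{identity_fidelity} of $F_I$ into three string correlators, the ``if'' direction enters through the Heisenberg-picture form \eqref{strong_sym_cond_Kraus_diagram_2} of strong symmetry, and the converse follows from the observation that $F_I=1$ forces each correlator (being the expectation of a unitary, hence bounded by $1$ in modulus) to saturate its extremal value $-1$. Your write-up of the ``only if'' direction --- saturation of $|\langle\psi|\mathcal{E}^\dagger(U_g)|\psi\rangle|=1$ against $\|\mathcal{E}^\dagger(U_g)\|\le 1$, followed by propagation to a site-local operator identity using the product structure of the noise and the injectivity of the AKLT tensor --- is at least as explicit as the paper's one-line assertion, whose supporting details are deferred to the MPO-tensor argument of Appendix~\ref{sec:MPO_tensor_product_decomposition}.

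The genuine gap is in your ``if'' direction, at the step where you must fix $\be^{i\phi(g)}=1$. Both of your arguments for this fail: $\mathcal{E}_i^\dagger$ is not multiplicative, so ``$\mathcal{E}_i^\dagger$ applied to a single-site unitary times its inverse equals $\mathbf 1$'' places no constraint on $\mathcal{E}_i^\dagger(u_g)$; and a ``continuity/consistency argument on the edge spin-$1/2$s'' cannot work because the uncorrelated noise acts only on the bulk spin-$1$s, so the edge spins carry no information about $\phi(g)$. Involutivity $U_g^2=1$ gives only $\be^{i\phi(g)}=\pm 1$, and the sign $-1$ cannot be excluded generator by generator: for the single $\mathbb{Z}_2$ factor generated by $u_g=\be^{i\pi S^z}=\text{diag}(-1,1,-1)$, the Kraus operators $K_1=\ket{0}\!\bra{1}+\tfrac{1}{\sqrt2}\ket{1}\!\bra{0}$ and $K_2=\ket{0}\!\bra{-1}+\tfrac{1}{\sqrt2}\ket{-1}\!\bra{0}$ satisfy $\sum_p K_p^\dagger K_p=1$ and $u_g K_p u_g^\dagger=-K_p$, i.e.\ they realize a legitimate channel that is strongly symmetric with phase $-1$; such a channel would flip the corresponding string correlator from $-1$ to $+1$ and yield $F_I=1/2$, so your chain of implications would break. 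What actually rules this out is the simultaneous constraint from all three nontrivial elements of $\mathbb{Z}_2\times\mathbb{Z}_2$ combined with the completeness relation --- precisely the case analysis of Lemma~\ref{lemma2} in Appendix~\ref{sec:proof-Prop2}. Without that input (or an equivalent argument), your ``if'' direction does not close.
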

A remark is on order here about the choice of the $\mathbb{Z}_2 \times \mathbb{Z}_2$ representation.   
Since the AKLT state is isotropic, one may think that the same statement holds for any other representations that are related to the canonical one 
by rotation after due modification of the Kraus operators.  However, the measurement basis \eqref{eqn:MBQC-basis-1}-\eqref{eqn:MBQC-basis-3} 
already fix the directions and the canonical representation is crucial for the statement.   

We can easily check the above statement.  
By Eqs.~\eqref{strong_sym_cond_Kraus_diagram_2} and \eqref{identity_fidelity}, if a quantum noise satisfies the strong symmetry condition, then the gate fidelity does not decay. Conversely, if $F_I=1$ on a noisy AKLT state, each of the four terms in Eq.\;\eqref{identity_fidelity} must take its maximal value,  
which is possible only when the strong symmetry condition holds. The fact that we can perfectly perform the identity gate even on the noisy AKLT state under the strongly symmetric channel can be checked by directly probing the MPO tensor of the mixed state. (See Sec.\;\ref{summary_sec_4} and Appendix\;\ref{sec:MPO_tensor_product_decomposition})

The Prop.~\ref{SS_cond_and_identity_fidelity} suggests that the strong symmetry condition is not sufficient for the noisy AKLT state to be a one-qubit universal resource of the MBQC necessitating stronger conditions for this purpose. 
In the following section, by demonstrating the numerical results, we will see that the strong $\mathbb{Z}_2 \times \mathbb{Z}_2$ symmetry of quantum noises is not sufficient for the one-qubit universal computation.

\subsection{Discrete time-evolution of the fidelity of $Z$-rotation gate under some symmetric noises} \label{Z_rotation_fidelity_and_noise}
\label{sec:numerics}
In this section, we consider the discrete time-evolution of the SPT states under the repeated application of 
quantum noises described by \eqref{eqn:OSR}.  

Precisely, we calculate the gate fidelity $F_{U}$ at each time step 
by replacing $\tilde{\rho}$ in \eqref{definition_rho_U} with the one obtained after the step. 
First, by calculating the gate fidelities against noises shown in Table~\ref{noises_table}, we numerically show that time-reversal symmetry of the decohered state does not affect the fidelity of the $Z$-rotation gate. Then, we will see, as mentioned in the previous section, 
that the strong symmetry condition \eqref{strong_sym_cond_Kraus} or \eqref{strong_sym_cond_Kraus_diagram_2} alone 
is not sufficient for the implementation of the $Z$-rotation gate, not to mention, realization of the one-qubit universal computation. We will derive the condition to implement the $Z$-rotation gate and prove that any non-trivial quantum channel disables the one-qubit universal computation in our MBQC scheme.
\begin{center}
\begin{table}[htb]
\caption{Several quantum noises considered and the corresponding Kraus operators $\{ \mathcal{K}_{\alpha} \}$ [see Eq.~\eqref{Kraus_and_channel}].  
These noises are either weakly symmetric [W.S.; see Eq.~\eqref{weak_sym_cond_Kraus-2} for the definition] 
or strongly symmetric [S.S.; \eqref{strong_sym_cond_Kraus}] under the symmetry operations $\mathbb{Z}_{2} \times \mathbb{Z}_{2}$ 
and time-reversal. 
``Noise 1'' is called ``dephasing'' in Ref.~\onlinecite{deGroot-T-S-22}.\label{noises_table}}
\begin{ruledtabular}   
\begin{tabular}{lccc}
 & Kraus op. $\{ \mathcal{K}_{\alpha} \}$  &$\mathbb{Z}_{2} \times \mathbb{Z}_{2}$ & time-reversal \\
\hline
Noise~1 & $\{ 1, \be^{i\pi S_{x}}, \be^{i\pi S_{y}}, \be^{i\pi S_{z}} \}$ & S.S.  & S.S. \\
\hline
Noise~2 & $\{ 1, S_{x}S_{y}S_{z} , \text{and perm.} \}$ & S.S. & W.S. \\
\hline
Noise~3 & $\{ 1, S_{x}S_{y} , S_{y}S_{z} , S_{z}S_{x} \}$ & W.S. & S.S.  \\
\hline
Noise~4 & $\{ 1, \be^{i\pi S_{z}} \}$ & S.S. & S.S.   \\
\end{tabular}
\end{ruledtabular}
\end{table}
\end{center}

To be specific, we restrict ourselves to uncorrelated quantum channels of the following form:
\begin{equation}
\begin{split}
& \mathcal{E} =  \mathcal{E}_1  \circ \mathcal{E}_2 \circ \cdots \circ \mathcal{E}_N  \;, 
\\
& \mathcal{E}_i (\rho) = (1-p) \rho + \frac{p}{n} \sum_{\alpha=1}^{n}  \mathcal{K}_{\alpha} \, \rho \, \mathcal{K}^{\dagger}_{\alpha} \;  .
\end{split}
\label{Kraus_and_channel}  
\end{equation}
Note that $\mathcal{E}_{i}$ acts only on the spin-1 Hilbert space at site-$i$. 
With a slight abuse of terminologies, we call the operators  $\{ \mathcal{K}_\alpha \}$ appearing here the Kraus operators as well 
although they do not satisfy the usual completeness relation.  
Below, we consider the AKLT states subject to the four types of quantum noises shown in Table~\ref{noises_table}.  

Let us begin with Noise~1 defined by the Kraus operators $\{ 1, \be^{i\pi S_{x}}, \be^{i\pi S_{y}}, \be^{i\pi S_{z}} \}$ (see Table~\ref{noises_table}).  
Below, all the numerical simulations were done for the AKLT chain with $N=7$ $S=1$ spins (and two spin-$1/2$s at the ends) and for the error rate $p=0.25$. 
The discrete time-evolution of the fidelity $F_{U}$ for the $Z$-rotation gate $U = \be^{-i\theta Z/2}$ is shown in Fig.~\ref{noise1} 
for several values of $\theta$.  
Since Noise 1 satisfies the strong symmetry condition \eqref{strong_sym_cond_Kraus} for a particular representation 
$\{ 1, \be^{i\pi S^x}, \be^{i\pi S^y}, \be^{i\pi S^z}  \}$ of the on-site $\mathbb{Z}_2 \times \mathbb{Z}_2$, the fidelity $F_{U}$ of the identity gate ($\theta=0$) 
stays at unity as guaranteed by Prop.~\ref{SS_cond_and_identity_fidelity}.   
However, the fidelity of the $Z$-rotation gate $\be^{-i\theta Z/2}$ decays for $\theta \neq 0$, 
which means that we cannot ``practically'' implement the generic $Z$-rotation under this noise, that is, 
the noisy AKLT state is no longer a one-qubit universal resource.  

For Noise~2 in Table~\ref{noises_table}, we obtained essentially the same results (see Fig.~\ref{noise2}).  
The only difference between the two cases is that Noise~1 respects the strong time-reversal symmetry, while Noise~2 does not.  
Therefore, we can conclude that the strong symmetry condition 
with respect to time-reversal symmetry does not affect the gate fidelity essentially.  
This is expected from the fact that the gate fidelity is given by a sum of string-like operators associated to the on-site $\mathbb{Z}_2 \times \mathbb{Z}_2$ 
symmetry as we have shown in Sec.~\ref{calculation_of_the_gate_fidelity}.  

As is seen in Fig.~\ref{noise3}, the situation is drastically different for Noise~3 which respects the on-site $\mathbb{Z}_2 \times \mathbb{Z}_2$-symmetry  
only weakly (see Table~\ref{noises_table}).   
Now the gate fidelity $F_{U}$ quickly damped to its lowest possible value $1/4$ for {\em all} $\theta$ (including $\theta=0$).   
This is to be contrasted to the behavior under Noise 4 (see Fig.~\ref{noise4}) which is strongly symmetric under $\mathbb{Z}_2 \times \mathbb{Z}_2$;  
there $F_{U}$ never decays and stays at a constant \eqref{fidelity_result_gs} determined by the system size $N$ as well as $\theta$.    
Since the value $1/4$ is equal to $F_{U}$ for the maximally mixed state, it is clear that under noises that are strongly symmetric 
with respect to time reversal but not to the on-site $\mathbb{Z}_2 \times \mathbb{Z}_2$, the AKLT state totally loses its computational power.  

\begin{figure}[htb]
\begin{center}
\includegraphics[width=\columnwidth,clip]{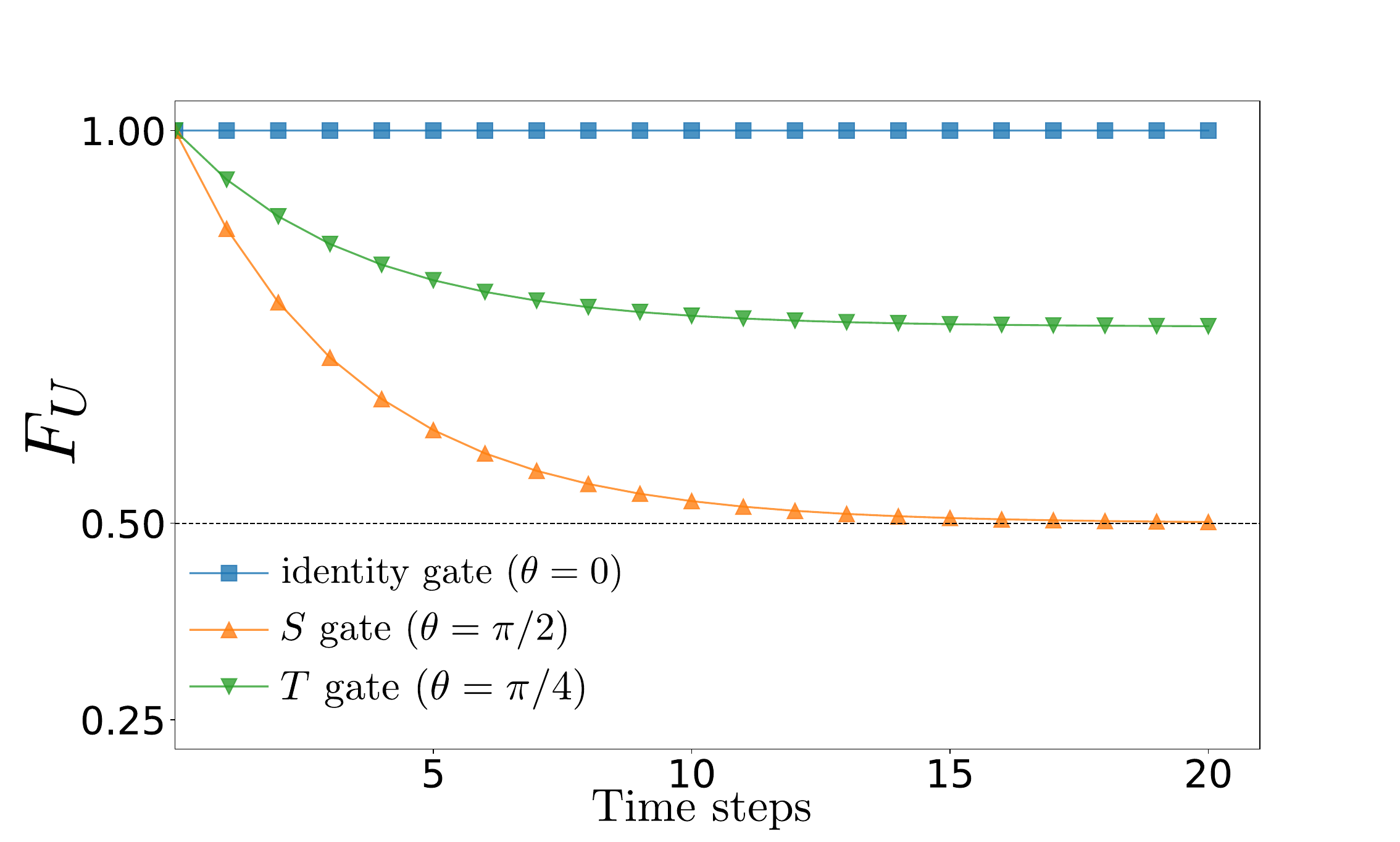}
\caption{%
Discrete time-evolution of the gate fidelity $F_{U}$ for $U = e^{-i\theta Z/2}$ under Noise 1 (called ``dephasing'' in 
Ref.~\onlinecite{deGroot-T-S-22}) 
in Table~\ref{noises_table}. The asymptotic value for the gate $e^{-i\theta Z/2}$ is given by $(1+\text{cos}^2\theta)/2$.  
\label{noise1}}
\end{center}
\end{figure}
\begin{figure}[htb]
\begin{center}
\includegraphics[width=\columnwidth,clip]{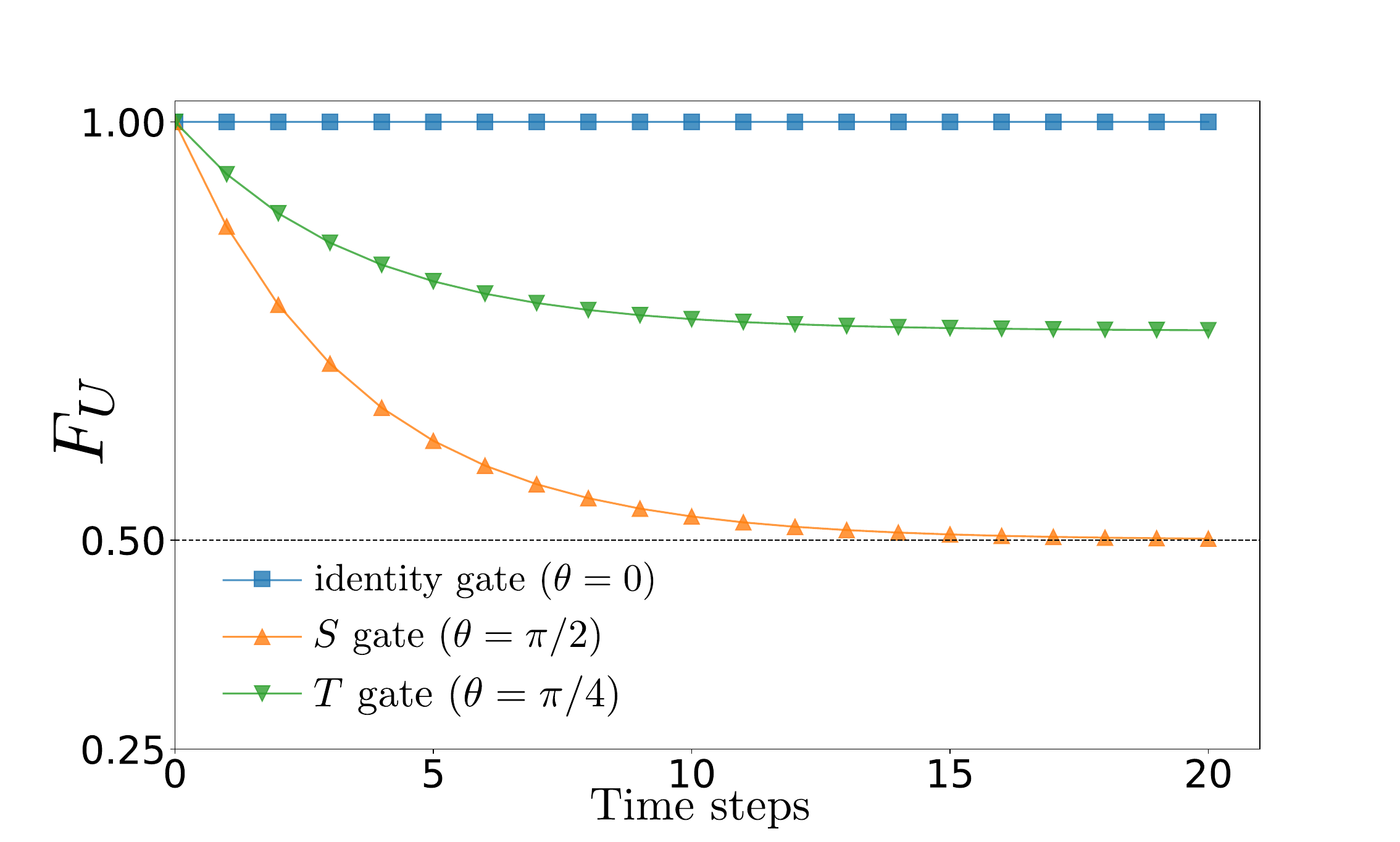}
\caption{%
Discrete time-evolution of the gate fidelity $F_{U}$ for $U = e^{-i\theta Z/2}$ under Noise 2 in Table~\ref{noises_table}.  
The asymptotic values are the same as in Fig.~\ref{noise1}.
\label{noise2}}
\end{center}
\end{figure}
\begin{figure}[htb]
\begin{center}
\includegraphics[width=\columnwidth,clip]{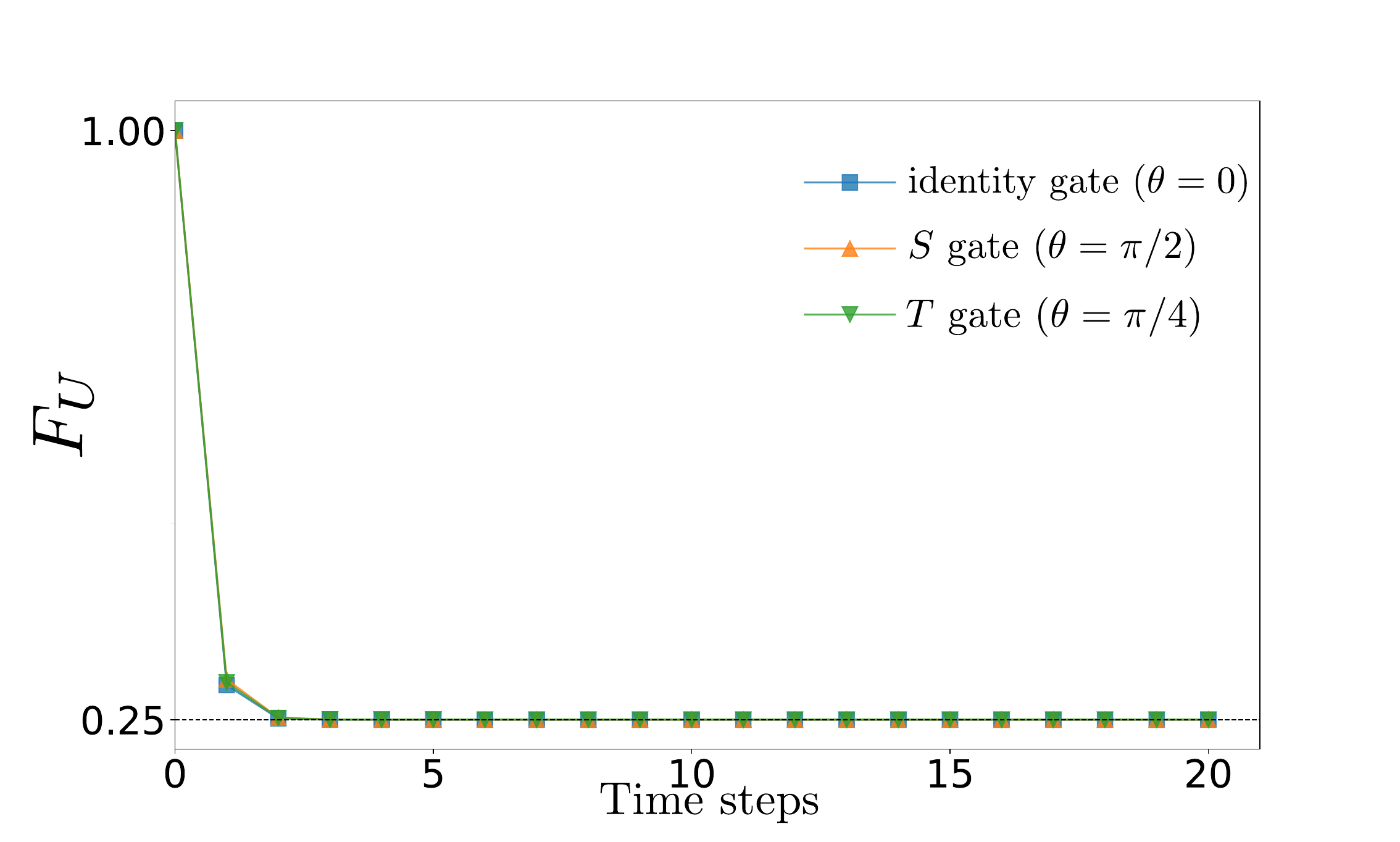}
\caption{%
Discrete time-evolution of the gate fidelity $F_{U}$ for $U = e^{-i\theta Z/2}$ under the weakly symmetric Noise 3 (see Table~\ref{noises_table}).  
For all cases (including $\theta=0$), $F_{U}$ quickly converges to its minimal value 1/4.  
As noted in the main text, this value is the same as in the case of MBQC on the maximally mixed state. 
\label{noise3}}
\end{center}
\end{figure}

The weak symmetry condition \eqref{weak_sym_cond_original} guarantees that if the initial $\rho$ has the $\mathbb{Z}_2\times \mathbb{Z}_2$ symmetry, 
then the decohered $\mathcal{E}(\rho)$ respects the same symmetry: $U_{g} \mathcal{E}[\rho]  U_{g}^{\dagger} 
= \mathcal{E}\left[ U_{g} \rho U_{g}^{\dagger} \right] = \mathcal{E}[\rho]$.    
Therefore, the fact that we are not able to implement the MBQC on the state decohered by a weakly symmetric noise suggests 
the following two possibilities. 
The first is that the preservation of the protecting symmetry ($\mathbb{Z}_2\times \mathbb{Z}_2$, here) alone does not ensure 
the existence of the symmetry-protected edge states 
unlike in the case of the pure state.  In fact, it has been demonstrated recently \cite{Paszko-R-S-P-23} that the strong symmetry condition is necessary 
for the well-defined edge states to exist in the steady states of the Lindbladian dynamics.  
The second possibility is that the protected edge states still exist, whereas we cannot accurately manipulate them.  
Unfortunately, however, we are not able to determine which of the two is the case based on our present numerical results.
\begin{figure}[htb]
\begin{center}
\includegraphics[width=\columnwidth,clip]{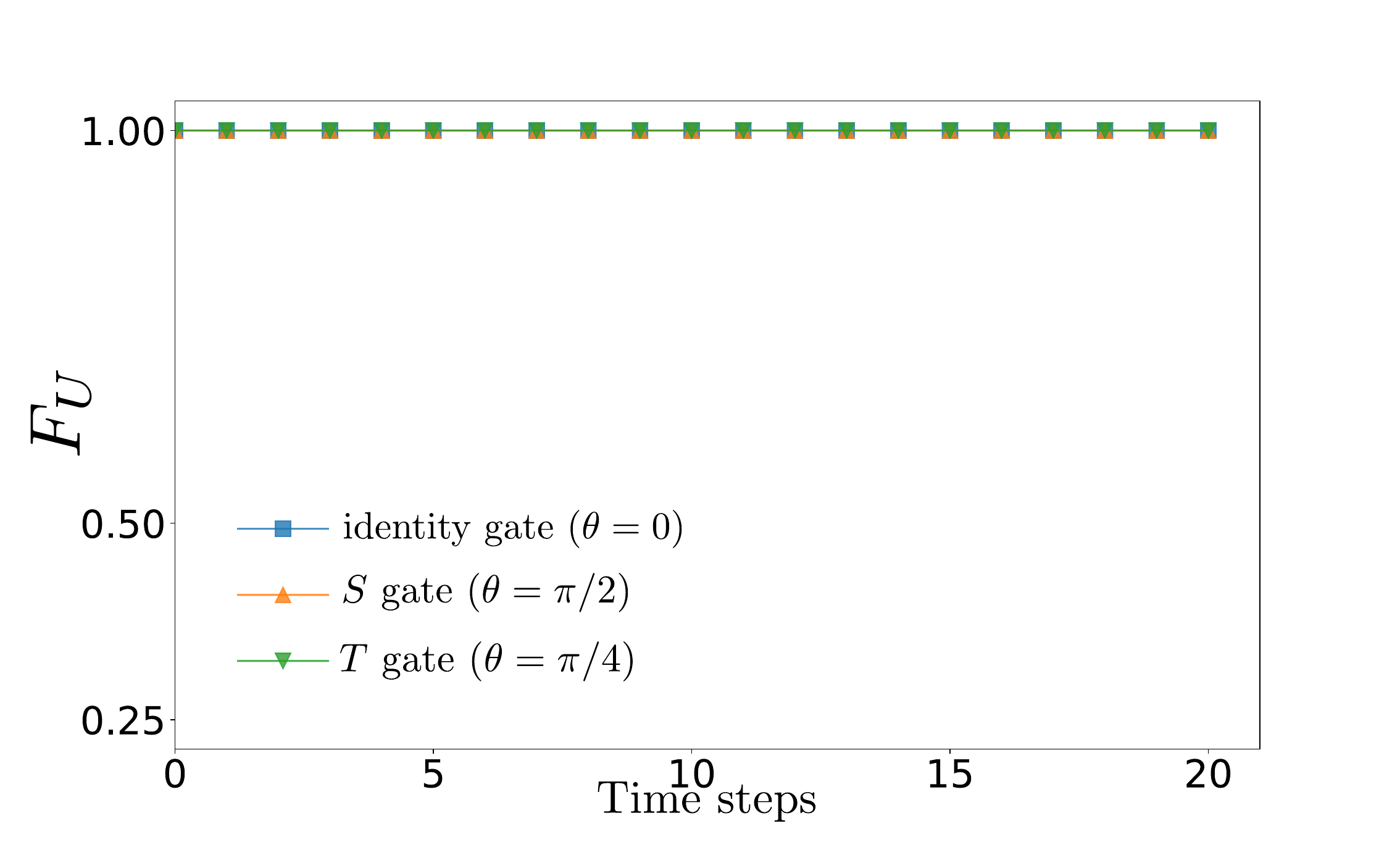}
\caption{%
Discrete time-evolution of the gate fidelity $F_{U}$ for $U = e^{-i\theta Z/2}$ under Noise 4 in Table~\ref{noises_table}.  
The $F_{U}$ stays at a constant $1-\frac{1}{2\cdot 3^N} (1-\text{cos}\theta)$ ($N=7$),  
which is the value for the {\em pure} AKLT state $\rho=\ketbra{ \psi_{\text{AKLT}} }$ [see Eq.~\eqref{fidelity_result_gs}].
\label{noise4}}
\end{center}
\end{figure}

Next, let us compare the results in Fig.~\ref{noise1} (obtained for Noise 1) and in Fig.~\ref{noise4} (for Noise 4).  
Both of types satisfy the strong symmetry conditions for $\mathbb{Z}_2 \times \mathbb{Z}_2$ and time reversal (see Table~\ref{noises_table}).  
Then, what causes the difference between these noises?  
It is another representation $\{ 1, \be^{i\pi \tilde{S}^x(\theta)}, \be^{i\pi \tilde{S}^y(\theta)}, \be^{i\pi S^z}  \}$ of $\mathbb{Z}_2 \times \mathbb{Z}_2$ 
that makes the two cases different.  This may be understood by the following proposition which is proved in Appendix~\ref{sec:proof-Prop2}.  
\begin{prop} \label{condition_for_z_rotation}
We can implement the $Z$-rotation gate $\be^{-i\theta Z/2}$ with the same fidelity as in the case of the pure AKLT state 
if and only if the quantum noise satisfies the strong symmetry condition with respect to the ``rotated'' representation 
$\{ 1, \be^{i\pi \tilde{S}^x(\theta)}, \be^{i\pi \tilde{S}^y(\theta)}, \be^{i\pi S^z}  \}$ of $\mathbb{Z}_2 \times \mathbb{Z}_2$ for all $\theta$.
\end{prop}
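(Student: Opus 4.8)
The plan is to pass to the Heisenberg picture and follow how the channel $\mathcal{E}$ acts on each of the string-like operators that assemble $F_{U_Z(\theta)}$ in \eqref{gate_fidelity_z_rot}. Setting $\tilde\rho=\mathcal{E}[\,\ketbra{\psi_{\text{AKLT}}}\,]$ and using $\widetilde{\text{Tr}}[\mathcal{E}[\sigma]\,O]=\widetilde{\text{Tr}}[\sigma\,\mathcal{E}^{\dagger}(O)]$ (with $\mathcal{E}$ acting only on the spin-$1$ bulk), $F_{U_Z(\theta)}$ equals $\tfrac14$ plus a fixed linear combination of the expectation values $\widetilde{\text{Tr}}[\tilde\rho\,O]$ over the operators $O$ displayed in \eqref{fidelity_ZZ_term} and \eqref{eqn:3rd-4th-tem-in-FU}; by those equations each such $O$ is, for a fixed outcome $\bolm$, an edge spin-$1/2$ operator, times $\bolP(\bolm)$, times a string that is \emph{twisted} ($\be^{i\pi\tilde S_j^{x,y}(\theta)}$) on the sites $j\le l(\bolm)$ and \emph{untwisted} ($\be^{i\pi S_j^{x,y}}$, or $\be^{i\pi S_j^{z}}$) on $j>l(\bolm)$. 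The one algebraic fact I will use is that the measurement projectors are themselves representation elements: since $\ket{\alpha_\theta},\ket{\beta_\theta},\ket{\gamma}$ are the zero eigenvectors of $\tilde S^{x}(\theta),\tilde S^{y}(\theta),\tilde S^{z}(\theta)=S^z$ [Eqs.~\eqref{eqn:MBQC-basis-1}--\eqref{eqn:MBQC-basis-3}], one has $\mathcal{P}_j(m_j)=\tfrac12\bigl(1+g^{(j)}_{m_j}\bigr)$ with $g_{\alpha_\theta}=\be^{i\pi\tilde S^{x}(\theta)}$, $g_{\beta_\theta}=\be^{i\pi\tilde S^{y}(\theta)}$, $g_{\gamma}=\be^{i\pi S^{z}}$, and likewise $\mathcal{P}'_j(m_j)=\tfrac12(1+h^{(j)}_{m_j})$ with $h$ the canonical ($\theta=0$) elements. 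Consequently, for each fixed $\bolm$, $O$ is an edge operator times a product over bulk sites of elements of the $\theta$-rotated $\mathbb{Z}_2\times\mathbb{Z}_2$ representation (on $j\le l(\bolm)$) and of the canonical one (on $j>l(\bolm)$), all of which commute on each site.

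For sufficiency, suppose $\mathcal{E}$ is strongly symmetric with respect to the rotated representation for every $\theta$; this already contains the target angle and, at $\theta=0$, the canonical representation $\{1,\be^{i\pi S^x},\be^{i\pi S^y},\be^{i\pi S^z}\}$. Because the channels \eqref{Kraus_and_channel} retain the identity among their Kraus operators, the common phase in \eqref{strong_sym_cond_Kraus} must equal the identity's and hence is trivial, so \eqref{strong_sym_cond_Kraus_diagram_2} gives, site by site, $\mathcal{E}^{\dagger}_{i}\bigl(\be^{i\pi\tilde S_i^{a}(\theta)}\bigr)=\be^{i\pi\tilde S_i^{a}(\theta)}$ and $\mathcal{E}^{\dagger}_{i}\bigl(\be^{i\pi S_i^{a}}\bigr)=\be^{i\pi S_i^{a}}$ \emph{exactly}, with no attenuation. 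Since $\mathcal{E}^{\dagger}$ factorizes over the sites on which the individual $\mathcal{E}_i$ act, and every bulk factor of $O$ is a polynomial in these invariant generators, $\mathcal{E}^{\dagger}(O)=O$ for each $O$; therefore $F_{U_Z(\theta)}(\tilde\rho)=F_{U_Z(\theta)}(\ketbra{\psi_{\text{AKLT}}})$, i.e. the pure-state value \eqref{fidelity_result_gs}.

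For necessity, assume $F_{U_Z(\theta)}(\tilde\rho)$ attains the pure-state value for all $\theta$. At $\theta=0$, comparison with \eqref{identity_fidelity} and Proposition~\ref{SS_cond_and_identity_fidelity} already forces strong symmetry with respect to the canonical representation, so all untwisted strings are rigid under $\mathcal{E}^{\dagger}$. For a generic angle $\theta_0$ write $\mathcal{E}^{\dagger}_{i}\bigl(\be^{i\pi\tilde S_i^{a}(\theta_0)}\bigr)=\lambda_a(\theta_0)\,\be^{i\pi\tilde S_i^{a}(\theta_0)}+R_i^{(a)}(\theta_0)$ with $R_i^{(a)}$ Hilbert--Schmidt orthogonal to $\be^{i\pi\tilde S_i^{a}(\theta_0)}$; for channels of the form \eqref{Kraus_and_channel} with $0<p<1$ one has $|\lambda_a(\theta_0)|\le 1$, with $\lambda_a(\theta_0)=1$ and $R_i^{(a)}(\theta_0)=0$ holding precisely when $\mathcal{E}$ is strongly symmetric with respect to $\be^{i\pi\tilde S^{a}(\theta_0)}$. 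Substituting this into the $\bolm$-sum and repeating the transfer-matrix evaluation of Sec.~\ref{calculation_of_the_gate_fidelity} (as in Eqs.~\eqref{ZZ_term_gs}--\eqref{xx_term_example}), one should obtain $F_{U_Z(\theta_0)}(\tilde\rho)=F^{\text{pure}}_{U_Z(\theta_0)}-\Delta(\theta_0)$ with $\Delta(\theta_0)$ a sum of manifestly non-negative contributions built from $1-|\lambda_{x,y}(\theta_0)|$ and from $\|R_i^{(x,y)}(\theta_0)\|$, so that $\Delta(\theta_0)=0$ forces strong symmetry with respect to the full rotated representation $\{1,\be^{i\pi\tilde S^{x}(\theta_0)},\be^{i\pi\tilde S^{y}(\theta_0)},\be^{i\pi S^{z}}\}$; letting $\theta_0$ range over all angles yields the proposition.

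The hard part is exactly this last step — establishing the non-negativity/non-cancellation of $\Delta(\theta_0)$, i.e. ruling out that the edge-and-projector-dressed, $\bolm$-summed twisted-string expectations conspire to reproduce the pure value even when some $\mathcal{E}^{\dagger}_i$ fails to act on $\be^{i\pi\tilde S^{x,y}(\theta_0)}$ as the exact identity. A naive term-by-term modulus bound does not suffice because the individual $\bolm$-summands are generically complex; the cleaner route is likely to work directly with the matrix-product-operator tensor of $\tilde\rho=\mathcal{E}[\ketbra{\psi_{\text{AKLT}}}]$ and argue that any deviation from strong symmetry with respect to the rotated representation inserts a strictly contracting mode into the relevant twisted transfer matrix, which then strictly lowers every surviving contribution to $F_{U_Z(\theta_0)}$ — the same mechanism underlying Proposition~\ref{SS_cond_and_identity_fidelity} and the MPO-tensor check mentioned there.
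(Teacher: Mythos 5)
Your sufficiency direction is essentially sound and matches the spirit of the paper's argument: the observation that $\mathcal{P}_j(m_j)=\tfrac12\bigl(1+g_{m_j}\bigr)$ with $g_{m_j}$ an element of the rotated (or, for $\mathcal{P}'_j$, canonical) $\mathbb{Z}_2\times\mathbb{Z}_2$ representation is correct, and since the group is closed under multiplication, every bulk site factor of the operators in \eqref{fidelity_ZZ_term} and \eqref{eqn:3rd-4th-tem-in-FU} is a \emph{linear combination} of representation elements, so linearity of $\mathcal{E}^{\dagger}$ (which is not multiplicative --- you must reduce the products to group elements first, as you implicitly do) plus strong symmetry with trivial phases gives $\mathcal{E}^{\dagger}(O)=O$. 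One caveat: the triviality of the phases $\be^{i\phi(g)}$ should not be deduced merely from the identity being among the Kraus operators of the particular family \eqref{Kraus_and_channel}; the paper proves it for an arbitrary strongly $\mathbb{Z}_2\times\mathbb{Z}_2$-symmetric channel (Lemma~\ref{lemma2}) by showing that any assignment with some $\be^{i\phi(g)}=-1$ is incompatible with the completeness relation $\sum_pK_p^{\dagger}K_p=I_3$.

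The genuine gap is the necessity direction, which you candidly leave open. The idea you are missing is to avoid the scalar deficit $\Delta(\theta_0)$ altogether and instead characterize strong symmetry by an \emph{operator} identity on the measurement projectors: Lemma~\ref{lemma} of Appendix~\ref{sec:proof-Prop2} states that $\mathcal{E}$ is strongly symmetric for the rotated representation if and only if $\sum_pK_p^{\dagger}\mathcal{P}(m)K_p=\mathcal{P}(m)$ for $m=\alpha_\theta,\beta_\theta,\gamma$. The nontrivial half of that equivalence is precisely the ``no conspiracy'' statement you could not establish: if $\sum_pK_p^{\dagger}\mathcal{P}(m)K_p=\mathcal{P}(m)$, then expanding $\sum_p\bigl(\mathcal{P}(m)K_p-K_p\mathcal{P}(m)\bigr)^{\dagger}\bigl(\mathcal{P}(m)K_p-K_p\mathcal{P}(m)\bigr)$ and using $\mathcal{P}(m)^2=\mathcal{P}(m)$ together with completeness shows that this sum of positive operators vanishes, hence $[K_p,\mathcal{P}(m)]=0$ for every $p$ and $m$; commuting with all three rank-one projectors forces each $K_p$ to be diagonal in the basis $\{\ket{\alpha_\theta},\ket{\beta_\theta},\ket{\gamma}\}$, which (by Corollary~\ref{cor1} and Lemma~\ref{lemma2}) is strong symmetry for the rotated representation. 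Your proposed route via contraction factors $\lambda_a(\theta_0)$ of a twisted transfer matrix would still have to exclude cancellations among the generically complex $\bolm$-summands, which is exactly what the positivity argument sidesteps. If you adopt the paper's lemma, you should still supply the bridge it states rather tersely, namely that non-decay of $F_{U_Z(\theta)}$ is equivalent to the projector-preservation condition for both $\theta'=0$ (where it reduces to Prop.~\ref{SS_cond_and_identity_fidelity}) and $\theta'=\theta$.
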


Having obtained the condition for the $Z$-rotation gate to be implementable on the decohered states, we now look for  
the condition for one-qubit universal computation.  Specifically, we ask if the identity $1$, the $X$, and $Z$-rotation gates can be executed.  
For the first two gates to be executable, the Kraus operators $\mathcal{K}_\alpha$ in Eq.~\eqref{Kraus_and_channel} must be strongly symmetric 
with respect to both the canonical and rotated representations of $\mathbb{Z}_2 \times \mathbb{Z}_2$, which means 
$\mathcal{K}_\alpha = \text{diag}(a_\alpha,b_\alpha,a_\alpha)$ 
in the $S^{z}$-diagonal basis (with a constraint $\sum_{\alpha} |a_\alpha|^2 =\sum_{\alpha} |b_\alpha|^2 =1$).  
For the fidelity of the $X$-rotation gate $\be^{-i\theta X/2}$ not to decay, $K_\alpha$ should have 
the same form $\text{diag}(a_\alpha,b_\alpha,a_\alpha)$ in the $S^{x}$-diagonal basis, which is possible only when $a_{\alpha}=b_{\alpha}$. 
Therefore, to implement both the $Z$-rotation and the $X$-rotation gates $\be^{-i\theta Z/2}$ and $\be^{-i\theta' X/2}$ with high fidelity, 
$\mathcal{K}_\alpha$ should be proportional to the identity $I_3$ for all $\alpha$, i.e., the quantum channel $\mathcal{E}$ 
in Eq.~\eqref{Kraus_and_channel} is identity. This may be summarized as the following statement:
\begin{prop} \label{absence_universal_computation}
When we consider the discrete time-evolution, we can realize the one-qubit universal computation on $\mathcal{E}(\rho)$ 
if and only if 
the channel is trivial: $\mathcal{E}(\rho)=\rho$. 
\end{prop}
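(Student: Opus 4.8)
The plan is to read the statement off from Propositions~\ref{SS_cond_and_identity_fidelity} and \ref{condition_for_z_rotation} plus a short computation on the three-dimensional on-site Hilbert space; essentially all of the content is in the ``only if'' direction. The ``if'' part is immediate: if $\mathcal{E}(\rho)=\rho$, then the decohered resource state is the pure AKLT state, for which Eq.~\eqref{fidelity_result_gs} gives $F_{U_Z(\theta)}=1-\tfrac{1}{2\cdot 3^N}(1-\cos\theta)$ and, by the $Z\to X$, $S^z\to S^x$ replacement noted there, the analogous value for $\be^{-i\theta X/2}$; since the families $\{\be^{-i\theta Z/2}\}_\theta$ and $\{\be^{-i\theta X/2}\}_\theta$ generate $SU(2)$, the pure AKLT state (hence $\mathcal{E}(\rho)$) is a one-qubit universal resource.

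For ``only if'', suppose one-qubit universal computation is available on $\mathcal{E}(\rho)$ for the uncorrelated channel $\mathcal{E}=\circ_i\mathcal{E}_i$ of the form \eqref{Kraus_and_channel}. Universality in particular requires the identity gate together with all the $Z$- and $X$-rotation gates to be realized with the pure-AKLT fidelity. I would then invoke Proposition~\ref{SS_cond_and_identity_fidelity} ($F_I=1$ forces the single-site Kraus operators $\{\mathcal{K}_\alpha\}$ to be strongly symmetric for the canonical representation $\{1,\be^{i\pi S^x},\be^{i\pi S^y},\be^{i\pi S^z}\}$ of $\mathbb{Z}_2\times\mathbb{Z}_2$), Proposition~\ref{condition_for_z_rotation} (non-decay of $F_{U_Z(\theta)}$ for all $\theta$ forces strong symmetry for every ``rotated'' representation $\{1,\be^{i\pi\tilde S^x(\theta)},\be^{i\pi\tilde S^y(\theta)},\be^{i\pi S^z}\}$), and the $x$-axis analogue of Proposition~\ref{condition_for_z_rotation} obtained by interchanging $S^z\leftrightarrow S^x$. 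Concretely, each $\mathcal{K}_\alpha$ must then commute, up to a $g$-dependent phase, with $\be^{i\pi S^z}$, with $\be^{i\pi\,\bol{n}\cdot\bol{S}}$ for every unit vector $\bol{n}$ in the $xy$-plane, and with the corresponding family about the $x$-axis.

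The main step is to solve these constraints. In the $S^z$-eigenbasis $\{\ket{1},\ket{0},\ket{-1}\}$, commuting up to a phase with $\be^{i\pi S^z}=\mathrm{diag}(-1,1,-1)$ (its square being $\mathbf{1}$, so the phase is $\pm1$) makes $\mathcal{K}_\alpha$ block-diagonal between the $m=0$ line and the $m=\pm1$ plane; then commuting up to a phase with $\be^{i\pi\tilde S^x(\theta)}$ for one ``generic'' angle (with $\be^{4i\theta}\neq1$) kills the off-diagonal entry of the $m=\pm1$ block and equates its two diagonal entries, so $\mathcal{K}_\alpha=\mathrm{diag}(a_\alpha,b_\alpha,a_\alpha)$ in the $S^z$-basis. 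The $x$-axis analogue gives $\mathcal{K}_\alpha=\mathrm{diag}(a'_\alpha,b'_\alpha,a'_\alpha)$ in the $S^x$-basis. Equating $a\,\mathbf{1}+(b-a)\ketbra{0_z}{0_z}=a'\,\mathbf{1}+(b'-a')\ketbra{0_x}{0_x}$ with $\ket{0_z}$ and $\ket{0_x}$ linearly independent forces, by a rank count, $a=a'$ and then $b=a$, $b'=a'$; hence $\mathcal{K}_\alpha\propto I_3$ for every $\alpha$. Plugging this back into \eqref{Kraus_and_channel} and using trace preservation collapses the prefactor of $\rho$ to $1$, so $\mathcal{E}_i=\mathrm{id}$ for every $i$ and therefore $\mathcal{E}=\mathrm{id}$.

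The step I expect to be the main obstacle is the penultimate one: making rigorous that strong symmetry with respect to the \emph{whole} one-parameter family of rotated $\mathbb{Z}_2\times\mathbb{Z}_2$ representations (not just a single one) genuinely pins $\mathcal{K}_\alpha$ down to the $(a,b,a)$ form. One must take care of (i) the phase freedom $\be^{i\phi_\alpha(g)}$ permitted in the strong-symmetry condition, which is harmless here only because $\be^{i\pi S^z}$ and the $\be^{i\pi\bol{n}\cdot\bol{S}}$ square to the identity; (ii) non-invertible (e.g.\ rank-one) Kraus operators, which need a separate short argument; and (iii) the fact that one-qubit universality strictly needs only a dense set of rotation angles — but any such set already contains an angle with $\be^{4i\theta}\neq1$, which is all the argument uses. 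Everything after that is routine linear algebra.
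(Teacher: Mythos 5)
Your proposal is correct and follows essentially the same route as the paper: combining Prop.~\ref{SS_cond_and_identity_fidelity}, Prop.~\ref{condition_for_z_rotation}, and its $X$-axis analogue to force $\mathcal{K}_\alpha=\mathrm{diag}(a_\alpha,b_\alpha,a_\alpha)$ in both the $S^z$- and $S^x$-diagonal bases, whence $\mathcal{K}_\alpha\propto I_3$ and $\mathcal{E}=\mathrm{id}$. The extra care you take (the rank argument for the two diagonal forms, the $\pm 1$ phase freedom --- which the paper disposes of via Lemma~\ref{lemma2} in the appendix --- and non-invertible Kraus operators) only fills in details the paper leaves implicit.
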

Of course, in the case of continuous time-evolution, the time interval can be chosen arbitrarily and 
we do not rule out the possibility that noisy AKLT states could work as a one-qubit universal resource 
for an interval sufficiently longer than the time scale for experimentally manipulating qubits.
\subsection{Phase structure from computational-power perspective} 
\label{summary_sec_4}
Now, let us summarize what we conclude from the results in this section.  
In Sec.~\ref{sec:symm_cond_fidelity}, we showed that quantum noise respecting the $\mathbb{Z}_2 \times \mathbb{Z}_2$ symmetry 
$\{ 1, \be^{i\pi S^x}, \be^{i\pi S^y}, \be^{i\pi S^z} \}$ in the strong sense is the necessary and sufficient condition 
for implementing the identity gate (Prop.~\ref{SS_cond_and_identity_fidelity}).  
To implement the $Z$-rotation gate $\be^{-i\theta Z/2}$ in our MBQC scheme, we need a much stronger condition 
as has been shown in Sec.~\ref{Z_rotation_fidelity_and_noise}; if  
a quantum noise satisfies the strong symmetry condition for {\em all} the $\mathbb{Z}_2 {\times} \mathbb{Z}_2$ representations of the form 
$\{ 1, \be^{i\pi \tilde{S^x}(\theta)}, \be^{i\pi \tilde{S^y}(\theta)}, \be^{i\pi S^z}  \}$ (not only for $\theta=0$), 
then we can implement $Z$-rotation gate without decay of the fidelity (Prop.~\ref{condition_for_z_rotation}).
In general,  the capability of high-fidelity one-qubit universal computation within our MBQC protocol is gone in any non-trivial quantum channel 
(Prop.~\ref{absence_universal_computation}).  

Our initial goal was to characterize the SPT order in open quantum systems from the perspective of the MBQC computational power.  
The method using the gate fidelity of the MBQC not only yields the results consistent with those of the previous research \cite{deGroot-T-S-22} 
but also suggests a richer structure in the SPT order in open quantum systems.  
Specifically, as is seen in Fig.~\ref{summary}, we can categorize the noisy AKLT states according to what kind of unitary gates we can implement 
by the MBQC on the decohered states.  
Here, we say that a given gate $U$ can be implemented when the corresponding fidelity $F_{U}$ does not decrease from its value 
in the pure case [e.g., Eq.~\eqref{fidelity_result_gs}].  
According to this scheme, the mixed-state SPT phase in the sense of Ref.~\onlinecite{deGroot-T-S-22} corresponds to the decohered AKLT states 
in which the identity gate can be implemented (see the blue region in Fig.~\ref{summary}).  
Among the dissipation considered here, the Noise 1 and 2, that are strongly symmetric with respect to $\mathbb{Z}_2 {\times} \mathbb{Z}_2$ 
defined by $\{ 1, \be^{i\pi S^x}, \be^{i\pi S^y}, \be^{i\pi S^z} \}$, 
lead the initial AKLT state to this subphase.  

In another phase (the green region) that results from
strongly-symmetric noises (e.g., the Noise 4)
satisfying the condition \eqref{strong_sym_cond_Kraus} both for the
canonical and for {\em another}
representation of $\mathbb{Z}_2 {\times} \mathbb{Z}_2$, the
$Z$-rotation gate is also possible.
The non-local order parameter [see Eq.~\eqref{eqn:3rd-4th-tem-in-FU}] that captures this phase is not the standard string order parameter 
used in the previous treatment \cite{deGroot-T-S-22}.  
Phases capable of the universal MBQC (the red point in Fig.~\ref{summary}) result only from a trivial channel.  
This way, the gate fidelity reveals the rich structure in the decohered AKLT states that cannot be captured 
by the string order parameter alone.  

On the other hand, there remain delicate problems when trying to characterize the SPT order in open quantum systems 
based solely on the computational power.
First of all, as we have seen in Sec.~\ref{sec:numerics}, the fidelity-based approach cannot capture the non-triviality of the SPT phases 
protected by symmetries (e.g., time-reversal) other than the on-site $\mathbb{Z}_2\times \mathbb{Z}_2$, although, in the pure case,  
the Haldane phase is protected also by time-reversal symmetry.   

Another immediate problem is related to the MBQC scheme itself.  
As repeatedly noted, we have only considered a simple MBQC protocol which has been originally introduced \cite{Brennen-M-08} for the pure AKLT state.  
Therefore, with more elaborated protocols (e.g., the one proposed in Ref.~\onlinecite{Stephen-W-P-W-R-17}), 
the one-qubit universal computation might be made possible even on the AKLT state subject to non-trivial noises. 
If this is the case, the picture presented in Fig.~\ref{summary} must be modified accordingly.    

To shed some light on the above issue, let us try the following strategy.  
In the case of pure states, the MPS-based approach to the SPT phases has been a great success; most of the important features 
can be understood in terms of the properties of the MPS tensor representing the non-trivial state \cite{Pollmann-T-B-O-10,Schuch-PG-C-11,Chen-G-W-11}. 
Also, it is known that the computational power of {\em any} ground states in the Haldane phase derives from the particular common structure 
of the MPS tensor \cite{Else-S-B-D-12}, 
and the MBQC protocol improved accordingly ensures the one-qubit universal computation on all of them \cite{Stephen-W-P-W-R-17}.   
To apply a similar strategy to mixed states, we first need to represent a given mixed state 
in the form of the matrix-product operator (MPO) \cite{Cirac-G-S-V-RMP-21}, in which the density operator is represented as 
a product of the MPO tensors: \raisebox{-2ex}{\includegraphics[scale=0.2,clip]{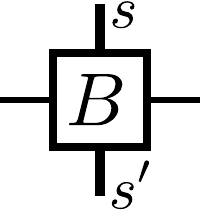}} (with $s$ and $s^{\prime}$ denoting 
the physical states $|s\rangle$ and $\langle s^{\prime}|$, respectively).      
Then, a natural question arises whether it is possible to extract the non-triviality directly from the MPO tensor of the mixed state.  
In this respect, we can rephrase Prop.~\ref{SS_cond_and_identity_fidelity} in the language of the MPO tensor as follows:
\begin{prop}\label{MPOtensor_and_computaitonal_power}
Let $\ket{\psi}$ be a ground state in the Haldane phase and $\mathcal{E}$ be an uncorrelated noise.  
Then, the diagonal elements of the MPO tensors representing the states $\ketbra{\psi}{\psi}$ and $\mathcal{E}(\ketbra{\psi}{\psi})$ in the basis of the zero-eigenvectors of $ S^x, S^y,$ and $S^z$
are the same, i.e., share the same computational power for the identity gate, 
if and only if $\mathcal{E}$ is strongly symmetric with respect to the canonical representation 
$\{ 1, \be^{i\pi S^x}, \be^{i\pi S^y}, \be^{i\pi S^z} \}$ of the $\mathbb{Z}_2\times \mathbb{Z}_2$-symmetry. 
\end{prop}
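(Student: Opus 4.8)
The plan is to recognize Prop.~\ref{MPOtensor_and_computaitonal_power} as the matrix‑product‑operator (MPO) restatement of Prop.~\ref{SS_cond_and_identity_fidelity}. The bridge is the elementary observation that the MBQC protocol for the identity gate measures \emph{every} bulk spin‑$1$ in the basis $\{\ket{\alpha_0},\ket{\beta_0},\ket{\gamma}\}$ of Eqs.~\eqref{eqn:MBQC-basis-1}--\eqref{eqn:MBQC-basis-3} at $\theta=0$, and that this triple is precisely the common eigenbasis of the canonical generators $\{\be^{i\pi S^x},\be^{i\pi S^y},\be^{i\pi S^z}\}$ of $\mathbb{Z}_2\times\mathbb{Z}_2$, each basis vector carrying a distinct non‑trivial character (it is the $+1$ eigenvector of exactly one of the three generators). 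Hence, writing $\ketbra{\psi}{\psi}$ and $\tilde\rho=\mathcal{E}(\ketbra{\psi}{\psi})$ as MPOs with site tensors $M_i$ and $\tilde M_i$, the post‑measurement operator $\bolP(\bolm)\,\tilde\rho\,\bolP(\bolm)$ in Eq.~\eqref{definition_rho_U} at $\theta=0$ factorizes on the bulk into $\big(\bigotimes_i\ketbra{m_i}{m_i}\big)$ times the contraction of the two edge tensors with the diagonal MPO blocks $\{\tilde M_i^{[m_i,m_i]}\}$; since an uncorrelated local channel never enlarges the MPO bond dimension, these blocks live in the same bond space as those of $\ketbra{\psi}{\psi}$, so ``being the same'' is meant literally. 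This already shows that $\rho_I$, and therefore the fidelity $F_I$ of Eq.~\eqref{def_fidelity}, depends on $\tilde\rho$ only through $\{\tilde M_i^{[m,m]}\}$, which is also the content of the parenthetical ``share the same computational power for the identity gate.''

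For the ``only if'' direction I would combine this dictionary with Prop.~\ref{SS_cond_and_identity_fidelity}: if the diagonal MPO blocks of $\tilde\rho$ and $\ketbra{\psi}{\psi}$ coincide, then $\rho_I$—and hence $F_I$—takes its pure‑state value $1$, so $\mathcal{E}$ is strongly symmetric with respect to the canonical representation. For a general Haldane‑phase ground state rather than the AKLT state one either invokes the straightforward generalization of Prop.~\ref{SS_cond_and_identity_fidelity}, or argues directly: matching the MPO coefficients of $\tilde M_i^{[m,m]}$ and $M_i^{[m,m]}=\Gamma^{[m]}\otimes\bar\Gamma^{[m]}$ and using the linear independence of $\{\Gamma^{[s]}\}$ that follows from injectivity of the Haldane‑phase MPS tensor forces $\bra{m}K_\alpha\ket{n}=0$ for $n\neq m$, i.e.\ every Kraus operator is diagonal in $\{\ket{\alpha_0},\ket{\beta_0},\ket{\gamma}\}$.

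For the ``if'' direction I would show that strong symmetry forces each local Kraus operator to be diagonal in this basis, which then makes the diagonal MPO blocks manifestly invariant. The strong‑symmetry condition \eqref{strong_sym_cond_Kraus} reads $U_g K_\alpha U_g^\dagger=\be^{i\phi(g)}K_\alpha$ for all $\alpha$ with a \emph{common} phase; evaluated between common eigenvectors this is $\bra{m}K_\alpha\ket{n}\big(\chi_m(g)\chi_n(g)-\be^{i\phi(g)}\big)=0$ (the $\mathbb{Z}_2\times\mathbb{Z}_2$ characters being real). Trace preservation $\sum_\alpha K_\alpha^\dagger K_\alpha=\mathbf{1}$ prevents all Kraus operators from being purely off‑diagonal, so some diagonal matrix element is nonzero; this forces $\phi\equiv 0$, and then—because the three characters separate the three basis vectors—$\bra{m}K_\alpha\ket{n}=0$ for $m\neq n$. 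With $K_\alpha=\mathrm{diag}(a_\alpha,b_\alpha,c_\alpha)$ in this basis and $\mathcal{E}=\circ_i\mathcal{E}_i$ uncorrelated, inserting Eq.~\eqref{PG_sym} (equivalently, using the MPO tensor‑product decomposition of Appendix~\ref{sec:MPO_tensor_product_decomposition}) gives $\tilde M_i^{[m,m]}=\big(\sum_\alpha|(K_\alpha)_{mm}|^2\big)\,\Gamma^{[m]}\otimes\bar\Gamma^{[m]}=M_i^{[m,m]}$, since $\sum_\alpha|(K_\alpha)_{mm}|^2=(\sum_\alpha K_\alpha^\dagger K_\alpha)_{mm}=1$.

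I expect the main obstacle to be the structural step in the last paragraph: ruling out a non‑trivial phase $\phi(g)$ and deducing diagonality of the Kraus operators requires combining the separating property of the $\mathbb{Z}_2\times\mathbb{Z}_2$ characters with trace preservation in the correct order, and one must be careful to phrase ``the MPO tensors are the same'' at the level of the (un‑enlarged) diagonal blocks rather than only up to a gauge transformation on the doubled bond index.
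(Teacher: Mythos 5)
Your proposal is correct and follows essentially the same route as the paper's proof in Appendix~\ref{sec:MPO_tensor_product_decomposition}: strong symmetry is identified with diagonality of the local Kraus operators in the wire basis, the completeness relation $\sum_\alpha K_\alpha^\dagger K_\alpha = 1$ then fixes the diagonal MPO blocks $\sigma_s\otimes\sigma_s^*$ (or $\Gamma^{[s]}\otimes\bar{\Gamma}^{[s]}$), and the converse uses linear independence of the tensor-product basis to force the off-diagonal Kraus matrix elements to vanish. The one step to tighten is your claim that trace preservation by itself forbids purely off-diagonal Kraus sets --- false in general (e.g., permutation matrices) --- but it does hold here because a common nontrivial phase $\be^{i\phi(g)}$ confines the allowed off-diagonal support to a single pair of entries, leaving a vanishing diagonal entry in $\sum_\alpha K_\alpha^\dagger K_\alpha$; this is exactly the content of the paper's Lemma~\ref{lemma2} (Case~2), which you could simply cite.
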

The proof is given in Appendix~\ref{sec:MPO_tensor_product_decomposition}.   
This guarantees that if a quantum channel satisfies the strong symmetry condition (for the canonical representation 
of $\mathbb{Z}_2\times \mathbb{Z}_2$), the identity gate can be implemented with fidelity $1$ on {\em any} decohered states resulting from 
the ground states in the Haldane phase.   
The scope of this Prop.~\ref{MPOtensor_and_computaitonal_power} is limited to the mixed states 
originating from pure SPT ground states, whereas we believe that Prop.~\ref{MPOtensor_and_computaitonal_power}, 
together with other results obtained here, serves as a stepping stone in understanding interacting many-body phases in open quantum systems 
from the quantum-computational perspectives.  Understanding the relation between the MPO-based picture presented here 
and the approaches based on the (super)MPS in the doubled Hilbert space \cite{Nieuwenburg-H-14,Verissimo-L-O-23,Ma-T-24} 
is an important open question.  

\begin{figure}[tbh]
\begin{center}
\includegraphics[width=\columnwidth,clip]{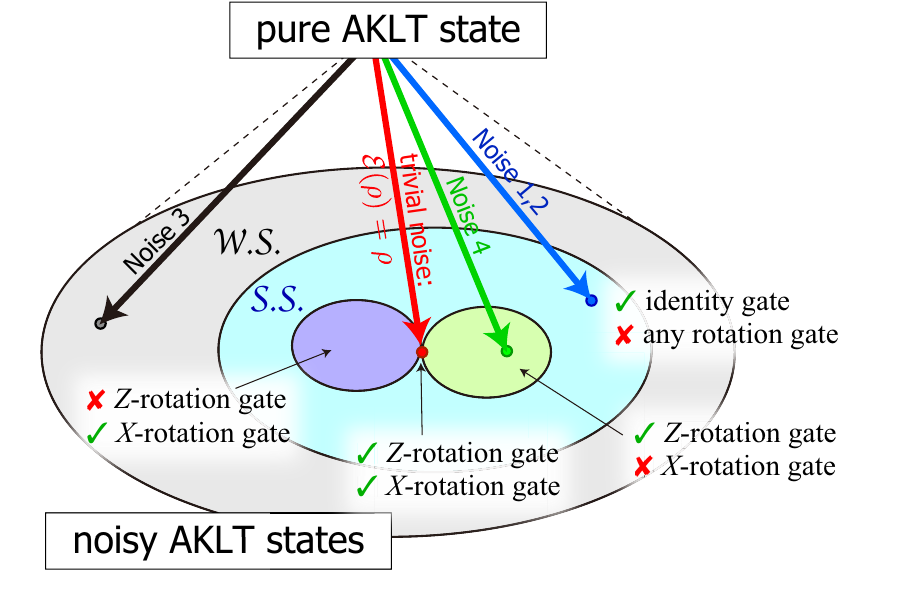}
\caption{%
Discrete time-evolution of the pure SPT (i.e., AKLT) state by the four types of quantum noises (see Table~\ref{noises_table}). 
The ``S.S.'' and ``W.S.'' respectively mean that the corresponding noises are strongly and weakly symmetric, respectively.  
Even with strongly-symmetric noises, the noisy SPT (AKLT) states are further divided into several subcategories 
according to the ability of carrying out quantum gates. If we require that both the $X$ and $Z$-rotation gates should be executable 
with high fidelity, only a trivial noise $\mathcal{E}(\rho)=\rho$ is allowed.  
\label{summary}}
\end{center}
\end{figure}

\section{Conclusion and Outlook}
\label{sec:conclusion}

The gate fidelity, which is related to the stability of the edge states, is a quantitative measure in quantum information science 
that assesses how accurately we can execute the MBQC by quantifying the ability to manipulate the qubit.  
In this paper, we proposed to use the gate fidelity of the MBQC as a detector of the SPT signature in generic mixed states.  
We observed that, depending on the unitary gates under consideration, the gate fidelity consists of different string-like operators 
which are associated with the on-site $\mathbb{Z}_2 \times \mathbb{Z}_2$ symmetry.  

First, we defined the SPT order in open quantum system by the capability of executing the identity gate with the unit fidelity, 
thereby obtaining results consistent with those of the existing approaches.  
In particular, we established (Prop.~\ref{SS_cond_and_identity_fidelity}) that the high-fidelity realization of the identity gate in the decohered state 
is equivalent to the (uncorrelated) quantum noise respecting the $\mathbb{Z}_2 \times \mathbb{Z}_2$ symmetry in the strong sense. 
This is the translation of the symmetry condition for quantum channels that preserve the SPT order into the language 
of the MBQC computational power.  
To put it another way, a quantum noise respecting the protecting (on-site) symmetry in the strong sense preserves not only 
the corresponding string order parameters but also the computational power of the initial SPT phase.  

We then investigated the condition for the quantum noises under which we can perform the $Z$-rotation gate 
$U=U_{Z}(\theta)=\be^{-i\theta Z/2}$ on noisy AKLT states 
and found that, to implement $U_{Z}(\theta \neq 0)$ with high fidelity, it is not sufficient that 
the quantum noise satisfies the strong symmetry condition only for a particular representation of $\mathbb{Z}_2 \times \mathbb{Z}_2$ 
(see Prop.~\ref{condition_for_z_rotation}).  
Moreover, to realize one-qubit universal computation on the decohered state, a much stronger condition is required.  
In fact, we showed that, as far as we stick to the simplest MBQC protocol used in this paper, 
no non-trivial quantum noise exists that allows the implementation of one-qubit universal computation 
on the noisy AKLT states (see Prop.~\ref{absence_universal_computation})   
\footnote{%
We do not exclude here the possibility that MBQC on noisy AKLT states could potentially be one-qubit universal at the practical level {\em for a finite duration} even in this simple MBQC scheme. To investigate what kind of quantum channels leave the noisy AKLT states one-qubit universal would be one of the most practically important problems 
from the viewpoint of both condensed matter physics and quantum information.  }.  
These may suggest that, as shown in Fig.~\ref{summary}, 
the decohered ``SPT'' phase defined by the perfect performance of the identity gate (or equivalently, by the persisting string 
order parameters) is further decomposed into subphases according to 
the capability of implementing other gates $U$ with reasonably high fidelity.

Our MBQC-based approach to the SPT order in open quantum systems using the gate fidelity yielded a picture generally consistent with those from other order-parameter-based approaches \cite{deGroot-T-S-22,Paszko-R-S-P-23}. Nevertheless, there still remain a few important questions. 
First of all, as shown in Secs.~\ref{sec:results} and \ref{sec:effect-noise}, the gate fidelity is always related to (generalized) string-order parameters 
associated with the on-site symmetry and the behavior of the fidelity has little to do with the presence or absence of the time-reversal symmetry. 
Therefore, for the moment, we do not know how to formulate the mixed-state SPT phases associated to the time-reversal symmetry 
by the computational power.    
Another problem is related to the choice of the MBQC protocol.   
Although Prop.~\ref{absence_universal_computation} forbids the one-qubit universal computation in the presence of non-trivial noises, 
we cannot exclude the possibility that improved MBQC protocols will change the conclusion of Prop.~\ref{absence_universal_computation} 
thereby modifying the global picture presented in Fig.~\ref{summary}.   
Finally, the relation between our formulation and other ones (e.g., the ones based on the doubled Hilbert space 
in Refs.~\onlinecite{Nieuwenburg-H-14,Verissimo-L-O-23,Ma-T-24}) 
is not fully understood, though our definition of the SPT order in open quantum systems leads to conclusions 
consistent with those in, e.g., Ref.~\onlinecite{deGroot-T-S-22}.

A promising way to answer these questions (especially, the latter two) may be to directly extract the information on the computational power 
from the MPO tensor of the mixed state in a way parallel to what we did to understand the uniform computational power 
of the SPT phases \cite{Else-S-B-D-12,Stephen-W-P-W-R-17}.  
In this respect, Prop.~\ref{MPOtensor_and_computaitonal_power} and its Corollary~\ref{identity_gate_and_MPO} 
may be the first step towards the exploration into these interesting open problems.  

There are experimental realizations of the AKLT state and its application for the quantum teleportation on superconduting qubits \cite{Smith-C-W-G-23} and the MBQC on photonic AKLT state \cite{Kaltenbaek-L-Z-B-R-10} with high fidelity in both experiments.  
Also, the Haldane phase has been realized \cite{Sompet-et-al-SPT-Hubbard-22} in a ultra-cold Fermi gas in which the real-space measurement 
of non-local string correlations is possible.  
We believe that our approach to formulate the interacting SPT order in open quantum systems from the viewpoint 
of the computational power holds significant importance not only in the condensed-matter-physics contexts 
but also in practical applications, especially in the rapidly advancing field of quantum technology.

\section*{Acknowledgements}
The authors would like to thank Y.~Nakata and A.~Turzillo for helpful discussions and K.~Fujii for suggesting us the use of the gate fidelity.
One of the authors (R.M.) is supported by JST, the establishment of university fellowships towards the creation of science technology innovation, Grant Number JPMJFS2123. 
The author (K.T.) is supported in part by JSPS KAKENHI Grant No. 18K03455 and No. 21K03401.  

\appendix
\section{Derivation of gate fidelity $F_{U_{Z}(\theta)}$}
\label{sec:deriv-FU}
\subsection{$ZZ$-term}
The calculation of the $Z_{\text{in}} Z_{\text{out}}$-term starts from Eq.~\eqref{calculation_of_ZZ_term_1}:
\begin{equation}
\begin{split}
&\displaystyle \text{Tr}_{\text{in, out}}\left[\rho_{U_Z(\theta)} Z_{\text{in}} Z_{\text{out}} \right]    \\ 
&= 
\widetilde{\text{Tr}} \sum_{\boldsymbol{m}} \left[ B^{(\boldsymbol{m})}_{\text{out}} \boldsymbol{P}(\boldsymbol{m}) 
\tilde{\rho} \, \boldsymbol{P}(\boldsymbol{m}) B^{(\boldsymbol{m}) \dagger }_{\text{out}}  Z_{\text{in}} Z_{\text{out}} \right]    \\
&=
\widetilde{\text{Tr}} \sum_{\boldsymbol{m}} \left[ \tilde{\rho} \, Z_{\text{in}}  \boldsymbol{P}(\boldsymbol{m}) B^{(\boldsymbol{m}) \dagger }_{\text{out}}  Z_{\text{out}}  B^{(\boldsymbol{m})}_{\text{out}} \boldsymbol{P}(\boldsymbol{m})  \right] \; ,
\end{split}
\label{calculation_of_ZZ_term_1b}
\end{equation}
where $\widetilde{\text{Tr}}$ denotes the trace over the entire system consisting of the bulk spin-1 chain and the two spin-1/2s at the edges. 
Without loss of generality, we can postulate the by-product operator $B^{(\boldsymbol{m})}_{\text{out}}$ as:
\begin{equation}
B^{(\boldsymbol{m})}_{\text{out}} = X_{\text{out}}^{r^{(\bolm)}_X+1}Z_{\text{out}}^{r^{(\bolm)}_Z+1}  \; ,
\label{eqn:by-product-op-postulated}
\end{equation} 
where $r^{(\bolm)}_X$ is the number of times we get one of the results $\{ \ket{\alpha_{\theta}}, \ket{\beta_{\theta}}, \ket{\alpha_{0}}, \ket{\beta_{0}} \}$ 
in successive measurements, and $r^{(\bolm)}_Z$ is that for $\{ \ket{\beta_{\theta}}, \ket{\beta_{0}}, \ket{\gamma} \}$.  
For later convenience, we also introduce $r^{(\bolm)}_Y$ which counts how many times we obtain one of 
$\{ \ket{\alpha_{\theta}}, \ket{\alpha_{0}}, \ket{\gamma} \}$.  
By definition, these three integers sum to an even integer, i.e., 
\begin{equation}
(-1)^{r^{(\bolm)}_X+r^{(\bolm)}_Y+r^{(\bolm)}_Z} = 1  \; .
\label{eqn:rX-rY-rZ-constraint}
\end{equation}

Then, we can move one of the by-product operators to annihilate them leaving a sign factor $(-1)^{r^{(\bolm)}_X+1}$:
\begin{equation}
B^{(\boldsymbol{m}) \dagger }_{\text{out}}  Z_{\text{out}}  B^{(\boldsymbol{m})}_{\text{out}} \boldsymbol{P}(\boldsymbol{m}) 
= (-1)^{r^{(\bolm)}_X+1} Z_{\text{out}} \boldsymbol{P}(\boldsymbol{m})   \; .
\end{equation}
To rewrite the sign factor $(-1)^{r^{(\bolm)}_{X}}$ in terms of the string operator, we note that the states 
$\{ \ket{\alpha_{\theta}},\ket{\beta_{\theta}},\ket{\gamma} \}$ are the eigenstates of $\be^{i \pi S^{z}}$ for {\em any} $\theta$ 
(including $\theta=0$). 
Specifically, we we use the following relations:
\begin{subequations}
\begin{align}
\begin{split}
& \mathcal{P}_j (m_j) \be^{i \pi S_j^z} = 
\begin{cases}
  - \ketbra{\alpha_{\theta}}{\alpha_{\theta}}_{j} &  m_{j} =\alpha_{\theta} \\
  - \ketbra{\beta_{\theta}}{\beta_{\theta}}_{j} & m_{j} =\beta_{\theta}\\
  \ket{\gamma}\bra{\gamma}_{j}  & m_{j} =\gamma
\end{cases}
 \\
& = \be^{i \pi S_j^z} \mathcal{P}_j (m_j)  \;, 
\end{split}
\\
\begin{split}
& \mathcal{P}'_j (m_j) \be^{i \pi S_j^z} =
\begin{cases}
  - \ketbra{\alpha_{\theta=0}}{\alpha_{\theta=0}}_{j} &  m_{j} =\alpha_{\theta=0} \\
  - \ketbra{\beta_{\theta=0}}{\beta_{\theta=0}}_{j} & m_{j} =\beta_{\theta=0}\\
  \ket{\gamma}\bra{\gamma}_{j}  & m_{j} =\gamma
\end{cases} \\
& = \be^{i \pi S_j^z} \mathcal{P}'_j (m_j)  \; .
\end{split}
\end{align} 
\label{eigen_eq_3}
\end{subequations}
From Eqs.~\eqref{eigen_eq_3}, we immediately see that the sign factor originating from $X_{\text{out}}^{r^{(\bolm)}_X+1}$ in  
the by-product operator is now replaced with the action of the string operator:
\begin{equation}
(-1)^{r^{(\bolm)}_X+1} Z_{\text{out}} \boldsymbol{\mathcal{P}}(\boldsymbol{m}) 
= - \boldsymbol{\mathcal{P}}(\boldsymbol{m}) \left( \prod_{j=1}^{N} \be^{i\pi S_j^z}  \right) Z_{\text{out}} \; . 
\label{eqn:sign-to-string}
\end{equation}
As the $\boldsymbol{m}$-dependence appears only in $\boldsymbol{\mathcal{P}}(\boldsymbol{m})$, we can explicitly carry out 
the summation over all the possible measurement outcomes $\{ \boldsymbol{m} \}$ to obtain the final result \eqref{fidelity_ZZ_term}:
\begin{equation}
\begin{split}
&\displaystyle \text{Tr}_{\text{in, out}}\left[\rho_{U_Z(\theta)} Z_{\text{in}} Z_{\text{out}} \right]   \\
&= 
- \widetilde{\text{Tr}} \sum_{\boldsymbol{m}} \left[ \tilde{\rho} \,  Z_{\text{in}}  \bolP(\boldsymbol{m})^{2} 
\left( \prod_{j=1}^{N} \be^{i\pi S_j^z}  \right) Z_{\text{out}}  \right]  \\
&= - \widetilde{\text{Tr}}  \left[ \tilde{\rho} \,  Z_{\text{in}}    
\left( \sum_{\boldsymbol{m}}   
\boldsymbol{P}(\boldsymbol{m}) \right) \left( \prod_{j=1}^{N}  \be^{i\pi S_j^z} \right)  Z_{\text{out}}  \right]   \\
&= - \widetilde{\text{Tr}}  \left[ 
\tilde{\rho} \,  Z_{\text{in}}  \left( \prod_{j=1}^{N}   \be^{i\pi S_j^z} \right)  Z_{\text{out}}  \right]   \; ,
\end{split}
\end{equation}
where we have utilized:
\begin{equation}
\begin{split}
& \sum_{\boldsymbol{m}} \bolP(\bolm) \\
& = \sum_{\boldsymbol{m}} P'_{N}(m_N)\cdots P'_{l(\boldsymbol{m})+1}(m_{l(\boldsymbol{m}) +1})   \\
& \qquad \quad   \times P_{l(\boldsymbol{m})}(m_{l(\boldsymbol{m}) }) \cdots P_{1}(m_1)  = I_3 \; .
\end{split}
\end{equation}

\subsection{The other terms}
The third and fourth terms are calculated as:
\begin{subequations}
\begin{equation}
\begin{split}
&\displaystyle{\text{Tr}_{\text{in, out}}}  \left[\rho_{U_Z(\theta)} X_{\text{in}} \, \be^{- i \theta Z_{\text{out}} /2} X_{\text{out}} \, 
\be^{i \theta Z_{\text{out}} /2} \right]  \\ 
&= \cos \theta\, 
\widetilde{\text{Tr}} \sum_{\boldsymbol{m}} \left[ \tilde{\rho} X_{\text{in}}  \boldsymbol{\mathcal{P}} (\boldsymbol{m}) 
B^{(\boldsymbol{m}) \dagger }_{\text{out}}  X_{\text{out}}  B^{(\boldsymbol{m})}_{\text{out}} 
\boldsymbol{\mathcal{P}} (\boldsymbol{m})  \right]  \\
& \phantom{=} 
+ \sin \theta\, 
\widetilde{\text{Tr}} \sum_{\boldsymbol{m}} \left[ \tilde{\rho} X_{\text{in}}  \boldsymbol{\mathcal{P}} (\boldsymbol{m}) 
B^{(\boldsymbol{m}) \dagger }_{\text{out}}  
Y_{\text{out}}  B^{(\boldsymbol{m})}_{\text{out}} \boldsymbol{\mathcal{P}} (\boldsymbol{m})  \right]  
\end{split}
\label{eqn:FU-3rd-term}
\end{equation}
and 
\begin{equation}
\begin{split}
&\displaystyle{\text{Tr}_{\text{in, out}}}  \left[\rho_{U_Z(\theta)} 
X_{\text{in}}Z_{\text{in}} \, \be^{-i\theta Z_{\text{out}}/2} X_{\text{out}} Z_{\text{out}} \, \be^{i\theta Z_{\text{out}}/2}   \right] \\ 
&=-  \displaystyle{\text{Tr}_{\text{in, out}}}  \left[\rho_{U_Z(\theta)} Y_{\text{in}} \, 
\be^{- i \theta Z_{\text{out}} /2} Y_{\text{out}} \, \be^{i \theta Z_{\text{out}} /2} \right]  \\
&=  \sin \theta\, 
\widetilde{\text{Tr}} \sum_{\boldsymbol{m}} \left[ \tilde{\rho} \, Y_{\text{in}}  \boldsymbol{\mathcal{P}} (\boldsymbol{m}) 
B^{(\boldsymbol{m}) \dagger }_{\text{out}}  
X_{\text{out}}  B^{(\boldsymbol{m})}_{\text{out}} \boldsymbol{\mathcal{P}} (\boldsymbol{m})  \right] \\
& \phantom{=} 
- \cos \theta\, 
\widetilde{\text{Tr}} \sum_{\boldsymbol{m}} \left[ \tilde{\rho}\, Y_{\text{in}}  \boldsymbol{\mathcal{P}} (\boldsymbol{m}) 
B^{(\boldsymbol{m}) \dagger }_{\text{out}}  Y_{\text{out}}  B^{(\boldsymbol{m})}_{\text{out}} 
\boldsymbol{\mathcal{P}} (\boldsymbol{m})  \right]   \;. 
\end{split}
\label{eqn:FU-4th-term}
\end{equation}
\end{subequations}
As before, we move one of the by-product operators $B^{(\boldsymbol{m})}_{\text{out}}$ 
and $B^{(\boldsymbol{m}) \dagger }_{\text{out}}$ [see Eq.~\eqref{eqn:by-product-op-postulated}] to make them pair-annihilate: 
\begin{equation}
\begin{split}
& B^{(\boldsymbol{m}) \dagger }_{\text{out}}  X_{\text{out}}  B^{(\boldsymbol{m})}_{\text{out}}
= (-1)^{r_{Z}^{(\bolm)}+1} X_{\text{out}}   \\
& B^{(\boldsymbol{m}) \dagger }_{\text{out}}  Y_{\text{out}}  B^{(\boldsymbol{m})}_{\text{out}}
= (-1)^{r_{X}^{(\bolm)}+r_{Z}^{(\bolm)}} Y_{\text{out}}  
= (-1)^{r_{Y}^{(\bolm)}} Y_{\text{out}}     \; ,
\end{split}
\end{equation}
where we have used Eq.~\eqref{eqn:rX-rY-rZ-constraint}.   
Now we use the same trick as in \eqref{eqn:sign-to-string}.  Specifically, we use the followings to rewrite the sign factors appearing 
when the by-product operator and the Pauli operators are exchanged:
\begin{subequations}
\begin{align}
& \mathcal{P}_j (m_j) \be^{i \pi \tilde{S}_i^x(\theta)} = 
\begin{cases}
   \ketbra{\alpha_{\theta}}{\alpha_{\theta}}_{j} &  m_{j} =\alpha_{\theta} \\
  - \ketbra{\beta_{\theta}}{\beta_{\theta}}_{j} & m_{j} =\beta_{\theta}\\
  - \ket{\gamma}\bra{\gamma}_{j}  & m_{j} =\gamma
\end{cases}
\\
& \mathcal{P}'_j (m_j) \be^{i \pi S_j^x} =
\begin{cases}
   \ketbra{\alpha_{\theta=0}}{\alpha_{\theta=0}}_{j} &  m_{j} =\alpha_{\theta=0} \\
  - \ketbra{\beta_{\theta=0}}{\beta_{\theta=0}}_{j} & m_{j} =\beta_{\theta=0}\\
  - \ket{\gamma}\bra{\gamma}_{j}  & m_{j} =\gamma    \; .
\end{cases} 
\end{align} 
\label{eigen_eq_1}
\end{subequations}
and similar relations for $\tilde{S}_i^y(\theta)$ and $S_{i}^{y}$: 
\begin{subequations}
\begin{align}
& \mathcal{P}_j (m_j) \be^{i \pi \tilde{S}_i^y(\theta)} = 
\begin{cases}
  - \ketbra{\alpha_{\theta}}{\alpha_{\theta}}_{j} &  m_{j} =\alpha_{\theta} \\
   \ketbra{\beta_{\theta}}{\beta_{\theta}}_{j} & m_{j} =\beta_{\theta}\\
  - \ket{\gamma}\bra{\gamma}_{j}  & m_{j} =\gamma
\end{cases}
\\
& \mathcal{P}'_j (m_j) \be^{i \pi S_j^y} =
\begin{cases}
  - \ketbra{\alpha_{\theta=0}}{\alpha_{\theta=0}}_{j} &  m_{j} =\alpha_{\theta=0} \\
   \ketbra{\beta_{\theta=0}}{\beta_{\theta=0}}_{j} & m_{j} =\beta_{\theta=0}\\
  - \ket{\gamma}\bra{\gamma}_{j}  & m_{j} =\gamma    \; .
\end{cases} 
\end{align} 
\label{eigen_eq_2}
\end{subequations}
Note that $\ket{\alpha_{\theta} }$ and $\ket{\beta_{\theta} }$ are no longer the eigenstates of $\be^{i \pi S_i^{x,y}}$.  
Therefore, in contrast to the case of the $Z_{\text{in}} Z_{\text{out}}$-term, we must use two different operators 
$\be^{i \pi \tilde{S}_i^{x,y}(\theta)}$ (twisted) and $\be^{i \pi S_i^{x,y}}$ (untwisted) respectively 
for $\mathcal{P}_j$ and $\mathcal{P}^{\prime}_j$ in order to have the correct sign factors [see Eq.~\eqref{eigen_eq_3}].  
To be specific, we have the following expressions:
\begin{subequations}
\begin{equation}
\begin{split}
& B^{(\boldsymbol{m}) \dagger }_{\text{out}}  X_{\text{out}}  B^{(\boldsymbol{m})}_{\text{out}}
= (-1)^{r^{(\bolm)}_Z+1} X_{\text{out}} \boldsymbol{\mathcal{P}}(\boldsymbol{m}) \\
& = - \bolP (\boldsymbol{m})
\mathcal{O}^{(x)}_{\text{str}}(\theta; l(\bolm))
 X_{\text{out}} \; 
\end{split}
\label{eqn:sign-to-string-2} 
\end{equation}
and
\begin{equation}
\begin{split}
& B^{(\boldsymbol{m}) \dagger }_{\text{out}}  Y_{\text{out}}  B^{(\boldsymbol{m})}_{\text{out}}
= (-1)^{r^{(\bolm)}_Y} Y_{\text{out}} \boldsymbol{\mathcal{P}}(\boldsymbol{m}) \\
& = \bolP (\boldsymbol{m})
\mathcal{O}^{(y)}_{\text{str}}(\theta; l(\bolm))
 Y_{\text{out}} \; ,
\end{split}
\label{eqn:sign-to-string-3} 
\end{equation}
where we have introduced the following generalized string operator:
\begin{equation}
\mathcal{O}^{(a)}_{\text{str}}(\theta; l) := 
\left( \prod_{j=1}^{l} \be^{i \pi \tilde{S}_j^a (\theta)} \right) \left(\prod_{k=l +1}^{N} \be^{i \pi S_k^a } \right)  
\quad (a=x,y)  \; .
\end{equation}
\end{subequations}
Note that the string operator is switched from the twisted one to the untwisted one at the site $l(\bolm)$.  
Plugging Eqs.~\eqref{eqn:sign-to-string-2} and \eqref{eqn:sign-to-string-3} into \eqref{eqn:FU-3rd-term} and \eqref{eqn:FU-4th-term},  
we obtain:
\begin{subequations}
\begin{equation}
\begin{split}
&\displaystyle{\text{Tr}_{\text{in, out}}}  \left[\rho_{U_Z(\theta)} X_{\text{in}} \,  
\be^{-i \theta Z_{\text{out}} /2} X_{\text{out}} \, \be^{i \theta Z_{\text{out}} /2} \right]  \\
&= - \widetilde{\text{Tr}} \sum_{\boldsymbol{m}} \left[ \tilde{\rho} \, X_{\text{in}}  \bolP(\bolm) 
\mathcal{O}^{(x)}_{\text{str}}(\theta; l(\bolm) ) 
X_{\text{out}}  \right] \text{cos} \theta  \\
& \phantom{=}\;\;  
+ \widetilde{\text{Tr}} \sum_{\boldsymbol{m}} \left[ \tilde{\rho} \, X_{\text{in}}  \boldsymbol{P}(\boldsymbol{m}) 
\mathcal{O}^{(y)}_{\text{str}}(\theta; l(\bolm))
Y_{\text{out}}  \right] \text{sin} \theta \label{XX_term}
\end{split}
\end{equation}
and 
\begin{equation}
\begin{split}
&\text{Tr}_{\text{in, out}} \left[\rho_{U_Z(\theta)} X_{\text{in}} Z_{\text{in}} \, 
\be^{-i \theta Z_{\text{out}} /2} Z_{\text{out}} X_{\text{out}} \, \be^{i \theta Z_{\text{out}} /2} \right]   \\
&= - \widetilde{\text{Tr}} \sum_{\boldsymbol{m}} \left[ \tilde{\rho} \,  Y_{\text{in}}  \boldsymbol{P}(\boldsymbol{m}) 
\mathcal{O}^{(x)}_{\text{str}}(\theta; l(\bolm))  
X_{\text{out}}  \right] \text{sin} \theta   \\
& \phantom{=} \;\; - \widetilde{\text{Tr}} \sum_{\boldsymbol{m}} \left[ \tilde{\rho} \, Y_{\text{in}}  \boldsymbol{P}(\boldsymbol{m}) 
\mathcal{O}^{(y)}_{\text{str}}(\theta; l(\bolm))  
Y_{\text{out}}  \right] \text{cos} \theta \label{XZ_term} \;  .
\end{split}
\end{equation}
\end{subequations}
Combining these two reproduces Eq.~\eqref{eqn:3rd-4th-tem-in-FU}.   

\section{Proof of Prop.~\ref{condition_for_z_rotation}}
\label{sec:proof-Prop2}
In order to prove Prop.~\ref{condition_for_z_rotation}, we firstly introduce the following lemma:
\begin{lem}\label{lemma}
A quantum channel described by $\{ K_\alpha \}$ is strongly symmetric for the representation 
$\mathbb{Z}_2 \times \mathbb{Z}_2 = \{ 1, \be^{i\pi \tilde{S}^x(\theta)}, \be^{i\pi \tilde{S}^y(\theta)}, \be^{i\pi S^z}  \}$ if and only if 
\begin{equation}
\begin{split}
& \sum_{p} K_{p}^{\dagger} \mathcal{P}(m_{i} ) K_{p} =  \mathcal{P}(m_{i} )   \\
& \raisebox{-7.5ex}{\includegraphics[scale=0.4,clip]{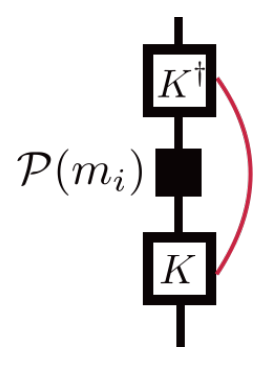}} = \raisebox{-3ex}{\includegraphics[scale=0.4,clip]{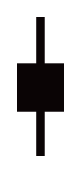}}
 ~~~(m_i = \alpha_\theta, \beta_\theta, \gamma)
 \end{split}
 \label{projection_SS_condition_1}
\end{equation}
holds. 
\end{lem}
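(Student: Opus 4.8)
The plan is to translate the strong-symmetry condition into a fixed-point statement for the adjoint channel $\mathcal{E}^\dagger(X)=\sum_p K_p^\dagger X K_p$ acting on the three measurement projectors, exploiting that $\{\ket{\alpha_\theta},\ket{\beta_\theta},\ket{\gamma}\}$ is an orthonormal basis of the three-dimensional spin-$1$ site. First I would record three elementary facts. (i) The three measurement states are mutually orthonormal for every $\theta$ (a one-line check from Eqs.~\eqref{eqn:MBQC-basis-1}--\eqref{eqn:MBQC-basis-3}), so $\mathcal{P}(\alpha_\theta)+\mathcal{P}(\beta_\theta)+\mathcal{P}(\gamma)=I_3$ and $\text{Tr}\,\mathcal{P}(m_i)=1$. (ii) Since $\ket{\alpha_\theta},\ket{\beta_\theta},\ket{\gamma}$ are precisely the zero-eigenvectors of $\tilde S^x(\theta),\tilde S^y(\theta),S^z$, they are the $(+1)$-eigenvectors of the reflections $\be^{i\pi\tilde S^x(\theta)},\be^{i\pi\tilde S^y(\theta)},\be^{i\pi S^z}$, each of whose only other eigenvalue is $-1$; hence, abbreviating $U_1=\be^{i\pi\tilde S^x(\theta)}$, $U_2=\be^{i\pi\tilde S^y(\theta)}$, $U_3=\be^{i\pi S^z}$ and $m_1=\alpha_\theta$, $m_2=\beta_\theta$, $m_3=\gamma$, we have $U_a=2\mathcal{P}(m_a)-I_3$. (iii) The Kraus completeness relation $\sum_p K_p^\dagger K_p=I_3$ makes $\mathcal{E}^\dagger$ unital. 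I would also invoke the equivalence established around Eq.~\eqref{strong_sym_cond_Kraus_diagram_2}, namely that strong symmetry for the representation $\{I_3,U_1,U_2,U_3\}$ is the same as $\mathcal{E}^\dagger(U_g)=\be^{i\phi(g)}U_g$ for every $g$.

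For the ``if'' direction, assume $\mathcal{E}^\dagger(\mathcal{P}(m_a))=\mathcal{P}(m_a)$ for $a=1,2,3$. Then by (ii) and (iii), $\mathcal{E}^\dagger(U_a)=2\mathcal{E}^\dagger(\mathcal{P}(m_a))-\mathcal{E}^\dagger(I_3)=2\mathcal{P}(m_a)-I_3=U_a$, so $\mathcal{E}^\dagger(U_g)=U_g$ for every element of the representation (trivially so for $g=e$ as well). By the quoted equivalence, $\mathcal{E}$ is strongly symmetric, with all phases $\be^{i\phi(g)}=1$.

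For the ``only if'' direction, suppose $\mathcal{E}^\dagger(U_a)=\be^{i\phi(a)}U_a$ for $a=1,2,3$. Solving $U_a=2\mathcal{P}(m_a)-I_3$ for $\mathcal{P}(m_a)$ and using unitality yields the operator identity $\mathcal{E}^\dagger(\mathcal{P}(m_a))=\be^{i\phi(a)}\mathcal{P}(m_a)+\tfrac{1}{2}(1-\be^{i\phi(a)})I_3$. Summing over $a$, using $\sum_a\mathcal{P}(m_a)=I_3$ and unitality on the left-hand side, gives $I_3=\sum_a\be^{i\phi(a)}\mathcal{P}(m_a)+\tfrac{1}{2}(3-\sum_a\be^{i\phi(a)})I_3$; taking the trace and using $\text{Tr}\,\mathcal{P}(m_a)=1$, $\text{Tr}\,I_3=3$ collapses this to $\sum_{a=1}^{3}\be^{i\phi(a)}=3$. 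Since each summand has modulus one, the triangle inequality forces $\be^{i\phi(a)}=1$ for all $a$; the operator identity above then reduces to $\mathcal{E}^\dagger(\mathcal{P}(m_a))=\mathcal{P}(m_a)$, i.e. $\sum_p K_p^\dagger\mathcal{P}(m_i)K_p=\mathcal{P}(m_i)$, as claimed.

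The delicate point is the ``only if'' step: a priori the phases $\be^{i\phi(a)}$ need not be trivial, and strong symmetry with respect to a single reflection (say $\be^{i\pi S^z}$) would pin down $\mathcal{E}^\dagger(\mathcal{P}(\gamma))$ only up to that phase, remaining compatible with $\mathcal{E}^\dagger(\mathcal{P}(\gamma))=I_3-\mathcal{P}(\gamma)$ rather than $\mathcal{P}(\gamma)$. What rescues the argument is treating the three generators jointly: the resolution of the identity $\sum_a\mathcal{P}(m_a)=I_3$ in a three-dimensional space, together with unitality and one trace, rigidly pins all three phases to $1$. The only remaining things to verify are the two routine facts flagged above — orthonormality of the measurement basis for all $\theta$, and the identities $U_a=2\mathcal{P}(m_a)-I_3$ — both of which are immediate.
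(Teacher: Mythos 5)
Your proof is correct, but it takes a genuinely different route from the paper's. The paper argues at the level of the individual Kraus operators: it first shows (Lemma~\ref{lemma2}) that for $\mathbb{Z}_2\times\mathbb{Z}_2$ all phases $\be^{i\phi(g)}$ must equal $1$, via an explicit case-by-case analysis of the matrix elements of $K_p$ in the diagonal basis (non-trivial phase assignments force either $K_p=0$ or a violation of completeness); it then deduces (Corollary~\ref{cor1}) that the $K_p$ commute with all the generators and hence with the projectors, from which $\sum_p K_p^\dagger\mathcal{P}(m)K_p=\mathcal{P}(m)$ follows by completeness, and it proves the converse by the positivity argument $\sum_p(\mathcal{P}(m)K_p-K_p\mathcal{P}(m))^\dagger(\mathcal{P}(m)K_p-K_p\mathcal{P}(m))=0$. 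You instead work entirely at the level of the adjoint channel, using the spectral identity $U_a=2\mathcal{P}(m_a)-I_3$ together with unitality; your trace argument forcing $\sum_a\be^{i\phi(a)}=3$ replaces the paper's Lemma~\ref{lemma2} and is cleaner (it does not even need the phases to be $\pm1$ a priori, and you correctly sum over $a$ before tracing, since $\mathcal{E}^\dagger$ is unital but not trace-preserving). The price is that both directions of your argument lean on the equivalence between the Kraus-level condition \eqref{strong_sym_cond_Kraus} and the Heisenberg-picture condition \eqref{strong_sym_cond_Kraus_diagram_2}; the non-trivial direction of that equivalence (a unitary fixed point of $\mathcal{E}^\dagger$ up to phase implies phase covariance of each $K_\alpha$) is exactly the Cauchy--Schwarz/positivity step that the paper carries out explicitly for the projectors. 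Since the paper presents that rephrasing as established, your reliance on it is legitimate, but it is where the Kraus-operator-level work is hidden. A further practical difference is that the paper's route also yields Corollary~\ref{cor1} (diagonality of the $K_p$ in the measurement basis), which is reused in Appendix~\ref{sec:MPO_tensor_product_decomposition}, whereas your argument does not produce that byproduct directly.
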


To prove this, we start from the following Lemma:
\begin{lem}\label{lemma2}
If $G=\mathbb{Z}_2 \times \mathbb{Z}_2$, the phase of the right-hand side of Eq.\;\eqref{strong_sym_cond_Kraus} 
is $\be^{i \phi(g)}=1$ for all $g\in G$.
\end{lem}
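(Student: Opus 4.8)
\emph{Proof sketch / plan.} The plan is to first show that $g\mapsto \be^{i\phi(g)}$ is a one-dimensional character of $G=\mathbb{Z}_2\times\mathbb{Z}_2$, so that each $\be^{i\phi(g)}\in\{\pm1\}$, and then to exclude the three \emph{nontrivial} characters by playing the completeness (trace-preservation) of $\mathcal{E}$ against the $\mathbb{Z}_2\times\mathbb{Z}_2$-representation content carried by a single spin-$1$.

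First I would note that, since the on-site $\mathbb{Z}_2\times\mathbb{Z}_2$ action is an honest linear representation, $U_{gh}=U_gU_h$; applying Eq.~\eqref{strong_sym_cond_Kraus} to $U_g$ and to $U_h$ gives $\be^{i\phi(gh)}K_\alpha=U_g\bigl(U_hK_\alpha U_h^{\dagger}\bigr)U_g^{\dagger}=\be^{i\phi(g)}\be^{i\phi(h)}K_\alpha$, so $\phi$ is a homomorphism $G\to U(1)$. For $\mathbb{Z}_2\times\mathbb{Z}_2$ every such character is $\pm1$-valued, and a nontrivial one equals $-1$ on exactly two of the three nontrivial elements. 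Hence it remains only to exclude the case where, after relabelling the three $\pi$-rotations $u_{g_1},u_{g_2},u_{g_1g_2}$ (with $u_{g_1}=\be^{i\pi\tilde S^x(\theta)}$, $u_{g_2}=\be^{i\pi\tilde S^y(\theta)}$, $u_{g_1g_2}=\be^{i\pi S^z}$), we have $\be^{i\phi(g_1)}=\be^{i\phi(g_2)}=-1$ and $\be^{i\phi(g_1g_2)}=+1$; there, by Eq.~\eqref{strong_sym_cond_Kraus}, \emph{every} Kraus operator anticommutes with both $u_{g_1}$ and $u_{g_2}$.

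To get the contradiction I would work on one site. The operators $u_{g_1},u_{g_2},u_{g_1g_2}$ are commuting Hermitian involutions on $\mathbb{C}^{3}$, each of trace $-1$; since $\sum_{g}\mathrm{tr}\,u_g=3-1-1-1=0$, the trivial representation is absent, so the three simultaneous-eigenvalue patterns $(+,-,-)$, $(-,+,-)$, $(-,-,+)$ each occur exactly once. Pick unit eigenvectors $v_1,v_2,v_3$ with $v_k$ the $(+1)$-eigenvector of $u_{g_k}$. Anticommutation with $u_{g_1}$ forces $\langle v_a|K_\alpha|v_b\rangle=0$ unless exactly one of $v_a,v_b$ is $v_1$, and anticommutation with $u_{g_2}$ likewise forces one of them to be $v_2$; hence the only possibly nonzero entries are at $(v_1,v_2)$ and $(v_2,v_1)$, so the whole $v_3$-column of each $K_\alpha$ vanishes, i.e. $K_\alpha v_3=0$ for all $\alpha$. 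Summing, $\langle v_3|\bigl(\sum_\alpha K_\alpha^{\dagger}K_\alpha\bigr)|v_3\rangle=0$, which contradicts $\sum_\alpha K_\alpha^{\dagger}K_\alpha=\mathbf{1}$. Therefore $\phi$ is trivial and $\be^{i\phi(g)}=1$ for all $g\in G$.

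I expect the only real content to be the representation-theoretic input of the third step --- that a spin-$1$ carries each of the three nontrivial $\mathbb{Z}_2\times\mathbb{Z}_2$ sign representations exactly once --- since this is precisely what produces the ``missing'' one-dimensional block $v_3$ on which completeness must fail; it is also where integer spin is essential (for half-integer spin the would-be $\mathbb{Z}_2\times\mathbb{Z}_2$ action is only projective and the statement fails). As a side remark, for the uncorrelated channels of the form \eqref{Kraus_and_channel} the identity is always among the Kraus operators, so $\be^{i\phi(g)}=1$ follows immediately from $U_g\mathbf{1}U_g^{\dagger}=\mathbf{1}$, and the argument above is needed only to cover a general strongly symmetric $\mathcal{E}$.
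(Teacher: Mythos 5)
Your proof is correct, and it reaches the same contradiction as the paper's (completeness of the Kraus operators fails on a distinguished one-dimensional subspace), but it organizes the case analysis in a genuinely tidier way. The paper only uses $U_g=U_g^\dagger$, $U_g^2=1$ to get $\be^{i\phi(g)}=\pm1$ generator by generator, and then enumerates all sign assignments, checking each one by writing the $K_p$ as explicit $3\times3$ matrices in the simultaneous eigenbasis; its Cases 1 and 3 (where it finds $K_p=O$) are disposed of by brute force. Your preliminary observation that $g\mapsto\be^{i\phi(g)}$ must be a character of $G$ (valid because completeness guarantees some $K_\alpha\neq0$, so $\be^{i\phi(gh)}=\be^{i\phi(g)}\be^{i\phi(h)}$) eliminates those non-multiplicative sign patterns at the outset and leaves a single case up to relabelling, namely $\phi=-1$ on exactly two generators — precisely the paper's Case 2. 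Your exclusion of that case is a basis-free restatement of the paper's matrix computation: the fact that the spin-$1$ decomposes under $\mathbb{Z}_2\times\mathbb{Z}_2$ into one copy of each nontrivial sign representation (no trivial summand) is exactly why, in the paper's notation, anticommutation with two generators forces all entries except $f_p,h_p$ to vanish, so that $\sum_pK_p^\dagger K_p$ has a zero diagonal entry. What your phrasing buys is a proof that does not depend on writing matrices in the wire basis and that makes transparent where integer spin enters; what the paper's phrasing buys is that the same explicit matrix bookkeeping is reused directly in Corollary~\ref{cor1} and Appendix~\ref{sec:MPO_tensor_product_decomposition}. Your closing remark — that for channels of the form \eqref{Kraus_and_channel} the identity is itself a Kraus operator, so $\be^{i\phi(g)}=1$ is immediate — is also correct, though the lemma is stated (and needed) for general strongly symmetric channels.
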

\begin{proof}
Without loss of generality, we can choose the linear representation 
$\{ 1, \be^{i\pi S^x}, \be^{i\pi S^y}, \be^{i\pi S^z} \}$ of $\mathbb{Z}_2 \times \mathbb{Z}_2$.   
By Eq.~\eqref{strong_sym_cond_Kraus}, the representation $U_{g}$ ($U_{g}^{2}=1$) satisfies 
\eq{
 K_{p}
= U_g (U_g K_{p} U_g) U_g = e^{i \phi(g)} U_g K_{p}U_g^{\dagger} = \left( \be^{i \phi (g)}\right)^2 K_{p},
}
for all $p$ since $U_g = U^\dagger_g$.   Therefore, 
\eq{
\be^{i \phi (g)} = \pm 1~~(\text{for } ^\forall g)\;.
}
Next, we show $\be^{i \phi(g)}$ cannot be $-1$ for any $g$.  
To this end, we move to the basis $\{ \ket{\alpha_0}, \ket{\beta_0}, \ket{\gamma} \}$ [the $\theta=0$ case of 
\eqref{eqn:MBQC-basis-1}-\eqref{eqn:MBQC-basis-3}] in which the above four $\mathbb{Z}_2 \times \mathbb{Z}_2$ generators are all 
diagonal.    
We set 
\eq{
K_p =
\left(
\begin{array}{ccc}
a_{p} & b_{p} & c_{p}  \\
d_{p} & e_{p} & f_{p} \\
g_{p} & h_{p} & i_{p}
\end{array}
\right)
}
in the same basis, and find $K_{p}$ satisfying the strong symmetry condition \eqref{strong_sym_cond_Kraus} 
for each element of $\mathbb{Z}_2 \times \mathbb{Z}_2$ as:
\begin{itemize}
\item $U_g = \be^{i\pi S^x}$:\\
When $\be^{i \phi(g)}=1$, $b_p = c_p= d_p=g_p=0$ for $^\forall p$, and otherwise $a_p = e_p= f_p=h_p=i_p=0$.
\item $U_h = \be^{i\pi S^y}$:\\
When $\be^{i \phi(g)}=1$, $b_p = d_p= f_p=h_p=0$ for $^\forall p$, and otherwise $a_p = c_p= e_p=g_p=i_p=0$.
\item $U_{gh} = U_{g}U_{h}= \be^{i\pi S^z}$:  \\
When $\be^{i \phi(g)}=1$, $c_p = f_p= g_p=h_p=0$ for $^\forall p$, and otherwise $a_p = b_p= d_p=e_p=i_p=0$.
\end{itemize}
Then, to exclude the possibility of having at least one $\be^{i \phi(g)}= -1$, we consider the following cases:
\begin{description}
\item[Case 1]{
$(\be^{i \phi(g)}, \be^{i \phi(h)}, \be^{i \phi(gh)}) = (1,1,-1)$\\
In this case, only a trivial solution $K_p = O$ for $^\forall p$ is allowed.
}
\item[Case 2]{
$ (\be^{i \phi(g)}, \be^{i \phi(h)}, \be^{i \phi(gh)}) = (1,-1,-1)$\\
In this case, the following non-zero $K_{p}$ are allowed
\eq{
K_p = 
\begin{pmatrix}
0 & 0 & 0  \\
0 & 0 & f_p \\
0 & h_p & 0
\end{pmatrix} 
\quad (\forall p)  \; .}
However, these $K_p$s never satisfy the completeness condition $\sum_p  K^\dagger_p K_p =I_3$ and are not eligible as the Kraus operators.
}
\item[Case 3]{
$( \be^{i \phi(g)}, \be^{i \phi(h)}, \be^{i \phi(gh)} ) = (-1,-1,-1)$\\
In this case, $K_p = O$ for $^\forall p$.
}
\end{description}
We can treat the remaining cases in the same manner to see that when at least one of the phases $\be^{i \phi(g)}$ is $-1$, 
there is no set of the Kraus operators satisfying the strong symmetry condition.  Therefore,  $e^{i \phi(g)} = 1$ for $^\forall g$. 
\end{proof}
From this, the following statements follow immediately. 
\begin{cor}\label{cor1}
If $\{K_p\}$ satisfies the strong symmetry condition for 
the $\mathbb{Z}_2\times \mathbb{Z}_2$ generators $\{ 1, \be^{i\pi \tilde{S}(\theta)^x}, \be^{i\pi \tilde{S}(\theta)^y}, \be^{i\pi \tilde{S}(\theta)^z} \}$, 
then
\begin{itemize}
\item The Kraus operators $\{ K_p \}$ commute with all the above generators $U_{g}$: $U_g K_p  =  K_p U_g$. In particular, 
$\{ K_p \}$ are diagonal in the basis $( \ket{\alpha_\theta}, \ket{\beta_\theta}, \ket{\gamma} )$. 
\item $K_p \mathcal{P}(m) = \mathcal{P}(m) K_p$ for all $p$ and $m$.  
\end{itemize}
\end{cor}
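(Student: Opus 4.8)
The plan is to deduce Corollary~\ref{cor1} almost immediately from Lemma~\ref{lemma2}, supplemented only by the elementary observation that the three nontrivial generators $e^{i\pi\tilde{S}^x(\theta)}$, $e^{i\pi\tilde{S}^y(\theta)}$, and $e^{i\pi\tilde{S}^z(\theta)}=e^{i\pi S^z}$ are simultaneously diagonalized by the measurement basis $\{\ket{\alpha_\theta},\ket{\beta_\theta},\ket{\gamma}\}$ of Eqs.~\eqref{eqn:MBQC-basis-1}--\eqref{eqn:MBQC-basis-3}. Once this is in place, everything reduces to linear algebra on the three-dimensional single-site space.

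First I would fix the phases. The rotated representation $\{1,e^{i\pi\tilde{S}^x(\theta)},e^{i\pi\tilde{S}^y(\theta)},e^{i\pi S^z}\}$ is unitarily equivalent to the canonical one $\{1,e^{i\pi S^x},e^{i\pi S^y},e^{i\pi S^z}\}$, being related to it by the fixed rotation about the $z$-axis that maps $\tilde{S}^x(\theta),\tilde{S}^y(\theta)$ to $S^x,S^y$. Hence the argument in the proof of Lemma~\ref{lemma2} carries over verbatim and gives $e^{i\phi(g)}=1$ for all $g\in\mathbb{Z}_2\times\mathbb{Z}_2$, so the strong-symmetry relation \eqref{strong_sym_cond_Kraus} reduces to $U_g K_p U_g^\dagger = K_p$, i.e., $U_g K_p = K_p U_g$ for every generator $U_g$ and every $p$. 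This is the first assertion of the first bullet.

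Next I would read off the diagonality and the commutation with the projectors. The three Hermitian involutions above mutually commute and hence are simultaneously diagonalizable, and from the eigenvalue relations in Eqs.~\eqref{eigen_eq_3}, \eqref{eigen_eq_1}, and \eqref{eigen_eq_2} their common eigenbasis is precisely $(\ket{\alpha_\theta},\ket{\beta_\theta},\ket{\gamma})$, these three states carrying the pairwise distinct eigenvalue triples $(+1,-1,-1)$, $(-1,+1,-1)$, $(-1,-1,+1)$ under $(e^{i\pi\tilde{S}^x(\theta)},e^{i\pi\tilde{S}^y(\theta)},e^{i\pi S^z})$. Distinctness of the triples forces the joint spectral projectors to be rank one, so any operator commuting with all three generators is diagonal in that basis; applied to $K_p$ this yields the ``in particular'' statement. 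Finally, each $\mathcal{P}(m)$ with $m\in\{\alpha_\theta,\beta_\theta,\gamma\}$ is the $(+1)$-eigenprojector of one of these generators (for instance $\mathcal{P}(\alpha_\theta)$ is the $(+1)$-eigenprojector of $e^{i\pi\tilde{S}^x(\theta)}$), hence a polynomial in that generator and therefore commutes with $K_p$; equivalently, $\mathcal{P}(m)$ and $K_p$ are both diagonal in the same basis. This gives the second bullet.

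I do not anticipate a genuine obstacle: the only point that needs care is recognizing that Lemma~\ref{lemma2}, stated for the canonical $\mathbb{Z}_2\times\mathbb{Z}_2$ representation, applies to the rotated one by unitary equivalence; after that, the corollary is a two-line consequence of ``commutes with three generators having distinct joint eigenvalues $\Rightarrow$ diagonal in their common eigenbasis''.
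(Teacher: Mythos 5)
Your proposal is correct and follows the same route as the paper, which simply asserts that the corollary ``follows immediately'' from Lemma~2: the unit phases turn the strong-symmetry condition into genuine commutation $U_g K_p = K_p U_g$, and since the three nontrivial generators are simultaneously diagonal in $(\ket{\alpha_\theta},\ket{\beta_\theta},\ket{\gamma})$ with pairwise distinct eigenvalue triples $(+1,-1,-1)$, $(-1,+1,-1)$, $(-1,-1,+1)$, the Kraus operators must be diagonal there and hence commute with the rank-one projectors $\mathcal{P}(m)$. Your two supporting observations --- that Lemma~2 transfers to the rotated representation by unitary equivalence (conjugation by $e^{-i\theta S^z/2}$), and that distinct joint eigenvalues force rank-one joint spectral projectors --- are exactly the details the paper leaves implicit, and both are sound.
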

Conversely, if $K_p \mathcal{P}(m) = \mathcal{P}(m) K_p$, both $U_{g}$ and $K_{p}$ are diagonal, which means: $U_g K_p  =  K_p U_g$, 
i.e., $\{K_p\}$ is strongly symmetric for $\{ 1, \be^{i\pi \tilde{S}(\theta)^x}, \be^{i\pi \tilde{S}(\theta)^y}, \be^{i\pi \tilde{S}(\theta)^z} \}$. 
Therefore, $K_p \mathcal{P}(m) = \mathcal{P}(m) K_p$ ~for $^\forall p,m$ and the statement that a quantum channel satisfies the strong symmetry condition for $\mathbb{Z}_2\times \mathbb{Z}_2 = \{ 1, \be^{i\pi S^x}, \be^{i\pi S^y}, \be^{i\pi S^z} \}$ are equivalent. Furthermore, 
\begin{eqnarray}
&&K_p \mathcal{P}(m) = \mathcal{P}(m) K_p \nonumber \\
&&\Longrightarrow \sum_p K^\dagger_p K_p \mathcal{P}(m) = \sum_p K^\dagger_p \mathcal{P}(m) K_p \nonumber \\
&&\Longrightarrow \mathcal{P}(m) = \sum_p K^\dagger_p \mathcal{P}(m) K_p \;.\nonumber 
\end{eqnarray}
On the contrary, if $\sum_p K^\dagger_p \mathcal{P}(m) K_p = \mathcal{P}(m)$,
\begin{eqnarray}
&&\sum_p (P(m)K_p-K_p P(m))^\dagger (P(m)K_p-K_p P(m)) \nonumber \\
&=& \sum_p K_p^\dagger P(m)K_p -P(m) \left( \sum_p K_p^\dagger P(m)K_p \right)  \nonumber \\ 
&&- \left( \sum_p K_p^\dagger P(m)K_p \right)P(m) + P(m)\left( \sum_p K_p^\dagger K_p \right)P(m) \nonumber \\
&=& 0\;,
\end{eqnarray}
and $(P(m)K_p-K_p P(m))^\dagger (P(m)K_p-K_p P(m))$ are positive, so $P(m)K_p=K_p P(m)$. Hence, Lemma\;\ref{lemma} holds.
\rightline{$\blacksquare$}
Finally, the proposition that the gate fidelity of $\be^{-i\theta' Z/2}$ does not decay is equivalent to 
Eq.~\eqref{projection_SS_condition_1} for both $\theta' = 0$ and $\theta'=\theta$.  
So, by Lemma~\ref{lemma}, Prop.~\ref{condition_for_z_rotation} holds.

\section{MPO tensor of decohered SPT state and identity gate}
\label{sec:MPO_tensor_product_decomposition}
In Ref.~\onlinecite{Else-S-B-D-12}, it is shown that ground states in the Haldane phase are uniformly perfect resources of the identity gate by measuring in the common basis $\{\ket{x}:=\ket{\alpha_{\theta=0}},\ket{y}:=i\ket{\beta_{\theta=0}},\ket{z}:=\ket{\gamma}\}$. This is supported by the fact that the MPS tensor $A^{[s]}$ in the same basis has a tensor product decomposition into the logical part $\sigma_s$, which is common throughout the phase and 
encodes quantum information, and the junk part $A_{\text{junk}}^{[s]}$ which contains the microscopic information of the state:
\begin{eqnarray}
A^{[s]} = \sigma_s \otimes A_{\text{junk}}[s]
\end{eqnarray}
 This enables us to implement the identity gate with the Pauli byproduct operator $\sigma_s$ in the logical subspace in the correlation space. 
 This basis $\{\ket{x},\ket{y},\ket{z}\}$ is often called the \textit{wire basis} because the identity gate plays a role of the quantum wire in the quantum circuit.  
 In this section, we extend this fact to mixed states. 

As seen in the last section \ref{sec:proof-Prop2}, if a quantum noise satisfies the strong symmetry condition, its Kraus operators and each element of the linear representation are commutative for the case of $G= \mathbb{Z}_2\times \mathbb{Z}_2$ (Corollary \ref{cor1}). So, the following proposition holds:
\begin{propwonum} 
Let $\ket{\psi}$ be a ground state in the Haldane phase and $\mathcal{E}$ be an uncorrelated noise. The diagonal elements of the MPO tensors 
of $\ketbra{\psi}{\psi}$ and $\mathcal{E}(\ketbra{\psi}{\psi})$ in the wire basis share the same structure 
if and only if $\mathcal{E}$ is strongly symmetric with respect to 
the canonical representation $\{ 1, \be^{i\pi S^x}, \be^{i\pi S^y}, \be^{i\pi S^z} \}$ of the $\mathbb{Z}_2\times \mathbb{Z}_2$-symmetry. 
\end{propwonum}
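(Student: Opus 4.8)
The plan is to write both matrix‑product‑operator (MPO) tensors explicitly on the physical leg and reduce the asserted equality of their diagonal blocks to a condition on the on‑site Kraus operators $\{K_\alpha\}$ of the uncorrelated channel $\mathcal{E}=\circ_i\mathcal{E}_i$. Blocking enough sites so that the Haldane‑phase MPS $\ket\psi$ attains the injective Else \emph{et al.}\ form \cite{Else-S-B-D-12}, its tensor reads $A^{[s]}=\sigma_s\otimes A_{\mathrm{junk}}[s]$ in the wire basis $\{\ket x,\ket y,\ket z\}$, with $\sigma_s$ the universal (Pauli) logical factor and $A_{\mathrm{junk}}[s]$ the microscopic junk factor. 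Then $\ketbra\psi\psi$ has MPO tensor $A^{[s]}\otimes\bar A^{[s']}$ (doubled bond, physical pair $(s,s')$), while $\mathcal{E}(\ketbra\psi\psi)$ has MPO tensor $\sum_\alpha\sum_{s,s'}(K_\alpha)_{ts}\,\overline{(K_\alpha)_{t's'}}\,A^{[s]}\otimes\bar A^{[s']}$ (physical pair $(t,t')$). Its diagonal element is the $t=t'$ component, which for $\ketbra\psi\psi$ is $A^{[t]}\otimes\bar A^{[t]}$; the proposition asks precisely when the $\mathcal{E}$‑evolved diagonal element equals this for every $t\in\{x,y,z\}$.

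For the ``if'' direction I would invoke Lemma~\ref{lemma2} and Corollary~\ref{cor1}: strong symmetry for $\{1,\be^{i\pi S^x},\be^{i\pi S^y},\be^{i\pi S^z}\}$ forces $\be^{i\phi(g)}=1$ and hence $U_g K_\alpha U_g^\dagger=K_\alpha$ for every $g$. On spin‑$1$ the three generators $\be^{i\pi S^a}$ have the pairwise‑distinct characters $(+,-,-),(-,+,-),(-,-,+)$ on $(\ket x,\ket y,\ket z)$, so their common eigenspaces are one‑dimensional and each $K_\alpha$ must be diagonal in the wire basis. Substituting $(K_\alpha)_{ts}=\delta_{ts}\lambda^{(\alpha)}_t$ collapses the double sum above to $\big(\sum_\alpha|\lambda^{(\alpha)}_t|^2\big)A^{[t]}\otimes\bar A^{[t]}$, and the completeness relation $\sum_\alpha K_\alpha^\dagger K_\alpha=\mathbf{1}$ read on the diagonal gives $\sum_\alpha|\lambda^{(\alpha)}_t|^2=1$, so the diagonal MPO tensor is unchanged.

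For the ``only if'' direction I would run this in reverse. Because the blocked MPS is injective, the single‑block tensors $\{A^{[s]}\}$ are linearly independent (for the bare AKLT tensor this is immediate, as $A^{[s]}\propto\sigma_s$), hence so are the doubled tensors $\{A^{[s]}\otimes\bar A^{[s']}\}$; equating the two diagonal MPO tensors and matching coefficients then yields $\sum_\alpha(K_\alpha)_{ts}\,\overline{(K_\alpha)_{ts'}}=\delta_{ts}\delta_{ts'}$ for all $s,s',t$. Taking $s=s'\ne t$ gives $\sum_\alpha|(K_\alpha)_{ts}|^2=0$, so $(K_\alpha)_{ts}=0$ and every $K_\alpha$ is diagonal in the wire basis. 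A diagonal $K_\alpha$ commutes with the (also diagonal) generators $\be^{i\pi S^a}$, i.e.\ $U_g K_\alpha U_g^\dagger=K_\alpha$, which is exactly the strong‑symmetry condition \eqref{strong_sym_cond_Kraus} for the canonical $\mathbb{Z}_2\times\mathbb{Z}_2$ representation --- this is the converse already established in Appendix~\ref{sec:proof-Prop2} via the equivalence $K_p\mathcal{P}(m)=\mathcal{P}(m)K_p\Leftrightarrow$ strong symmetry.

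The main obstacle I anticipate is the ``only if'' direction for a \emph{general} Haldane‑phase ground state rather than the bare AKLT state: one must argue that, after blocking, the junk factor $A_{\mathrm{junk}}[s]$ and the bond structure can be stripped off to leave a clean statement about the $K_\alpha$, which rests on the injectivity and the universal logical part guaranteed by Ref.~\cite{Else-S-B-D-12}, and one must keep track of the fact that $\mathcal{E}$ acts only on the bulk spin‑$1$'s. A minor point is the reading of ``share the same structure'': I take it to mean literal equality of the diagonal blocks, in line with Prop.~\ref{MPOtensor_and_computaitonal_power}, so that the statement is manifestly equivalent to $F_I=1$ (Prop.~\ref{SS_cond_and_identity_fidelity}).
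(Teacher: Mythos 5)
Your proposal is correct and follows essentially the same route as the paper's Appendix~\ref{sec:MPO_tensor_product_decomposition}: write the MPO tensor in the wire basis using the Else--Schwarz--Bartlett--Doherty decomposition $A^{[s]}=\sigma_s\otimes A_{\text{junk}}[s]$, use Corollary~\ref{cor1} to get diagonality of the Kraus operators plus completeness for the ``if'' direction, and use linear independence of the doubled tensors to force $[K_\alpha]_{ts}=0$ for $t\neq s$ in the ``only if'' direction. Your explicit step of taking $s=s'\neq t$ to extract $\sum_\alpha|(K_\alpha)_{ts}|^2=0$ is the same positivity argument the paper uses implicitly, so no substantive difference remains.
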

\begin{proof}
For the clarity of the argument, we now prove the proposition for the case where the initial state $|\psi\rangle$ is the AKLT state.  
As the MPS tensors of the AKLT state in the wire basis are given by the Pauli matrices [see Eq.~\eqref{single_measurement_result}], 
the corresponding MPO tensors are:
\begin{equation}
\raisebox{-3ex}{\includegraphics[scale=0.3,clip]{MPO_tensor_0}} \; 
:= \; 
\raisebox{-4.8ex}{\includegraphics[scale=0.3,clip]{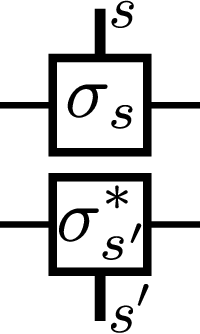}} \; 
= \sigma_s \otimes \sigma^*_s 
 ~~~(s,s' = x, y, z)  \; .
 \label{eqn:MPO-AKLT}
\end{equation}  
Then, the local MPO tensor on each site after an uncorrelated noise $\mathcal{E}=\circ_i \mathcal{E}_i $ is applied reads as:
\begin{equation}
\raisebox{-3ex}{\includegraphics[scale=0.3,clip]{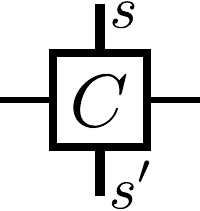}} \; 
:= \; 
\raisebox{-9ex}{\includegraphics[scale=0.3,clip]{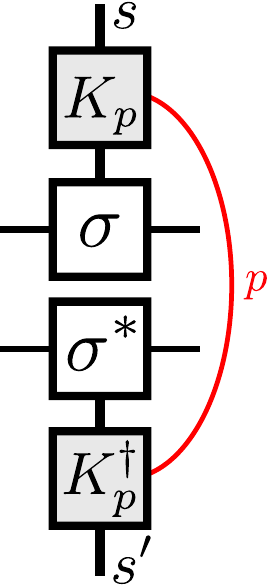}} 
 ~~~(s,s' = x, y, z)  \; ,
\end{equation}
where the $K_p$'s in the left-hand side are the Kraus operators of the local channel $\mathcal{E}_i$.  

We now suppose that the uncorrelated noise $\mathcal{E}$ is strongly symmetric. 
Then, by Corollary \ref{cor1}, these Kraus operators are diagonal in the wire basis 
(in which all the $\mathbb{Z}_2\times \mathbb{Z}_2$ generators are diagonal).   
Therefore, after the projective measurement, the local tensor on the right-hand side changes to (now $s^{\prime}$ is set to $s$ by 
the measurement): 
\begin{equation}
\begin{split}
\raisebox{-2.8ex}{\includegraphics[scale=0.3,clip]{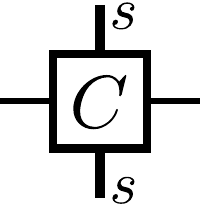}} \; 
&= \sum_p \sum_{t,t^{\prime}} \left( [K_{p}]_{st} \sigma_t \right) \otimes \left( [K^{\dagger}_{p}]_{t^{\prime} s } \sigma_{t^{\prime}}^* \right) \\
&=\sum_p \left( [K_{p}]_{ss} \sigma_s \right) \otimes \left( [K_{p}]^{\ast}_{s s} \sigma_{s }^* \right)   \\
&= (\sum_p  | K_{ss}^{(p)} |^2 ) \sigma_s \otimes \sigma^*_s   
= \sigma_s \otimes \sigma^*_s~~~(s=x,y,z)  \;  .
\end{split}
\end{equation}
Here, the last equality arises from $\sum_p K^\dagger_p K_p =I$.
Hence, on the AKLT state decohered by a strongly symmetric noise, the projective measurement in the wire basis realizes the identity gate 
with the same byproduct operator $\sigma_{s}$ as the pure case. 

Conversely, if the local MPO tensor assumes the following simple form:
\eq{
\raisebox{-3.3ex}{\includegraphics[scale=0.35,clip]{MPO_tensor_3}} 
\; = \;
\sigma_s \otimes \sigma^*_s  \; 
}
for all $s$, then 
\begin{equation}
\sum_p \left(\sum_l [K_p]_{sl}\sigma_l \right) \otimes \left(\sum_{l'}\sigma^*_{l'} [K^\dagger_p]_{l's}\right)
= \sigma_s \otimes \sigma^*_s 
\end{equation}
must hold for all $s$.  
By the linear independence of the basis $\sigma_{s} \otimes \sigma_{s^{\prime}}$, 
the above implies $[ K_p ]_{sl} =0~(l \neq s)$ for all $p$.   
Therefore, the Kraus operators are all diagonal in the wire basis, and, by Corollary~\ref{cor1}, this quantum channel is strongly symmetric.
This proof can be extended straightforwardly to generic ground states in the Haldane phase by replacing the MPS tensor $A^{[s]}=\sigma_s$ with  
$A^{[s]} = \sigma_s \otimes A_{\text{junk}}[s]$; again the diagonal element is invariant under the quantum channel if and only 
if it is strongly symmetric. 
\end{proof}

As the proposition implies that the logical space, which is seen in the form of the MPO tensor \eqref{eqn:MPO-AKLT}, 
is preserved by the quantum channel, the two states $\ketbra{\psi}{\psi}$ and $\mathcal{E}(\ketbra{\psi}{\psi})$ 
share the same computational power for the identity gate.   
Thus we arrive at the statement given in Prop.~\ref{MPOtensor_and_computaitonal_power}:
\begin{cor} \label{identity_gate_and_MPO}
The measurement in the wire basis implements the identity gate with fidelity $1$ on the mixed states 
that result from any ground states in the Haldane phase if and only if the applied quantum channel satisfies strong symmetry condition 
for $\mathbb{Z}_2\times \mathbb{Z}_2=\{ 1, \be^{i\pi S^x}, \be^{i\pi S^y}, \be^{i\pi S^z} \}$.
\end{cor}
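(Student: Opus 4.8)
The plan is to obtain Corollary~\ref{identity_gate_and_MPO} as the fidelity-level restatement of the Proposition proved just above (equivalently of Prop.~\ref{MPOtensor_and_computaitonal_power}), extended from the AKLT state to an arbitrary Haldane ground state. The key structural remark is that the wire-basis MBQC for the identity gate probes the mixed state $\tilde{\rho}=\mathcal{E}(\ketbra{\psi}{\psi})$ only through the \emph{diagonal} on-site MPO tensors: every site is contracted on both of its physical legs by the same projector $\mathcal{P}(m)$, $m\in\{x,y,z\}$, so the post-measurement two-qubit operator $\rho_I$ of Eq.~\eqref{definition_rho_U} (after the byproduct correction) is a fixed contraction of the $s'=s$ components of the MPO tensors together with the boundary spin-$1/2$ data. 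First I would record the input from Ref.~\cite{Else-S-B-D-12}: on a pure Haldane ground state, whose canonical MPS tensor factorizes as $A^{[s]}=\sigma_s\otimes A_{\text{junk}}[s]$, the wire-basis measurement is deterministic, so $\rho_I=\ketbra{\psi_I}{\psi_I}_{\text{res}}$ and $F_I=1$.

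\emph{Sufficiency.} Assume $\mathcal{E}=\circ_i\mathcal{E}_i$ is strongly symmetric for $\{1,\be^{i\pi S^x},\be^{i\pi S^y},\be^{i\pi S^z}\}$. By the Proposition above, the diagonal wire-basis MPO tensors of $\mathcal{E}(\ketbra{\psi}{\psi})$ agree with those of $\ketbra{\psi}{\psi}$ — for AKLT they equal $\sigma_s\otimes\sigma_s^{*}$, in general $\big(\sigma_s\otimes A_{\text{junk}}[s]\big)\otimes\big(\sigma_s\otimes A_{\text{junk}}[s]\big)^{*}$. Since the wire-basis MBQC sees the state only through these tensors, $\rho_I$ built on $\mathcal{E}(\ketbra{\psi}{\psi})$ coincides with $\rho_I$ built on $\ketbra{\psi}{\psi}$, hence equals $\ketbra{\psi_I}{\psi_I}_{\text{res}}$ and $F_I=1$.

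\emph{Necessity.} Conversely, suppose $F_I=1$ on $\mathcal{E}(\ketbra{\psi}{\psi})$. Because $F_I=\text{Tr}_{\text{in,out}}\!\left[\rho_I\ketbra{\psi_I}{\psi_I}_{\text{res}}\right]$ and $\ketbra{\psi_I}{\psi_I}_{\text{res}}$ is a rank-one projector, $F_I=1$ forces $\rho_I=\ketbra{\psi_I}{\psi_I}_{\text{res}}$. For AKLT this is handled through Eq.~\eqref{identity_fidelity}: each of the three string expectation values must then equal $-1$, i.e.\ saturate its modulus bound, which by the term-wise positivity argument of Prop.~\ref{SS_cond_and_identity_fidelity} and Appendix~\ref{sec:proof-Prop2} (from $\sum_p K_p^{\dagger}\mathcal{P}(m)K_p=\mathcal{P}(m)$ one deduces $\sum_p(\mathcal{P}(m)K_p-K_p\mathcal{P}(m))^{\dagger}(\mathcal{P}(m)K_p-K_p\mathcal{P}(m))=0$, hence $[\mathcal{P}(m),K_p]=0$) means $\mathcal{E}$ is strongly symmetric for the canonical $\mathbb{Z}_2\times\mathbb{Z}_2$ (Corollary~\ref{cor1}). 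For a generic Haldane ground state, I would instead feed $\rho_I=\ketbra{\psi_I}{\psi_I}_{\text{res}}$ back into the diagonal-MPO-tensor constraint and invoke the converse half of the Proposition above, with $A_{\text{junk}}[s]$ a spectator since the strong-symmetry constraint touches only the logical Pauli legs.

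\emph{Main obstacle.} The hard part is the necessity direction: one must rule out an ``accidental'' $F_I=1$ produced by cancellations among the four terms of Eq.~\eqref{identity_fidelity} (or, for general Haldane states, by a coincidental contraction of non-matching MPO tensors) without the channel actually commuting with the wire-basis projectors. This is exactly the rigidity already established in the proof of Prop.~\ref{SS_cond_and_identity_fidelity} and the Proposition above: every term is bounded by its pure-state value, so simultaneous saturation is forced term by term, and term-wise saturation is a positive-semidefiniteness statement pinning down $[\mathcal{P}(m),K_p]=0$. A secondary bookkeeping check — that attaching the junk factor $A_{\text{junk}}[s]$ to every MPS leg alters none of the above — is immediate because strong symmetry constrains only the $\sigma_s$ factors and leaves $A_{\text{junk}}[s]$ free.
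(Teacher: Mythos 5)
Your proposal is correct and follows essentially the same route as the paper: the corollary is obtained by combining the Proposition on the wire-basis diagonal MPO tensors (Prop.~\ref{MPOtensor_and_computaitonal_power}) with the observation that the identity-gate MBQC probes the decohered state only through those diagonal tensors, so preservation of the tensors is equivalent to $F_I=1$. Your extra care on the necessity direction --- forcing term-wise saturation of the string correlators in Eq.~\eqref{identity_fidelity} and then invoking the positivity argument behind Prop.~\ref{SS_cond_and_identity_fidelity} and Corollary~\ref{cor1} --- is the same mechanism the paper relies on, just spelled out more explicitly than in the text.
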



\begin{thebibliography}{59}%
\makeatletter
\providecommand \@ifxundefined [1]{%
 \@ifx{#1\undefined}
}%
\providecommand \@ifnum [1]{%
 \ifnum #1\expandafter \@firstoftwo
 \else \expandafter \@secondoftwo
 \fi
}%
\providecommand \@ifx [1]{%
 \ifx #1\expandafter \@firstoftwo
 \else \expandafter \@secondoftwo
 \fi
}%
\providecommand \natexlab [1]{#1}%
\providecommand \enquote  [1]{``#1''}%
\providecommand \bibnamefont  [1]{#1}%
\providecommand \bibfnamefont [1]{#1}%
\providecommand \citenamefont [1]{#1}%
\providecommand \href@noop [0]{\@secondoftwo}%
\providecommand \href [0]{\begingroup \@sanitize@url \@href}%
\providecommand \@href[1]{\@@startlink{#1}\@@href}%
\providecommand \@@href[1]{\endgroup#1\@@endlink}%
\providecommand \@sanitize@url [0]{\catcode `\\12\catcode `\$12\catcode
  `\&12\catcode `\#12\catcode `\^12\catcode `\_12\catcode `\%12\relax}%
\providecommand \@@startlink[1]{}%
\providecommand \@@endlink[0]{}%
\providecommand \url  [0]{\begingroup\@sanitize@url \@url }%
\providecommand \@url [1]{\endgroup\@href {#1}{\urlprefix }}%
\providecommand \urlprefix  [0]{URL }%
\providecommand \Eprint [0]{\href }%
\providecommand \doibase [0]{http://dx.doi.org/}%
\providecommand \selectlanguage [0]{\@gobble}%
\providecommand \bibinfo  [0]{\@secondoftwo}%
\providecommand \bibfield  [0]{\@secondoftwo}%
\providecommand \translation [1]{[#1]}%
\providecommand \BibitemOpen [0]{}%
\providecommand \bibitemStop [0]{}%
\providecommand \bibitemNoStop [0]{.\EOS\space}%
\providecommand \EOS [0]{\spacefactor3000\relax}%
\providecommand \BibitemShut  [1]{\csname bibitem#1\endcsname}%
\let\auto@bib@innerbib\@empty
\bibitem [{\citenamefont {Wen}(2017)}]{Wen-RMP-17}%
  \BibitemOpen
  \bibfield  {author} {\bibinfo {author} {\bibfnamefont {X.-G.}\ \bibnamefont
  {Wen}},\ }\href {\doibase 10.1103/RevModPhys.89.041004} {\bibfield  {journal}
  {\bibinfo  {journal} {Rev. Mod. Phys.}\ }\textbf {\bibinfo {volume} {89}},\
  \bibinfo {pages} {041004} (\bibinfo {year} {2017})}\BibitemShut {NoStop}%
\bibitem [{\citenamefont {Pretko}\ \emph {et~al.}(2020)\citenamefont {Pretko},
  \citenamefont {Chen},\ and\ \citenamefont {You}}]{Pretko-C-Y-review-20}%
  \BibitemOpen
  \bibfield  {author} {\bibinfo {author} {\bibfnamefont {M.}~\bibnamefont
  {Pretko}}, \bibinfo {author} {\bibfnamefont {X.}~\bibnamefont {Chen}}, \ and\
  \bibinfo {author} {\bibfnamefont {Y.}~\bibnamefont {You}},\ }\href {\doibase
  10.1142/s0217751x20300033} {\bibfield  {journal} {\bibinfo  {journal} {Int.
  J. Mod. Phys. A}\ }\textbf {\bibinfo {volume} {35}},\ \bibinfo {pages}
  {2030003} (\bibinfo {year} {2020})}\BibitemShut {NoStop}%
\bibitem [{\citenamefont {Chen}\ \emph {et~al.}(2010)\citenamefont {Chen},
  \citenamefont {Gu},\ and\ \citenamefont {Wen}}]{Chen-G-W-10}%
  \BibitemOpen
  \bibfield  {author} {\bibinfo {author} {\bibfnamefont {X.}~\bibnamefont
  {Chen}}, \bibinfo {author} {\bibfnamefont {Z.-C.}\ \bibnamefont {Gu}}, \ and\
  \bibinfo {author} {\bibfnamefont {X.-G.}\ \bibnamefont {Wen}},\ }\href
  {http://link.aps.org/doi/10.1103/PhysRevB.82.155138} {\bibfield  {journal}
  {\bibinfo  {journal} {Phys. Rev. B}\ }\textbf {\bibinfo {volume} {82}},\
  \bibinfo {pages} {155138} (\bibinfo {year} {2010})}\BibitemShut {NoStop}%
\bibitem [{\citenamefont {Chen}\ \emph {et~al.}(2011)\citenamefont {Chen},
  \citenamefont {Gu},\ and\ \citenamefont {Wen}}]{Chen-G-W-11}%
  \BibitemOpen
  \bibfield  {author} {\bibinfo {author} {\bibfnamefont {X.}~\bibnamefont
  {Chen}}, \bibinfo {author} {\bibfnamefont {Z.-C.}\ \bibnamefont {Gu}}, \ and\
  \bibinfo {author} {\bibfnamefont {X.-G.}\ \bibnamefont {Wen}},\ }\href
  {http://link.aps.org/doi/10.1103/PhysRevB.83.035107} {\bibfield  {journal}
  {\bibinfo  {journal} {Phys. Rev. B}\ }\textbf {\bibinfo {volume} {83}},\
  \bibinfo {pages} {035107} (\bibinfo {year} {2011})}\BibitemShut {NoStop}%
\bibitem [{\citenamefont {Schuch}\ \emph {et~al.}(2011)\citenamefont {Schuch},
  \citenamefont {P\'erez-Garc\'{\i}a},\ and\ \citenamefont
  {Cirac}}]{Schuch-PG-C-11}%
  \BibitemOpen
  \bibfield  {author} {\bibinfo {author} {\bibfnamefont {N.}~\bibnamefont
  {Schuch}}, \bibinfo {author} {\bibfnamefont {D.}~\bibnamefont
  {P\'erez-Garc\'{\i}a}}, \ and\ \bibinfo {author} {\bibfnamefont
  {I.}~\bibnamefont {Cirac}},\ }\href {\doibase 10.1103/PhysRevB.84.165139}
  {\bibfield  {journal} {\bibinfo  {journal} {Phys. Rev. B}\ }\textbf {\bibinfo
  {volume} {84}},\ \bibinfo {pages} {165139} (\bibinfo {year}
  {2011})}\BibitemShut {NoStop}%
\bibitem [{\citenamefont {Gu}\ and\ \citenamefont {Wen}(2009)}]{Gu-W-09}%
  \BibitemOpen
  \bibfield  {author} {\bibinfo {author} {\bibfnamefont {Z.-C.}\ \bibnamefont
  {Gu}}\ and\ \bibinfo {author} {\bibfnamefont {X.-G.}\ \bibnamefont {Wen}},\
  }\href {http://link.aps.org/abstract/PRB/v80/e155131} {\bibfield  {journal}
  {\bibinfo  {journal} {Phys. Rev. B}\ }\textbf {\bibinfo {volume} {80}},\
  \bibinfo {pages} {155131} (\bibinfo {year} {2009})}\BibitemShut {NoStop}%
\bibitem [{\citenamefont {Haldane}(1983{\natexlab{a}})}]{Haldane_conjecture}%
  \BibitemOpen
  \bibfield  {author} {\bibinfo {author} {\bibfnamefont {F.~D.~M.}\
  \bibnamefont {Haldane}},\ }\href {\doibase 10.1103/PhysRevLett.50.1153}
  {\bibfield  {journal} {\bibinfo  {journal} {Phys. Rev. Lett.}\ }\textbf
  {\bibinfo {volume} {50}},\ \bibinfo {pages} {1153} (\bibinfo {year}
  {1983}{\natexlab{a}})}\BibitemShut {NoStop}%
\bibitem [{\citenamefont {Haldane}(1983{\natexlab{b}})}]{Haldane_conjecture-2}%
  \BibitemOpen
  \bibfield  {author} {\bibinfo {author} {\bibfnamefont {F.}~\bibnamefont
  {Haldane}},\ }\href {\doibase 10.1016/0375-9601(83)90631-X} {\bibfield
  {journal} {\bibinfo  {journal} {Phys. Lett. A}\ }\textbf {\bibinfo {volume}
  {93}},\ \bibinfo {pages} {464 } (\bibinfo {year}
  {1983}{\natexlab{b}})}\BibitemShut {NoStop}%
\bibitem [{\citenamefont {Affleck}\ \emph {et~al.}(1987)\citenamefont
  {Affleck}, \citenamefont {Kennedy}, \citenamefont {Lieb},\ and\ \citenamefont
  {Tasaki}}]{AKLT}%
  \BibitemOpen
  \bibfield  {author} {\bibinfo {author} {\bibfnamefont {I.}~\bibnamefont
  {Affleck}}, \bibinfo {author} {\bibfnamefont {T.}~\bibnamefont {Kennedy}},
  \bibinfo {author} {\bibfnamefont {E.~H.}\ \bibnamefont {Lieb}}, \ and\
  \bibinfo {author} {\bibfnamefont {H.}~\bibnamefont {Tasaki}},\ }\href
  {\doibase 10.1103/PhysRevLett.59.799} {\bibfield  {journal} {\bibinfo
  {journal} {Phys. Rev. Lett.}\ }\textbf {\bibinfo {volume} {59}},\ \bibinfo
  {pages} {799} (\bibinfo {year} {1987})}\BibitemShut {NoStop}%
\bibitem [{\citenamefont {den Nijs}\ and\ \citenamefont
  {Rommelse}(1989)}]{string_order}%
  \BibitemOpen
  \bibfield  {author} {\bibinfo {author} {\bibfnamefont {M.}~\bibnamefont {den
  Nijs}}\ and\ \bibinfo {author} {\bibfnamefont {K.}~\bibnamefont {Rommelse}},\
  }\href {\doibase 10.1103/PhysRevB.40.4709} {\bibfield  {journal} {\bibinfo
  {journal} {Phys. Rev. B}\ }\textbf {\bibinfo {volume} {40}},\ \bibinfo
  {pages} {4709} (\bibinfo {year} {1989})}\BibitemShut {NoStop}%
\bibitem [{\citenamefont {P{\'{e}}rez-Garc{\'{i}}a}\ \emph
  {et~al.}(2008)\citenamefont {P{\'{e}}rez-Garc{\'{i}}a}, \citenamefont {Wolf},
  \citenamefont {Sanz}, \citenamefont {Verstraete},\ and\ \citenamefont
  {Cirac}}]{Garcia-W-S-V-C-08}%
  \BibitemOpen
  \bibfield  {author} {\bibinfo {author} {\bibfnamefont {D.}~\bibnamefont
  {P{\'{e}}rez-Garc{\'{i}}a}}, \bibinfo {author} {\bibfnamefont {M.~M.}\
  \bibnamefont {Wolf}}, \bibinfo {author} {\bibfnamefont {M.}~\bibnamefont
  {Sanz}}, \bibinfo {author} {\bibfnamefont {F.}~\bibnamefont {Verstraete}}, \
  and\ \bibinfo {author} {\bibfnamefont {J.~I.}\ \bibnamefont {Cirac}},\ }\href
  {http://link.aps.org/doi/10.1103/PhysRevLett.100.167202} {\bibfield
  {journal} {\bibinfo  {journal} {Phys. Rev. Lett.}\ }\textbf {\bibinfo
  {volume} {100}},\ \bibinfo {pages} {167202} (\bibinfo {year}
  {2008})}\BibitemShut {NoStop}%
\bibitem [{\citenamefont {Pollmann}\ \emph {et~al.}(2010)\citenamefont
  {Pollmann}, \citenamefont {Turner}, \citenamefont {Berg},\ and\ \citenamefont
  {Oshikawa}}]{Pollmann-T-B-O-10}%
  \BibitemOpen
  \bibfield  {author} {\bibinfo {author} {\bibfnamefont {F.}~\bibnamefont
  {Pollmann}}, \bibinfo {author} {\bibfnamefont {A.~M.}\ \bibnamefont
  {Turner}}, \bibinfo {author} {\bibfnamefont {E.}~\bibnamefont {Berg}}, \ and\
  \bibinfo {author} {\bibfnamefont {M.}~\bibnamefont {Oshikawa}},\ }\href
  {\doibase 10.1103/PhysRevB.81.064439} {\bibfield  {journal} {\bibinfo
  {journal} {Phys. Rev. B}\ }\textbf {\bibinfo {volume} {81}},\ \bibinfo
  {pages} {064439} (\bibinfo {year} {2010})}\BibitemShut {NoStop}%
\bibitem [{\citenamefont {Hagiwara}\ \emph {et~al.}(1990)\citenamefont
  {Hagiwara}, \citenamefont {Katsumata}, \citenamefont {Affleck}, \citenamefont
  {Halperin},\ and\ \citenamefont {Renard}}]{Hagiwara-K-A-H-R-90}%
  \BibitemOpen
  \bibfield  {author} {\bibinfo {author} {\bibfnamefont {M.}~\bibnamefont
  {Hagiwara}}, \bibinfo {author} {\bibfnamefont {K.}~\bibnamefont {Katsumata}},
  \bibinfo {author} {\bibfnamefont {I.}~\bibnamefont {Affleck}}, \bibinfo
  {author} {\bibfnamefont {B.~I.}\ \bibnamefont {Halperin}}, \ and\ \bibinfo
  {author} {\bibfnamefont {J.~P.}\ \bibnamefont {Renard}},\ }\href {\doibase
  10.1103/PhysRevLett.65.3181} {\bibfield  {journal} {\bibinfo  {journal}
  {Phys. Rev. Lett.}\ }\textbf {\bibinfo {volume} {65}},\ \bibinfo {pages}
  {3181} (\bibinfo {year} {1990})}\BibitemShut {NoStop}%
\bibitem [{\citenamefont {Kennedy}(1990)}]{Kennedy-90}%
  \BibitemOpen
  \bibfield  {author} {\bibinfo {author} {\bibfnamefont {T.}~\bibnamefont
  {Kennedy}},\ }\href {\doibase 10.1088/0953-8984/2/26/010} {\bibfield
  {journal} {\bibinfo  {journal} {J. Phys. Condens. Matter}\ }\textbf {\bibinfo
  {volume} {2}},\ \bibinfo {pages} {5737} (\bibinfo {year} {1990})}\BibitemShut
  {NoStop}%
\bibitem [{\citenamefont {Glarum}\ \emph {et~al.}(1991)\citenamefont {Glarum},
  \citenamefont {Geschwind}, \citenamefont {Lee}, \citenamefont {Kaplan},\ and\
  \citenamefont {Michel}}]{Glarum-G-L-K-M-91}%
  \BibitemOpen
  \bibfield  {author} {\bibinfo {author} {\bibfnamefont {S.~H.}\ \bibnamefont
  {Glarum}}, \bibinfo {author} {\bibfnamefont {S.}~\bibnamefont {Geschwind}},
  \bibinfo {author} {\bibfnamefont {K.~M.}\ \bibnamefont {Lee}}, \bibinfo
  {author} {\bibfnamefont {M.~L.}\ \bibnamefont {Kaplan}}, \ and\ \bibinfo
  {author} {\bibfnamefont {J.}~\bibnamefont {Michel}},\ }\href {\doibase
  10.1103/PhysRevLett.67.1614} {\bibfield  {journal} {\bibinfo  {journal}
  {Phys. Rev. Lett.}\ }\textbf {\bibinfo {volume} {67}},\ \bibinfo {pages}
  {1614} (\bibinfo {year} {1991})}\BibitemShut {NoStop}%
\bibitem [{\citenamefont {Perez-Garcia}\ \emph {et~al.}(2007)\citenamefont
  {Perez-Garcia}, \citenamefont {Verstraete}, \citenamefont {Wolf},\ and\
  \citenamefont {Cirac}}]{Perez-Garcia-V-W-C-07}%
  \BibitemOpen
  \bibfield  {author} {\bibinfo {author} {\bibfnamefont {D.}~\bibnamefont
  {Perez-Garcia}}, \bibinfo {author} {\bibfnamefont {F.}~\bibnamefont
  {Verstraete}}, \bibinfo {author} {\bibfnamefont {M.~M.}\ \bibnamefont
  {Wolf}}, \ and\ \bibinfo {author} {\bibfnamefont {J.~I.}\ \bibnamefont
  {Cirac}},\ }\href@noop {} {\bibfield  {journal} {\bibinfo  {journal} {Quantum
  Info. Comput.}\ }\textbf {\bibinfo {volume} {7}},\ \bibinfo {pages} {401}
  (\bibinfo {year} {2007})}\BibitemShut {NoStop}%
\bibitem [{\citenamefont {Hastings}(2007)}]{Hastings-07}%
  \BibitemOpen
  \bibfield  {author} {\bibinfo {author} {\bibfnamefont {M.~B.}\ \bibnamefont
  {Hastings}},\ }\href {\doibase 10.1088/1742-5468/2007/08/P08024} {\bibfield
  {journal} {\bibinfo  {journal} {J. Stat. Mech. Theory Exp.}\ }\textbf
  {\bibinfo {volume} {2007}},\ \bibinfo {pages} {P08024} (\bibinfo {year}
  {2007})}\BibitemShut {NoStop}%
\bibitem [{\citenamefont {Pollmann}\ and\ \citenamefont
  {Turner}(2012)}]{Pollmann-T-12}%
  \BibitemOpen
  \bibfield  {author} {\bibinfo {author} {\bibfnamefont {F.}~\bibnamefont
  {Pollmann}}\ and\ \bibinfo {author} {\bibfnamefont {A.~M.}\ \bibnamefont
  {Turner}},\ }\href {\doibase 10.1103/PhysRevB.86.125441} {\bibfield
  {journal} {\bibinfo  {journal} {Phys. Rev. B}\ }\textbf {\bibinfo {volume}
  {86}},\ \bibinfo {pages} {125441} (\bibinfo {year} {2012})}\BibitemShut
  {NoStop}%
\bibitem [{\citenamefont {Chen}\ \emph {et~al.}(2013)\citenamefont {Chen},
  \citenamefont {Gu}, \citenamefont {Liu},\ and\ \citenamefont
  {Wen}}]{Chen-G-L-W-13}%
  \BibitemOpen
  \bibfield  {author} {\bibinfo {author} {\bibfnamefont {X.}~\bibnamefont
  {Chen}}, \bibinfo {author} {\bibfnamefont {Z.-C.}\ \bibnamefont {Gu}},
  \bibinfo {author} {\bibfnamefont {Z.-X.}\ \bibnamefont {Liu}}, \ and\
  \bibinfo {author} {\bibfnamefont {X.-G.}\ \bibnamefont {Wen}},\ }\href
  {\doibase 10.1103/PhysRevB.87.155114} {\bibfield  {journal} {\bibinfo
  {journal} {Phys. Rev. B}\ }\textbf {\bibinfo {volume} {87}},\ \bibinfo
  {pages} {155114} (\bibinfo {year} {2013})}\BibitemShut {NoStop}%
\bibitem [{\citenamefont {Raussendorf}\ and\ \citenamefont
  {Briegel}(2001)}]{Raussendorf-B-01}%
  \BibitemOpen
  \bibfield  {author} {\bibinfo {author} {\bibfnamefont {R.}~\bibnamefont
  {Raussendorf}}\ and\ \bibinfo {author} {\bibfnamefont {H.~J.}\ \bibnamefont
  {Briegel}},\ }\href {\doibase 10.1103/PhysRevLett.86.5188} {\bibfield
  {journal} {\bibinfo  {journal} {Phys. Rev. Lett.}\ }\textbf {\bibinfo
  {volume} {86}},\ \bibinfo {pages} {5188} (\bibinfo {year}
  {2001})}\BibitemShut {NoStop}%
\bibitem [{\citenamefont {Brennen}\ and\ \citenamefont
  {Miyake}(2008)}]{Brennen-M-08}%
  \BibitemOpen
  \bibfield  {author} {\bibinfo {author} {\bibfnamefont {G.~K.}\ \bibnamefont
  {Brennen}}\ and\ \bibinfo {author} {\bibfnamefont {A.}~\bibnamefont
  {Miyake}},\ }\href {\doibase 10.1103/PhysRevLett.101.010502} {\bibfield
  {journal} {\bibinfo  {journal} {Phys. Rev. Lett.}\ }\textbf {\bibinfo
  {volume} {101}},\ \bibinfo {pages} {010502} (\bibinfo {year}
  {2008})}\BibitemShut {NoStop}%
\bibitem [{\citenamefont {Nielsen}(2003)}]{Nielsen-03}%
  \BibitemOpen
  \bibfield  {author} {\bibinfo {author} {\bibfnamefont {M.~A.}\ \bibnamefont
  {Nielsen}},\ }\href {\doibase https://doi.org/10.1016/S0375-9601(02)01803-0}
  {\bibfield  {journal} {\bibinfo  {journal} {Phys. Lett. A}\ }\textbf
  {\bibinfo {volume} {308}},\ \bibinfo {pages} {96} (\bibinfo {year}
  {2003})}\BibitemShut {NoStop}%
\bibitem [{\citenamefont {Gottesman}\ and\ \citenamefont
  {Chuang}(1999)}]{Gottesman-C-99}%
  \BibitemOpen
  \bibfield  {author} {\bibinfo {author} {\bibfnamefont {D.}~\bibnamefont
  {Gottesman}}\ and\ \bibinfo {author} {\bibfnamefont {I.~L.}\ \bibnamefont
  {Chuang}},\ }\href {\doibase 10.1038/46503} {\bibfield  {journal} {\bibinfo
  {journal} {Nature}\ }\textbf {\bibinfo {volume} {402}},\ \bibinfo {pages}
  {390} (\bibinfo {year} {1999})}\BibitemShut {NoStop}%
\bibitem [{\citenamefont {Doherty}\ and\ \citenamefont
  {Bartlett}(2009)}]{Doherty-B-09}%
  \BibitemOpen
  \bibfield  {author} {\bibinfo {author} {\bibfnamefont {A.~C.}\ \bibnamefont
  {Doherty}}\ and\ \bibinfo {author} {\bibfnamefont {S.~D.}\ \bibnamefont
  {Bartlett}},\ }\href {\doibase 10.1103/PhysRevLett.103.020506} {\bibfield
  {journal} {\bibinfo  {journal} {Phys. Rev. Lett.}\ }\textbf {\bibinfo
  {volume} {103}},\ \bibinfo {pages} {020506} (\bibinfo {year}
  {2009})}\BibitemShut {NoStop}%
\bibitem [{\citenamefont {Else}\ \emph {et~al.}(2012)\citenamefont {Else},
  \citenamefont {Schwarz}, \citenamefont {Bartlett},\ and\ \citenamefont
  {Doherty}}]{Else-S-B-D-12}%
  \BibitemOpen
  \bibfield  {author} {\bibinfo {author} {\bibfnamefont {D.~V.}\ \bibnamefont
  {Else}}, \bibinfo {author} {\bibfnamefont {I.}~\bibnamefont {Schwarz}},
  \bibinfo {author} {\bibfnamefont {S.~D.}\ \bibnamefont {Bartlett}}, \ and\
  \bibinfo {author} {\bibfnamefont {A.~C.}\ \bibnamefont {Doherty}},\ }\href
  {\doibase 10.1103/PhysRevLett.108.240505} {\bibfield  {journal} {\bibinfo
  {journal} {Phys. Rev. Lett.}\ }\textbf {\bibinfo {volume} {108}},\ \bibinfo
  {pages} {240505} (\bibinfo {year} {2012})}\BibitemShut {NoStop}%
\bibitem [{\citenamefont {Miyake}(2010)}]{Miyake-10}%
  \BibitemOpen
  \bibfield  {author} {\bibinfo {author} {\bibfnamefont {A.}~\bibnamefont
  {Miyake}},\ }\href {\doibase 10.1103/PhysRevLett.105.040501} {\bibfield
  {journal} {\bibinfo  {journal} {Phys. Rev. Lett.}\ }\textbf {\bibinfo
  {volume} {105}},\ \bibinfo {pages} {040501} (\bibinfo {year}
  {2010})}\BibitemShut {NoStop}%
\bibitem [{\citenamefont {Stephen}\ \emph {et~al.}(2017)\citenamefont
  {Stephen}, \citenamefont {Wang}, \citenamefont {Prakash}, \citenamefont
  {Wei},\ and\ \citenamefont {Raussendorf}}]{Stephen-W-P-W-R-17}%
  \BibitemOpen
  \bibfield  {author} {\bibinfo {author} {\bibfnamefont {D.~T.}\ \bibnamefont
  {Stephen}}, \bibinfo {author} {\bibfnamefont {D.-S.}\ \bibnamefont {Wang}},
  \bibinfo {author} {\bibfnamefont {A.}~\bibnamefont {Prakash}}, \bibinfo
  {author} {\bibfnamefont {T.-C.}\ \bibnamefont {Wei}}, \ and\ \bibinfo
  {author} {\bibfnamefont {R.}~\bibnamefont {Raussendorf}},\ }\href {\doibase
  10.1103/PhysRevLett.119.010504} {\bibfield  {journal} {\bibinfo  {journal}
  {Phys. Rev. Lett.}\ }\textbf {\bibinfo {volume} {119}},\ \bibinfo {pages}
  {010504} (\bibinfo {year} {2017})}\BibitemShut {NoStop}%
\bibitem [{\citenamefont {Raussendorf}\ \emph {et~al.}(2017)\citenamefont
  {Raussendorf}, \citenamefont {Wang}, \citenamefont {Prakash}, \citenamefont
  {Wei},\ and\ \citenamefont {Stephen}}]{Raussendorf-W-P-W-S-17}%
  \BibitemOpen
  \bibfield  {author} {\bibinfo {author} {\bibfnamefont {R.}~\bibnamefont
  {Raussendorf}}, \bibinfo {author} {\bibfnamefont {D.-S.}\ \bibnamefont
  {Wang}}, \bibinfo {author} {\bibfnamefont {A.}~\bibnamefont {Prakash}},
  \bibinfo {author} {\bibfnamefont {T.-C.}\ \bibnamefont {Wei}}, \ and\
  \bibinfo {author} {\bibfnamefont {D.~T.}\ \bibnamefont {Stephen}},\ }\href
  {\doibase 10.1103/PhysRevA.96.012302} {\bibfield  {journal} {\bibinfo
  {journal} {Phys. Rev. A}\ }\textbf {\bibinfo {volume} {96}},\ \bibinfo
  {pages} {012302} (\bibinfo {year} {2017})}\BibitemShut {NoStop}%
\bibitem [{\citenamefont {Wang}\ \emph {et~al.}(2017)\citenamefont {Wang},
  \citenamefont {Stephen},\ and\ \citenamefont {Raussendorf}}]{Wang-S-R-17}%
  \BibitemOpen
  \bibfield  {author} {\bibinfo {author} {\bibfnamefont {D.-S.}\ \bibnamefont
  {Wang}}, \bibinfo {author} {\bibfnamefont {D.~T.}\ \bibnamefont {Stephen}}, \
  and\ \bibinfo {author} {\bibfnamefont {R.}~\bibnamefont {Raussendorf}},\
  }\href {\doibase 10.1103/PhysRevA.95.032312} {\bibfield  {journal} {\bibinfo
  {journal} {Phys. Rev. A}\ }\textbf {\bibinfo {volume} {95}},\ \bibinfo
  {pages} {032312} (\bibinfo {year} {2017})}\BibitemShut {NoStop}%
\bibitem [{\citenamefont {Uhlmann}(1986)}]{Uhlmann-86}%
  \BibitemOpen
  \bibfield  {author} {\bibinfo {author} {\bibfnamefont {A.}~\bibnamefont
  {Uhlmann}},\ }\href {\doibase 10.1016/0034-4877(86)90055-8} {\bibfield
  {journal} {\bibinfo  {journal} {Rep. Math. Phys.}\ }\textbf {\bibinfo
  {volume} {24}},\ \bibinfo {pages} {229} (\bibinfo {year} {1986})}\BibitemShut
  {NoStop}%
\bibitem [{\citenamefont {Viyuela}\ \emph {et~al.}(2014)\citenamefont
  {Viyuela}, \citenamefont {Rivas},\ and\ \citenamefont
  {Martin-Delgado}}]{Viyuela-R-M-14}%
  \BibitemOpen
  \bibfield  {author} {\bibinfo {author} {\bibfnamefont {O.}~\bibnamefont
  {Viyuela}}, \bibinfo {author} {\bibfnamefont {A.}~\bibnamefont {Rivas}}, \
  and\ \bibinfo {author} {\bibfnamefont {M.~A.}\ \bibnamefont
  {Martin-Delgado}},\ }\href {\doibase 10.1103/PhysRevLett.112.130401}
  {\bibfield  {journal} {\bibinfo  {journal} {Phys. Rev. Lett.}\ }\textbf
  {\bibinfo {volume} {112}},\ \bibinfo {pages} {130401} (\bibinfo {year}
  {2014})}\BibitemShut {NoStop}%
\bibitem [{\citenamefont {Huang}\ and\ \citenamefont
  {Arovas}(2014)}]{Huang-A-14}%
  \BibitemOpen
  \bibfield  {author} {\bibinfo {author} {\bibfnamefont {Z.}~\bibnamefont
  {Huang}}\ and\ \bibinfo {author} {\bibfnamefont {D.~P.}\ \bibnamefont
  {Arovas}},\ }\href {\doibase 10.1103/PhysRevLett.113.076407} {\bibfield
  {journal} {\bibinfo  {journal} {Phys. Rev. Lett.}\ }\textbf {\bibinfo
  {volume} {113}},\ \bibinfo {pages} {076407} (\bibinfo {year}
  {2014})}\BibitemShut {NoStop}%
\bibitem [{\citenamefont {Budich}\ and\ \citenamefont
  {Diehl}(2015)}]{Budich-D-15}%
  \BibitemOpen
  \bibfield  {author} {\bibinfo {author} {\bibfnamefont {J.~C.}\ \bibnamefont
  {Budich}}\ and\ \bibinfo {author} {\bibfnamefont {S.}~\bibnamefont {Diehl}},\
  }\href {\doibase 10.1103/PhysRevB.91.165140} {\bibfield  {journal} {\bibinfo
  {journal} {Phys. Rev. B}\ }\textbf {\bibinfo {volume} {91}},\ \bibinfo
  {pages} {165140} (\bibinfo {year} {2015})}\BibitemShut {NoStop}%
\bibitem [{\citenamefont {van Nieuwenburg}\ and\ \citenamefont
  {Huber}(2014)}]{Nieuwenburg-H-14}%
  \BibitemOpen
  \bibfield  {author} {\bibinfo {author} {\bibfnamefont {E.~P.~L.}\
  \bibnamefont {van Nieuwenburg}}\ and\ \bibinfo {author} {\bibfnamefont
  {S.~D.}\ \bibnamefont {Huber}},\ }\href {\doibase 10.1103/PhysRevB.90.075141}
  {\bibfield  {journal} {\bibinfo  {journal} {Phys. Rev. B}\ }\textbf {\bibinfo
  {volume} {90}},\ \bibinfo {pages} {075141} (\bibinfo {year}
  {2014})}\BibitemShut {NoStop}%
\bibitem [{\citenamefont {Ver\'{\i}ssimo}\ \emph {et~al.}(2023)\citenamefont
  {Ver\'{\i}ssimo}, \citenamefont {Lyra},\ and\ \citenamefont
  {Or\'us}}]{Verissimo-L-O-23}%
  \BibitemOpen
  \bibfield  {author} {\bibinfo {author} {\bibfnamefont {L.~M.}\ \bibnamefont
  {Ver\'{\i}ssimo}}, \bibinfo {author} {\bibfnamefont {M.~L.}\ \bibnamefont
  {Lyra}}, \ and\ \bibinfo {author} {\bibfnamefont {R.}~\bibnamefont
  {Or\'us}},\ }\href {\doibase 10.1103/PhysRevB.107.L241104} {\bibfield
  {journal} {\bibinfo  {journal} {Phys. Rev. B}\ }\textbf {\bibinfo {volume}
  {107}},\ \bibinfo {pages} {L241104} (\bibinfo {year} {2023})}\BibitemShut
  {NoStop}%
\bibitem [{\citenamefont {Ma}\ and\ \citenamefont {Turzillo}(2024)}]{Ma-T-24}%
  \BibitemOpen
  \bibfield  {author} {\bibinfo {author} {\bibfnamefont {R.}~\bibnamefont
  {Ma}}\ and\ \bibinfo {author} {\bibfnamefont {A.}~\bibnamefont {Turzillo}},\
  }\href@noop {} {\enquote {\bibinfo {title} {Symmetry protected topological
  phases of mixed states in the doubled space},}\ } (\bibinfo {year} {2024}),\
  \Eprint {http://arxiv.org/abs/2403.13280} {arXiv:2403.13280 [quant-ph]}
  \BibitemShut {NoStop}%
\bibitem [{\citenamefont {Lee}\ \emph {et~al.}(2024)\citenamefont {Lee},
  \citenamefont {You},\ and\ \citenamefont {Xu}}]{Lee-Y-X-strange-cor-22}%
  \BibitemOpen
  \bibfield  {author} {\bibinfo {author} {\bibfnamefont {J.~Y.}\ \bibnamefont
  {Lee}}, \bibinfo {author} {\bibfnamefont {Y.-Z.}\ \bibnamefont {You}}, \ and\
  \bibinfo {author} {\bibfnamefont {C.}~\bibnamefont {Xu}},\ }\href@noop {}
  {\enquote {\bibinfo {title} {Symmetry protected topological phases under
  decoherence},}\ } (\bibinfo {year} {2024}),\ \Eprint
  {http://arxiv.org/abs/2210.16323} {arXiv:2210.16323 [cond-mat.str-el]}
  \BibitemShut {NoStop}%
\bibitem [{\citenamefont {de~Groot}\ \emph {et~al.}(2022)\citenamefont
  {de~Groot}, \citenamefont {Turzillo},\ and\ \citenamefont
  {Schuch}}]{deGroot-T-S-22}%
  \BibitemOpen
  \bibfield  {author} {\bibinfo {author} {\bibfnamefont {C.}~\bibnamefont
  {de~Groot}}, \bibinfo {author} {\bibfnamefont {A.}~\bibnamefont {Turzillo}},
  \ and\ \bibinfo {author} {\bibfnamefont {N.}~\bibnamefont {Schuch}},\ }\href
  {\doibase 10.22331/q-2022-11-10-856} {\bibfield  {journal} {\bibinfo
  {journal} {{Quantum}}\ }\textbf {\bibinfo {volume} {6}},\ \bibinfo {pages}
  {856} (\bibinfo {year} {2022})}\BibitemShut {NoStop}%
\bibitem [{\citenamefont {P\'erez-Garc\'{\i}a}\ \emph
  {et~al.}(2008)\citenamefont {P\'erez-Garc\'{\i}a}, \citenamefont {Wolf},
  \citenamefont {Sanz}, \citenamefont {Verstraete},\ and\ \citenamefont
  {Cirac}}]{PerezGarcia-W-S-V-C-08}%
  \BibitemOpen
  \bibfield  {author} {\bibinfo {author} {\bibfnamefont {D.}~\bibnamefont
  {P\'erez-Garc\'{\i}a}}, \bibinfo {author} {\bibfnamefont {M.~M.}\
  \bibnamefont {Wolf}}, \bibinfo {author} {\bibfnamefont {M.}~\bibnamefont
  {Sanz}}, \bibinfo {author} {\bibfnamefont {F.}~\bibnamefont {Verstraete}}, \
  and\ \bibinfo {author} {\bibfnamefont {J.~I.}\ \bibnamefont {Cirac}},\ }\href
  {\doibase 10.1103/PhysRevLett.100.167202} {\bibfield  {journal} {\bibinfo
  {journal} {Phys. Rev. Lett.}\ }\textbf {\bibinfo {volume} {100}},\ \bibinfo
  {pages} {167202} (\bibinfo {year} {2008})}\BibitemShut {NoStop}%
\bibitem [{\citenamefont {Raussendorf}\ \emph {et~al.}(2023)\citenamefont
  {Raussendorf}, \citenamefont {Yang},\ and\ \citenamefont
  {Adhikary}}]{Raussendorf-Y-A-23}%
  \BibitemOpen
  \bibfield  {author} {\bibinfo {author} {\bibfnamefont {R.}~\bibnamefont
  {Raussendorf}}, \bibinfo {author} {\bibfnamefont {W.}~\bibnamefont {Yang}}, \
  and\ \bibinfo {author} {\bibfnamefont {A.}~\bibnamefont {Adhikary}},\ }\href
  {\doibase 10.22331/q-2023-12-28-1215} {\bibfield  {journal} {\bibinfo
  {journal} {{Quantum}}\ }\textbf {\bibinfo {volume} {7}},\ \bibinfo {pages}
  {1215} (\bibinfo {year} {2023})}\BibitemShut {NoStop}%
\bibitem [{\citenamefont {Affleck}\ \emph {et~al.}(1988)\citenamefont
  {Affleck}, \citenamefont {Kennedy}, \citenamefont {Lieb},\ and\ \citenamefont
  {Tasaki}}]{Affleck-K-L-T-88}%
  \BibitemOpen
  \bibfield  {author} {\bibinfo {author} {\bibfnamefont {I.}~\bibnamefont
  {Affleck}}, \bibinfo {author} {\bibfnamefont {T.}~\bibnamefont {Kennedy}},
  \bibinfo {author} {\bibfnamefont {E.~H.}\ \bibnamefont {Lieb}}, \ and\
  \bibinfo {author} {\bibfnamefont {H.}~\bibnamefont {Tasaki}},\ }\href
  {http://dx.doi.org/10.1007/BF01218021} {\bibfield  {journal} {\bibinfo
  {journal} {Comm. Math. Phys.}\ }\textbf {\bibinfo {volume} {115}},\ \bibinfo
  {pages} {477} (\bibinfo {year} {1988})}\BibitemShut {NoStop}%
\bibitem [{\citenamefont {Schollw\"{c}k}(2011)}]{SCHOLLWOCK}%
  \BibitemOpen
  \bibfield  {author} {\bibinfo {author} {\bibfnamefont {U.}~\bibnamefont
  {Schollw\"{c}k}},\ }\href {\doibase
  https://doi.org/10.1016/j.aop.2010.09.012} {\bibfield  {journal} {\bibinfo
  {journal} {Ann. Phys.}\ }\textbf {\bibinfo {volume} {326}},\ \bibinfo {pages}
  {96 } (\bibinfo {year} {2011})}\BibitemShut {NoStop}%
\bibitem [{\citenamefont {White}\ and\ \citenamefont
  {Huse}(1993)}]{White-H-93}%
  \BibitemOpen
  \bibfield  {author} {\bibinfo {author} {\bibfnamefont {S.~R.}\ \bibnamefont
  {White}}\ and\ \bibinfo {author} {\bibfnamefont {D.~A.}\ \bibnamefont
  {Huse}},\ }\href {\doibase 10.1103/PhysRevB.48.3844} {\bibfield  {journal}
  {\bibinfo  {journal} {Phys. Rev. B}\ }\textbf {\bibinfo {volume} {48}},\
  \bibinfo {pages} {3844} (\bibinfo {year} {1993})}\BibitemShut {NoStop}%
\bibitem [{\citenamefont {Or{\'{u}}s}\ and\ \citenamefont
  {Vidal}(2008)}]{Orus-V-08}%
  \BibitemOpen
  \bibfield  {author} {\bibinfo {author} {\bibfnamefont {R.}~\bibnamefont
  {Or{\'{u}}s}}\ and\ \bibinfo {author} {\bibfnamefont {G.}~\bibnamefont
  {Vidal}},\ }\href {http://link.aps.org/doi/10.1103/PhysRevB.78.155117}
  {\bibfield  {journal} {\bibinfo  {journal} {Phys. Rev. B}\ }\textbf {\bibinfo
  {volume} {78}},\ \bibinfo {pages} {155117} (\bibinfo {year}
  {2008})}\BibitemShut {NoStop}%
\bibitem [{\citenamefont {Pollmann}\ \emph {et~al.}(2012)\citenamefont
  {Pollmann}, \citenamefont {Berg}, \citenamefont {Turner},\ and\ \citenamefont
  {Oshikawa}}]{Pollmann-B-T-O-12}%
  \BibitemOpen
  \bibfield  {author} {\bibinfo {author} {\bibfnamefont {F.}~\bibnamefont
  {Pollmann}}, \bibinfo {author} {\bibfnamefont {E.}~\bibnamefont {Berg}},
  \bibinfo {author} {\bibfnamefont {A.~M.}\ \bibnamefont {Turner}}, \ and\
  \bibinfo {author} {\bibfnamefont {M.}~\bibnamefont {Oshikawa}},\ }\href
  {http://link.aps.org/doi/10.1103/PhysRevB.85.075125} {\bibfield  {journal}
  {\bibinfo  {journal} {Phys. Rev. B}\ }\textbf {\bibinfo {volume} {85}},\
  \bibinfo {pages} {075125} (\bibinfo {year} {2012})}\BibitemShut {NoStop}%
\bibitem [{\citenamefont {Hasebe}\ and\ \citenamefont
  {Totsuka}(2013)}]{Hasebe-T-13}%
  \BibitemOpen
  \bibfield  {author} {\bibinfo {author} {\bibfnamefont {K.}~\bibnamefont
  {Hasebe}}\ and\ \bibinfo {author} {\bibfnamefont {K.}~\bibnamefont
  {Totsuka}},\ }\href {http://link.aps.org/doi/10.1103/PhysRevB.87.045115}
  {\bibfield  {journal} {\bibinfo  {journal} {Phys. Rev. B}\ }\textbf {\bibinfo
  {volume} {87}},\ \bibinfo {pages} {045115} (\bibinfo {year}
  {2013})}\BibitemShut {NoStop}%
\bibitem [{\citenamefont {Breuer}\ and\ \citenamefont
  {Petruccione}(2002)}]{Breuer-P-book-02}%
  \BibitemOpen
  \bibfield  {author} {\bibinfo {author} {\bibfnamefont {H.-P.}\ \bibnamefont
  {Breuer}}\ and\ \bibinfo {author} {\bibfnamefont {F.}~\bibnamefont
  {Petruccione}},\ }\href@noop {} {\emph {\bibinfo {title} {The Theory of Open
  Quantum Systems}}}\ (\bibinfo  {publisher} {Oxford University Press},\
  \bibinfo {year} {2002})\BibitemShut {NoStop}%
\bibitem [{\citenamefont {Nielsen}\ and\ \citenamefont
  {Chuang}(2011)}]{Nielsen-Chuang-11}%
  \BibitemOpen
  \bibfield  {author} {\bibinfo {author} {\bibfnamefont {M.~A.}\ \bibnamefont
  {Nielsen}}\ and\ \bibinfo {author} {\bibfnamefont {I.~L.}\ \bibnamefont
  {Chuang}},\ }\href
  {https://www.amazon.com/Quantum-Computation-Information-10th-Anniversary/dp/1107002176?SubscriptionId=AKIAIOBINVZYXZQZ2U3A&tag=chimbori05-20&linkCode=xm2&camp=2025&creative=165953&creativeASIN=1107002176}
  {\emph {\bibinfo {title} {Quantum Computation and Quantum Information: 10th
  Anniversary Edition}}}\ (\bibinfo  {publisher} {Cambridge University Press},\
  \bibinfo {year} {2011})\BibitemShut {NoStop}%
\bibitem [{\citenamefont {Bu\v{c}a}\ and\ \citenamefont
  {Prosen}(2012)}]{Buca-P-12}%
  \BibitemOpen
  \bibfield  {author} {\bibinfo {author} {\bibfnamefont {B.}~\bibnamefont
  {Bu\v{c}a}}\ and\ \bibinfo {author} {\bibfnamefont {T.}~\bibnamefont
  {Prosen}},\ }\href {\doibase 10.1088/1367-2630/14/7/073007} {\bibfield
  {journal} {\bibinfo  {journal} {New J. Phys.}\ }\textbf {\bibinfo {volume}
  {14}},\ \bibinfo {pages} {073007} (\bibinfo {year} {2012})}\BibitemShut
  {NoStop}%
\bibitem [{\citenamefont {Gross}\ and\ \citenamefont
  {Eisert}(2007)}]{Gross-E-07}%
  \BibitemOpen
  \bibfield  {author} {\bibinfo {author} {\bibfnamefont {D.}~\bibnamefont
  {Gross}}\ and\ \bibinfo {author} {\bibfnamefont {J.}~\bibnamefont {Eisert}},\
  }\href {\doibase 10.1103/PhysRevLett.98.220503} {\bibfield  {journal}
  {\bibinfo  {journal} {Phys. Rev. Lett.}\ }\textbf {\bibinfo {volume} {98}},\
  \bibinfo {pages} {220503} (\bibinfo {year} {2007})}\BibitemShut {NoStop}%
\bibitem [{Note1()}]{Note1}%
  \BibitemOpen
  \bibinfo {note} {Of course, this definition might be too strict for the
  practical implementation. We have introduced this stringent definition solely
  for clearly defining the stability of SPT phases in the mixed-state
  settings.}\BibitemShut {Stop}%
\bibitem [{\citenamefont {Fujii}\ \emph {et~al.}(2013)\citenamefont {Fujii},
  \citenamefont {Nakata}, \citenamefont {Ohzeki},\ and\ \citenamefont
  {Murao}}]{Fujii-N-O-M-13}%
  \BibitemOpen
  \bibfield  {author} {\bibinfo {author} {\bibfnamefont {K.}~\bibnamefont
  {Fujii}}, \bibinfo {author} {\bibfnamefont {Y.}~\bibnamefont {Nakata}},
  \bibinfo {author} {\bibfnamefont {M.}~\bibnamefont {Ohzeki}}, \ and\ \bibinfo
  {author} {\bibfnamefont {M.}~\bibnamefont {Murao}},\ }\href {\doibase
  10.1103/PhysRevLett.110.120502} {\bibfield  {journal} {\bibinfo  {journal}
  {Phys. Rev. Lett.}\ }\textbf {\bibinfo {volume} {110}},\ \bibinfo {pages}
  {120502} (\bibinfo {year} {2013})}\BibitemShut {NoStop}%
\bibitem [{\citenamefont {Fishman}\ \emph {et~al.}(2022)\citenamefont
  {Fishman}, \citenamefont {White},\ and\ \citenamefont
  {Stoudenmire}}]{ITensor}%
  \BibitemOpen
  \bibfield  {author} {\bibinfo {author} {\bibfnamefont {M.}~\bibnamefont
  {Fishman}}, \bibinfo {author} {\bibfnamefont {S.~R.}\ \bibnamefont {White}},
  \ and\ \bibinfo {author} {\bibfnamefont {E.~M.}\ \bibnamefont
  {Stoudenmire}},\ }\href {\doibase 10.21468/SciPostPhysCodeb.4} {\bibfield
  {journal} {\bibinfo  {journal} {SciPost Phys. Codebases}\ ,\ \bibinfo {pages}
  {4}} (\bibinfo {year} {2022})}\BibitemShut {NoStop}%
\bibitem [{\citenamefont {Paszko}\ \emph {et~al.}(2023)\citenamefont {Paszko},
  \citenamefont {Rose}, \citenamefont {Szyma\'{n}ska},\ and\ \citenamefont
  {Pal}}]{Paszko-R-S-P-23}%
  \BibitemOpen
  \bibfield  {author} {\bibinfo {author} {\bibfnamefont {D.}~\bibnamefont
  {Paszko}}, \bibinfo {author} {\bibfnamefont {D.~C.}\ \bibnamefont {Rose}},
  \bibinfo {author} {\bibfnamefont {M.~H.}\ \bibnamefont {Szyma\'{n}ska}}, \
  and\ \bibinfo {author} {\bibfnamefont {A.}~\bibnamefont {Pal}},\ }\href@noop
  {} {\enquote {\bibinfo {title} {Edge modes and symmetry-protected topological
  states in open quantum systems},}\ } (\bibinfo {year} {2023}),\ \Eprint
  {http://arxiv.org/abs/2310.09406} {arXiv:2310.09406 [quant-ph]} \BibitemShut
  {NoStop}%
\bibitem [{\citenamefont {Cirac}\ \emph {et~al.}(2021)\citenamefont {Cirac},
  \citenamefont {P\'erez-Garc\'{\i}a}, \citenamefont {Schuch},\ and\
  \citenamefont {Verstraete}}]{Cirac-G-S-V-RMP-21}%
  \BibitemOpen
  \bibfield  {author} {\bibinfo {author} {\bibfnamefont {J.~I.}\ \bibnamefont
  {Cirac}}, \bibinfo {author} {\bibfnamefont {D.}~\bibnamefont
  {P\'erez-Garc\'{\i}a}}, \bibinfo {author} {\bibfnamefont {N.}~\bibnamefont
  {Schuch}}, \ and\ \bibinfo {author} {\bibfnamefont {F.}~\bibnamefont
  {Verstraete}},\ }\href {\doibase 10.1103/RevModPhys.93.045003} {\bibfield
  {journal} {\bibinfo  {journal} {Rev. Mod. Phys.}\ }\textbf {\bibinfo {volume}
  {93}},\ \bibinfo {pages} {045003} (\bibinfo {year} {2021})}\BibitemShut
  {NoStop}%
\bibitem [{Note2()}]{Note2}%
  \BibitemOpen
  \bibinfo {note} {We do not exclude here the possibility that MBQC on noisy
  AKLT states could potentially be one-qubit universal at the practical level
  {\protect \em for a finite duration} even in this simple MBQC scheme. To
  investigate what kind of quantum channels leave the noisy AKLT states
  one-qubit universal would be one of the most practically important problems
  from the viewpoint of both condensed matter physics and quantum
  information.}\BibitemShut {Stop}%
\bibitem [{\citenamefont {Smith}\ \emph {et~al.}(2023)\citenamefont {Smith},
  \citenamefont {Crane}, \citenamefont {Wiebe},\ and\ \citenamefont
  {Girvin}}]{Smith-C-W-G-23}%
  \BibitemOpen
  \bibfield  {author} {\bibinfo {author} {\bibfnamefont {K.~C.}\ \bibnamefont
  {Smith}}, \bibinfo {author} {\bibfnamefont {E.}~\bibnamefont {Crane}},
  \bibinfo {author} {\bibfnamefont {N.}~\bibnamefont {Wiebe}}, \ and\ \bibinfo
  {author} {\bibfnamefont {S.}~\bibnamefont {Girvin}},\ }\href {\doibase
  10.1103/PRXQuantum.4.020315} {\bibfield  {journal} {\bibinfo  {journal} {PRX
  Quantum}\ }\textbf {\bibinfo {volume} {4}},\ \bibinfo {pages} {020315}
  (\bibinfo {year} {2023})}\BibitemShut {NoStop}%
\bibitem [{\citenamefont {Kaltenbaek}\ \emph {et~al.}(2010)\citenamefont
  {Kaltenbaek}, \citenamefont {Lavoie}, \citenamefont {Zeng}, \citenamefont
  {Bartlett},\ and\ \citenamefont {Resch}}]{Kaltenbaek-L-Z-B-R-10}%
  \BibitemOpen
  \bibfield  {author} {\bibinfo {author} {\bibfnamefont {R.}~\bibnamefont
  {Kaltenbaek}}, \bibinfo {author} {\bibfnamefont {J.}~\bibnamefont {Lavoie}},
  \bibinfo {author} {\bibfnamefont {B.}~\bibnamefont {Zeng}}, \bibinfo {author}
  {\bibfnamefont {S.~D.}\ \bibnamefont {Bartlett}}, \ and\ \bibinfo {author}
  {\bibfnamefont {K.~J.}\ \bibnamefont {Resch}},\ }\href {\doibase
  10.1038/nphys1777} {\bibfield  {journal} {\bibinfo  {journal} {Nat. Phys.}\
  }\textbf {\bibinfo {volume} {6}},\ \bibinfo {pages} {850} (\bibinfo {year}
  {2010})}\BibitemShut {NoStop}%
\bibitem [{\citenamefont {Sompet}\ \emph {et~al.}(2022)\citenamefont {Sompet},
  \citenamefont {Hirthe}, \citenamefont {Bourgund}, \citenamefont {Chalopin},
  \citenamefont {Bibo}, \citenamefont {Koepsell}, \citenamefont {Bojovi\'{c}},
  \citenamefont {Verresen}, \citenamefont {Pollmann}, \citenamefont {Salomon},
  \citenamefont {Gross}, \citenamefont {Hilker},\ and\ \citenamefont
  {Bloch}}]{Sompet-et-al-SPT-Hubbard-22}%
  \BibitemOpen
  \bibfield  {author} {\bibinfo {author} {\bibfnamefont {P.}~\bibnamefont
  {Sompet}}, \bibinfo {author} {\bibfnamefont {S.}~\bibnamefont {Hirthe}},
  \bibinfo {author} {\bibfnamefont {D.}~\bibnamefont {Bourgund}}, \bibinfo
  {author} {\bibfnamefont {T.}~\bibnamefont {Chalopin}}, \bibinfo {author}
  {\bibfnamefont {J.}~\bibnamefont {Bibo}}, \bibinfo {author} {\bibfnamefont
  {J.}~\bibnamefont {Koepsell}}, \bibinfo {author} {\bibfnamefont
  {P.}~\bibnamefont {Bojovi\'{c}}}, \bibinfo {author} {\bibfnamefont
  {R.}~\bibnamefont {Verresen}}, \bibinfo {author} {\bibfnamefont
  {F.}~\bibnamefont {Pollmann}}, \bibinfo {author} {\bibfnamefont
  {G.}~\bibnamefont {Salomon}}, \bibinfo {author} {\bibfnamefont
  {C.}~\bibnamefont {Gross}}, \bibinfo {author} {\bibfnamefont {T.~A.}\
  \bibnamefont {Hilker}}, \ and\ \bibinfo {author} {\bibfnamefont
  {I.}~\bibnamefont {Bloch}},\ }\href {\doibase 10.1038/s41586-022-04688-z}
  {\bibfield  {journal} {\bibinfo  {journal} {Nature}\ }\textbf {\bibinfo
  {volume} {606}},\ \bibinfo {pages} {484} (\bibinfo {year}
  {2022})}\BibitemShut {NoStop}%
\end{thebibliography}
%
\end{document}